\tikzstyle{every node} = [draw, fill=none, circle, inner sep=0pt, minimum size=5pt]
\tikzstyle{d} = [very thick]
\tikzstyle{n} = [draw=none, rectangle, inner sep=6pt] 
\tikzstyle{i} = [draw, fill=black, circle, inner sep=0pt, minimum size=5pt] 
\newcommand{\m}{\mathbf}
\newcommand{\RR}{\circ}
\newcommand{\R}[3]{#3\in #1\circ #2}
\newcommand{\Rb}[3]{#3\in #1\bullet #2}
\begin{document}

\title{Duality theory and representations for  distributive quasi relation algebras and DInFL-algebras}
\runninghead{A. Craig, P. Jipsen, C. Robinson}{Duality and representations for DqRAs and DInFL-algebras}

\author{Andrew Craig \\
 Department of Mathematics and Applied Mathematics\\
 University of Johannesburg, South Africa \\and \\
 National Institute for Theoretical and Computational Sciences (NITheCS) \\ Johannesburg, South Africa \\ acraig@uj.ac.za\thanks{Funding from the National Research Foundation grant 127266 is gratefully acknowledged.}\\
		\and Peter Jipsen \\
        Chapman University \\
        Orange CA 92866, USA\\
        jipsen@chapman.edu \\
\and        Claudette Robinson \\
		Department of Mathematics and Applied Mathematics\\
		University of Johannesburg, South Africa\\
		claudetter@uj.ac.za
	}

\maketitle              

\vspace*{-80pt}

\begin{abstract}
We develop dualities for complete perfect distributive quasi relation algebras and complete perfect distributive involutive FL-algebras. The duals are partially ordered frames with additional structure. These frames are analogous to the atom structures used to study relation algebras. We also extend the duality from complete perfect algebras to all algebras, using so-called doubly-pointed frames with a Priestley topology. 

We then turn to the representability of these algebras as lattices of binary relations. Some algebras can be realised as term subreducts of representable relation algebras and are hence representable. We provide a detailed account of known representations for all algebras up to size six. 

\keywords{Quasi relation algebras, involutive FL-algebras, Priestley spaces, representations.}
\end{abstract}

\section{Introduction}
The algebraic theory of binary relations and its abstract version in the form of relation algebras have many applications in mathematics and computer science (see e.g., \cite{Mad2006,Mad1993,SS1993}). 
In mathematics these theories are foundational to algebraic logic and have important connections with group theory, combinatorics, and projective geometry. In computer science relation algebras are used for knowledge representation, spatial reasoning about regions, network theory and for formal specification of hardware and software. This is perhaps most evident from the use of binary relations to model the semantics of computer programs, where the underlying frames represent state spaces. In these contexts, the algebraic framework provides a rigorous basis for compositionality, allowing complex systems to be understood via their constituents. 

Both the theory of concrete and abstract binary relations are based on Boolean algebras, hence use classical negation (complementation) in an integral way. Recent generalizations have aimed to weaken these theories to some well-behaved nonclassical versions, such as weakening relation algebras and quasi relation algebras, where the unary operations of complementation and converse are not available individually. Here we concentrate on the variety of distributive quasi relation algebras (DqRAs), which builds upon the more general variety of distributive involutive full Lambek algebras (DInFL-algebras).

The variety of quasi relation algebras was first studied by Galatos and Jipsen~\cite{GJ2013}. One of the features of this variety and its subvariety DqRA is that they have decidable equational theories~\cite[Corollary 5.6]{GJ2013}. Distributive qRAs based on binary relations provide a notion of representability~\cite{RDqRA25} for DqRAs which mimics the intensely studied concept of representability for relation algebras. 

 Studying DInFL-algebras and DqRAs via (partially ordered) frames is motivated by the fact that a powerful tool for the study of finite relation algebras has been the so-called \emph{atom structures}. Any finite relation algebra can be studied simply via an operation table of its atoms. Frames are useful for implementing a decision procedure for DqRA since, e.g., tableaux methods build frame-based counterexamples or prove that none exist.

In applications to computing, relations are often treated in the setting of heterogeneous relation algebras or within category theory as allegories. In the current paper we consider only unisorted DqRAs and their frames, but it is possible to adapt these concepts to the heterogeneous and categorical settings. 

Another reason for wanting to study frames is for the development of a game-based approach to determine whether certain algebras are representable. This has been done very successfully in the relation algebra case \cite{Mad1983}, \cite{HH2002} and also more recently for weakening relation algebras~\cite{JS2023}. 

This paper is an expanded version of~\cite{CJR2024}. In particular, the Priestley duality between DqRAs and DqRA-spaces has been included, and an extensive discussion of representability for DInFL-algebras and DqRAs is 
followed by a catalog of finite algebras up to cardinality $6$ (up to $8$ in \cite{DInFL-list}). 

We begin with some background material on involutive FL-algebras with a De Morgan operation and the definition of qRAs. In Section~\ref{sec:frames}, we describe dual frames for both DInFL-algebras and DqRAs. 
We investigate dual frames in the setting of  complete perfect algebras (i.e., the completely join-irreducible elements are join-dense and the completely meet-irreducible elements are meet-dense). All results obviously apply to the case of finite algebras. In Section~\ref{sec:frame-morphisms}, we define morphisms for frames and show that they are dual to complete homomorphisms for complete perfect DInFL-algebras and DqRAs. We then extend our frame-based approach to a Priestley-style 
representation in Section~\ref{sec:dual-spaces}. 

Representability of DInFL-algebras and DqRAs is discussed in Section~\ref{sec:RDqRAs}.
Applications of frames to finite models and their representability are considered in Section~\ref{sec:fin-models}. 
In the first part we present tables with the number of both DInFL-algebras and DqRAs up to size eight. These were calculated using Prover9/Mace4~\cite{McC}. We verified these numbers by calculating the numbers of corresponding frames. 
In the remainder of 
Section~\ref{sec:fin-models}, we consider cyclic DqRAs which are ${\{\vee,\cdot,1,\sim\}}$-subreducts of 16-element relation algebras. From the representability of these relation algebras, we obtain representations of the term-subreducts. We also identify the dual frames of particular DqRAs and relate them to the atom structures of the relation algebras in which they can be embedded. In Tables~\ref{tab:Rep-DqRA<=5} and~\ref{tab:Rep-DqRA=6} we give a summary of known representations of DqRAs up to size six. 

\section{Background and results on quasi relation algebras}\label{sec:background}

An \emph{involutive full Lambek algebra} (briefly InFL-algebra) $\m A=(A,\wedge,\vee,\cdot,1,\sim,-)$ is a lattice $(A,\wedge,\vee)$ and a monoid $(A,\cdot, 1)$ such that for all $a,b,c\in A$
$$ 
\mathsf{(Res)} \qquad 
a\cdot b\leqslant c\iff a\leqslant -(b\cdot{\sim}c)\iff b\leqslant{\sim}(-c\cdot a).
$$ 
Hence $-(b\cdot{\sim}c)$ and ${\sim}(-c\cdot a)$ are terms for the right residual $c/b$ and the left residual $a\backslash c$ respectively. It follows that $-1={\sim}1$ and this element is usually denoted $0$. 
The operations $-,\sim$ are called \emph{linear negations} and are order-reversing as well as \emph{involutive}: $-{\sim}a=a={\sim}{-}a$. If the residuals and $0$ are included in the signature, then $\sim, -$ can be defined by ${\sim}a=a\backslash 0$ and $-a=0/a$, and we obtain InFL-algebras (in a term-equivalent form) if the identities $-{\sim}a=a={\sim}{-}a$ hold.

The same variety can also be axiomatized by the following idempotent semiring axioms \cite{JV2021}: an InFL-algebra
$\m A=(A,\vee,\cdot,1,\sim,-)$ is a semilattice $(A,\vee)$ and a monoid $(A,\cdot, 1)$ such that $\cdot$ distributes over $\vee$ and
$$a\leqslant b\iff a\cdot{\sim}b\leqslant -1\iff -b\cdot a\leqslant -1.$$
Moreover $-{\sim}a=a={\sim}{-}a$, and if we define $a\wedge b=-({\sim}a\vee{\sim}b)$ then $(A,\wedge,\vee)$ is a lattice. The operation $+$ is defined by $a+b=-({\sim}b\cdot{\sim}a)$, or equivalently by $a+b={\sim}(-b\cdot -a)$ \cite{GJKO2007}.

An InFL-algebra is \emph{cyclic} if ${\sim}a=-a$, \emph{commutative} if $a\cdot b=b\cdot a$ and \emph{distributive} if $a\wedge(b\vee c)=(a\wedge b)\vee(a\wedge c)$.

A \emph{De Morgan lattice} $\m A=(A,\wedge,\vee,\neg)$ is a lattice $(A,\wedge,\vee)$ with a unary operation that satisfies $\neg\neg a=a$ and 
$$\mathsf{(Dm)}\qquad \neg(a\wedge b)=\neg a\vee \neg b.$$
A \emph{De Morgan InFL-algebra} $\m A=(A,\wedge,\vee,\cdot,1,\sim,-,\neg)$ is a De Morgan lattice $(A,\wedge,\vee,\neg)$ and an InFL-algebra $(A,\vee,\cdot,1,\sim,-)$.

A \emph{quasi relation algebra} (qRA for short) is a De Morgan InFL-algebra such that the following identity holds:
$$\mathsf{(Dp)}\qquad \neg (a\cdot b) = \neg a+\neg b.$$

Interesting qRAs that are not relation algebras  
include
the finite Sugihara chains (where ${\sim}a=-a=\neg a$). 
See Table~\ref{tab:smallqRA}
and~\cite[Figure 1]{RDqRA25} for further examples of (non)cyclic qRAs.

In \cite{GJ2013}, the definition of qRAs included a third identity, $\mathsf{(Di)}$:  $\neg{\sim} a={-}\neg a$, 
but we note here that it is implied by the above definition. 
The proof follows from applications of $\mathsf{(Dp)}$ and $\mathsf{(Res)}$ from above. 

\begin{lemma}\label{lem:Di-in-qRA}
The identities ${\sim} 1=\neg 1=-1$ and $\mathsf{(Di)}$ hold in any qRA.
\end{lemma}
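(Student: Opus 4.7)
I would prove the two identities in the order stated, because the first is a short ``bootstrap'' that the second quietly uses.

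For ${\sim}1 = \neg 1 = -1$, the equality ${\sim}1 = -1 = 0$ is already built into the definition of an InFL-algebra, so only $\neg 1 = 0$ requires work. My idea is to show that $\neg 1$ is a two-sided identity for $+$ and then invoke uniqueness of identities. Setting $b = 1$ in $\mathsf{(Dp)}$ yields $\neg a = \neg a + \neg 1$, and symmetrically $a = 1$ yields $\neg b = \neg 1 + \neg b$. Unfolding the term definition $x + y = -({\sim}y \cdot {\sim}x)$ together with ${\sim}0 = 1$ and $-{\sim}u = u$ shows that $0$ is itself a two-sided identity for $+$. Hence $\neg 1 = 0 + \neg 1 = 0$.

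For $\mathsf{(Di)}$, i.e., $\neg{\sim}a = -\neg a$, my plan is to show that both sides bound the same elements from below. For every $c \in A$ I would chain the equivalences
\[
c \leqslant -\neg a \iff c \cdot \neg a \leqslant 0 \iff \neg c + a \geqslant 1 \iff {\sim}a \cdot {\sim}\neg c \leqslant 0 \iff {\sim}a \leqslant \neg c \iff c \leqslant \neg{\sim}a.
\]
The first and fourth equivalences are $\mathsf{(Res)}$ applied at $0$ (using ${\sim}0 = 1 = -0$ and $-{\sim}u = u$); the second reverses by $\neg$ using $\neg 0 = 1$ (which is where the first identity of the lemma enters) and then invokes $\mathsf{(Dp)}$; the third unfolds the definition of $+$ and reverses by ${\sim}$; the last step is just the involution of $\neg$. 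Since this holds for every $c$, the outer terms coincide and $\mathsf{(Di)}$ follows.

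The only real obstacle is careful bookkeeping for the three negations: ${\sim}$ and $-$ are not self-inverse (only the composites ${\sim}{-}$ and $-{\sim}$ equal the identity), so at each arrow one must apply the correct one of $\{{\sim}, -\}$ to reverse an inequality and to normalise constants. Beyond that the argument is mechanical and uses nothing outside $\mathsf{(Dp)}$, $\mathsf{(Res)}$, and the identities built into an InFL-algebra.
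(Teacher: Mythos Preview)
Your proof is correct, and both parts take a genuinely different route from the paper's.

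For $\neg 1 = 0$, the paper substitutes $a = \neg{-}1$ into the equation $\neg a = \neg a + \neg 1$, rewrites using the term definition of $+$, and computes directly that ${\sim}\neg 1 = 1$. Your argument instead observes that both $0$ and $\neg 1$ are two-sided identities for $+$ and invokes uniqueness; this is more conceptual and avoids the slightly opaque choice of substitution.

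For $\mathsf{(Di)}$, the paper establishes the two inequalities $-\neg a \leqslant \neg{\sim}a$ and $\neg{\sim}a \leqslant -\neg a$ by separate chains of $\mathsf{(Res)}$/$\mathsf{(Dp)}$ manipulations, each roughly five or six steps. Your approach is a Yoneda-style argument: you show that $c \leqslant -\neg a$ and $c \leqslant \neg{\sim}a$ are equivalent for every $c$, via a single symmetric chain of biconditionals. This buys you both directions at once and makes the role of the first identity ($\neg 0 = 1$) completely transparent, at the cost of introducing a universally quantified variable. Both arguments use exactly the same ingredients ($\mathsf{(Res)}$, $\mathsf{(Dp)}$, the InFL identities), so the difference is purely one of organisation; yours is somewhat shorter and more uniform.
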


\begin{proof} It suffices to show $-1=\neg 1$ since ${\sim}1=-1$ holds in every InFL-algebra. 

From $\mathsf{(Dp)}$ with $b=1$ we obtain $\neg a=\neg a+\neg 1=-({\sim}{\neg}1\cdot {\sim}{\neg}a)$. Applying $\sim$ on both sides and letting $a=\neg{-}1$ we get ${\sim}\neg \neg{-}1={\sim}{\neg}1\cdot {\sim}{\neg}\neg{-}1$, which simplifies to $1={\sim}\neg 1\cdot 1={\sim}\neg 1$. Hence $-1=\neg 1$.

For $\mathsf{(Di)}$, we first show that $-\neg a \leqslant \neg{\sim}a$. We have ${\sim}a\cdot 1\leqslant {\sim}a$, and therefore, by $\mathsf{(Res)}$, $1 \leqslant {\sim}\left(-{\sim}a \cdot {\sim}a\right) = {\sim}\left(a \cdot {\sim}a\right)$. Hence, $a\cdot {\sim}a\leqslant -1 = \neg 1$, and so $1 \leqslant \neg\left(a\cdot {\sim}a\right)$. Applying $\mathsf{(Dp)}$ to the right-hand side gives $1 \leqslant \neg a +\neg{\sim} a = {\sim}\left(-\neg{\sim} a\cdot -\neg a\right)$. Thus, $-\neg{\sim} a\cdot -\neg a \leqslant -1 = \neg 1$, and applying $\mathsf{(Res)}$ to this yields $-\neg a \leqslant {\sim}\left(-\neg 1\cdot -\neg {\sim}a\right) = \neg{\sim}a + \neg 1$. Another application of $\mathsf{(Dp)}$ gives $-\neg a \leqslant \neg\left(1 \cdot {\sim}a\right) = \neg{\sim}a$.

We also have $a \leqslant \neg\neg a= \neg\left(\neg a \cdot 1\right)$, and so, by $\mathsf{(Dp)}$, $a \leqslant \neg\neg a + \neg 1 = a +\neg 1 = {\sim}\left(-\neg 1\cdot - a\right)$. Hence, by $\mathsf{(Res)}$, $-a\cdot a \leqslant \neg 1 = -1$, which implies that $1 \leqslant {\sim}\left(-a\cdot a\right) = {\sim}\left(-\neg \neg a\cdot -\neg\neg{\sim} a\right) = \neg\neg{\sim}a + \neg\neg a$. Applying $\mathsf{(Dp)}$ to the right-hand side gives $1 \leqslant \neg\left(\neg{\sim}a \cdot \neg a\right)$. Thus $\neg{\sim}a\cdot \neg a\leqslant \neg 1$, and hence, by $\mathsf{(Res)}$, $\neg{\sim}a\leqslant -\left(\neg a\cdot {\sim}\neg 1\right) = -\left(\neg a\cdot 1\right) = -\neg a$. 
\end{proof}

The next result shows that qRAs can be obtained from commutative InFL-algebras.

\begin{lemma}
Commutative InFL-algebras are quasi relation algebras.
\end{lemma}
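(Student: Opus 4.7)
The plan is to exhibit a De Morgan operation $\neg$ on any commutative InFL-algebra and verify the three qRA axioms: involutivity of $\neg$, the De Morgan law (Dm), and the operator identity (Dp). The natural candidate is to set $\neg a := -a$ (which, as I will show first, also equals ${\sim}a$), so that there is essentially no new data to add beyond the InFL-structure.

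The first step is to establish that every commutative InFL-algebra is cyclic, that is, ${\sim}a = -a$ for all $a$. Specialising the residuation equivalence (Res) to $c = 0 = -1 = {\sim}1$ yields
$$a\cdot b \leqslant 0 \iff a \leqslant -b \iff b \leqslant {\sim}a,$$
using $-(b\cdot 1) = -b$ and ${\sim}(1\cdot a) = {\sim}a$. Commutativity gives $a\cdot b = b\cdot a \leqslant 0$, and applying the same equivalence with the roles of $a$ and $b$ swapped produces $a \leqslant -b \iff a \leqslant {\sim}b$. Hence $-b = {\sim}b$ for every $b$.

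With $\neg := -$ defined, the involutivity $\neg\neg a = a$ is just the InFL-identity $-{\sim}a = a$. For the De Morgan law (Dm), recall that on any lattice an order-reversing involution automatically satisfies $\neg(a \wedge b) = \neg a \vee \neg b$: the inequality $\geqslant$ follows from $a\wedge b \leqslant a, b$, and the reverse inequality is obtained by applying $\neg$ to $\neg a \vee \neg b \geqslant \neg a, \neg b$ and using involutivity. Thus $(A, \wedge, \vee, \neg)$ is a De Morgan lattice, and combined with the underlying InFL-structure we obtain a De Morgan InFL-algebra.

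Finally, I would verify (Dp) by direct computation: using the definition $x + y = -({\sim}y \cdot {\sim}x)$ and the fact that $\neg = - = {\sim}$, we compute
$$\neg a + \neg b = -({\sim}\neg b \cdot {\sim}\neg a) = -({\sim}{-}b \cdot {\sim}{-}a) = -(b\cdot a) = -(a\cdot b) = \neg(a\cdot b),$$
where the penultimate equality uses commutativity. The only nontrivial point in the entire argument is the derivation of cyclicity from commutativity plus (Res); once that is in hand, (Dm) is automatic and (Dp) collapses to a one-line calculation.
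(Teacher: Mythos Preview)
Your proof is correct and follows essentially the same approach as the paper: establish cyclicity from commutativity, set $\neg={\sim}(=-)$, and then read off $\mathsf{(Dm)}$ and $\mathsf{(Dp)}$ from the definition of $+$ together with commutativity. The paper obtains cyclicity via the standard fact that commutativity forces $a\backslash b=b/a$ (hence ${\sim}a=a\backslash 0=0/a=-a$), whereas you derive it directly from $(\mathsf{Res})$ at $c=0$; this is a cosmetic difference, not a substantive one.
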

\begin{proof} Commutative InFL-algebras satisfy $a\backslash b=b/a$, hence they are cyclic. Defining $\neg a={\sim}a$, it is clear that 
$\mathsf{(Dm)}$ holds, and $\mathsf{(Dp)}$ follows from the definition of $a+b=-({\sim}b\cdot{\sim}a)$ and commutativity.
\end{proof}

Distributive InFL-algebras and distributive qRAs are called DInFL-algebras and DqRAs for short. By the preceding lemma, DqRAs that satisfy the identity $\neg a={\sim} a$ can be obtained from commutative DInFL-algebras.

A relation algebra is a cyclic DqRA if the linear negations $\sim,-$ are defined as \emph{complement-converse} $(\neg a)^\smallsmile=\neg (a^\smallsmile)$ and a cyclic DqRA is a relation algebra if $\neg$ is complementation, i.e., $a\vee \neg a=\top$ and $a\wedge \neg a=\bot$. 

DqRAs can also be obtained from term-subreducts of relation algebras with respect to the signature $\{\vee,\cdot,1,\sim\}$ where, as before, ${\sim}a=-a=(\neg a)^\smallsmile$.
In the case of representable relation algebras, these subreducts are known as \emph{(representable) weakening relation algebras} \cite{GJ2020}, \cite{GJ2020a}, \cite{JS2023}. If the subreducts are commutative, they are DqRAs  by the preceding lemma.
However, the next result implies that one should start with nonsymmetric relation algebras if the aim is to construct DqRAs that are not relation algebras.

An element $a$ in a DqRA is said to be \emph{symmetric} if $\neg a={\sim}a=-a$ and the whole algebra is \emph{symmetric} if this property holds for all elements. 

\begin{lemma}
Suppose $\m A$ is a symmetric relation algebra. Then all InFL-algebra subreducts of $\m A$ are also relation algebras with $a^\smallsmile=a$.
\end{lemma}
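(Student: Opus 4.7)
The plan is to extract from the symmetry hypothesis that the converse operation on $\m A$ is in fact the identity map, and then use this, together with the coincidence of $\neg$, $\sim$, and $-$, to show that every InFL-subreduct is already closed under the full relation algebra signature.

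Since $\m A$ is a relation algebra, it is in particular a cyclic DqRA whose linear negations are given by the complement-converse formula $-a = {\sim}a = \neg(a^\smallsmile)$ for every $a \in A$. The hypothesis that $\m A$ is symmetric says that $\neg a = {\sim}a = -a$ for every $a \in A$, so in particular $\neg a = \neg(a^\smallsmile)$. Because $\neg$ is an involution on the De Morgan lattice reduct it is injective, so we may cancel it to obtain $a = a^\smallsmile$ for every $a \in A$. Thus $^\smallsmile$ is the identity on $\m A$, and the three unary operations $\neg$, $\sim$, $-$ all define the same function.

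Now let $\m B$ be an arbitrary InFL-subreduct of $\m A$, so its universe $B$ is closed under $\wedge$, $\vee$, $\cdot$, $1$, $\sim$, and $-$. By the previous paragraph, closure under $\sim$ is already closure under $\neg$, and closure under $^\smallsmile$ is automatic since $a^\smallsmile = a$. From $1 \in B$ and closure under $\vee$, $\wedge$, $\neg$ we obtain the lattice bounds $\top = 1 \vee \neg 1$ and $\bot = 1 \wedge \neg 1$ in $B$. Hence $B$ is closed under every fundamental operation of $\m A$ in its relation-algebra signature; since all of the relation algebra axioms are equational and are satisfied by $\m A$, they transfer to $\m B$, which is therefore itself a relation algebra in which converse is the identity.

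The only genuinely nontrivial step is the first one: noticing that the definition of symmetric (i.e.\ $\neg = \sim = -$) combined with the complement-converse identity ${\sim}a = \neg(a^\smallsmile)$ forces $a = a^\smallsmile$ after cancelling the involution $\neg$. Once this observation is made, the rest of the argument is a routine closure verification followed by the standard inheritance of equations by subalgebras.
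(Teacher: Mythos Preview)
Your argument is correct and follows the same route as the paper: observe that in a symmetric relation algebra the operations $\neg$, $\sim$, $-$ coincide, hence any InFL-subreduct is automatically closed under the full relation-algebra signature. The paper states this in one sentence without working out the consequences; your proposal simply unpacks it, including the explicit cancellation step $\neg a=\neg(a^\smallsmile)\Rightarrow a=a^\smallsmile$ and the recovery of the bounds via $1\vee\neg 1$ and $1\wedge\neg 1$.
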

\begin{proof}
In a symmetric relation algebra the classical negation $\neg$ and the linear negations $\sim,-$ all coincide, hence subalgebras with respect to the InFL operations are in fact relation algebras.
\end{proof}

\section{Frames for distributive quasi relation algebras and distributive involutive FL-algebras}\label{sec:frames}

In this section, we begin by recalling 
the definition of perfect distributive lattices. For a lattice $\mathbf{A}= (A, \wedge, \vee)$ we denote by $J^\infty(\mathbf{A})$ and $M^\infty(\mathbf{A})$ the completely join-irreducible and completely meet-irreducible elements, respectively. 
A lattice $\mathbf{A}$ is said to be \emph{perfect} if for every $a \in A$, we have 
$a = \bigvee \{\, j \in J^\infty(\mathbf{A}) \mid j \leqslant a \,\} = \bigwedge \{\, m \in M^\infty(\mathbf{A}) \mid a \leqslant m\,\}$, i.e., $J^\infty(\mathbf{A})$ 
 is join-dense and $M^\infty(\mathbf{A})$ is meet-dense.
Some authors include completeness as part of the definition of   perfect, but we state it separately. 
Here we will 
give a representation for DInFL-algebras and DqRAs whose underlying lattices are complete perfect distributive lattices. The 
theorem below follows from Gehrke and J\'{o}nsson~\cite[Theorem 2.2]{GJ94}.

\begin{theorem} For  a complete distributive lattice $\mathbf{A}$, the following are equivalent:
\begin{enumerate}[label=\textup{(\roman*)}]
    \item $\mathbf{A}$ is completely distributive and $J^\infty(\mathbf{A}) $ is completely join-dense in $\mathbf{A}$;
    \item $J^\infty(\mathbf{A})$ is completely join-dense in $\mathbf{A}$ and $M^\infty(\mathbf{A})$ is completely meet-dense in $\mathbf{A}$;
    \item $\mathbf{A}$ is isomorphic to the lattice of upsets of some poset $P$.
\end{enumerate}
\end{theorem}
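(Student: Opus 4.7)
The plan is to close the cycle (iii) $\Rightarrow$ (i) $\Rightarrow$ (ii) $\Rightarrow$ (iii). The implication (iii) $\Rightarrow$ (i) is a direct inspection of $\mathsf{Up}(P)$: joins are unions and meets are intersections, so complete distributivity is inherited from the power set; the completely join-irreducibles are exactly the principal upsets $\uparrow p$; and each upset $U$ equals $\bigcup_{p \in U} \uparrow p$, giving complete join-density. Transporting along the isomorphism yields (i).

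For (i) $\Rightarrow$ (ii), I would derive meet-density of $M^\infty(\mathbf{A})$ by the standard Raney-style separation argument: reducing, via join-density of $J^\infty(\mathbf{A})$, to the claim that whenever $j \in J^\infty(\mathbf{A})$ and $j \not\leqslant a$ there is $m \in M^\infty(\mathbf{A})$ with $a \leqslant m$ and $j \not\leqslant m$. Complete distributivity forces each $j \in J^\infty(\mathbf{A})$ to be completely join-prime, so the element $m := \bigvee\{x \in A : a \leqslant x,\ j \not\leqslant x\}$ itself lies in the set it joins (otherwise $j \leqslant m$ would entail $j \leqslant x$ for some $x$ in that set) and is completely meet-irreducible.

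For (ii) $\Rightarrow$ (iii), let $P = (J^\infty(\mathbf{A}), \geqslant)$ be the completely join-irreducibles with the dual of the induced order, and define $\varphi : \mathbf{A} \to \mathsf{Up}(P)$ by $\varphi(a) = \{j \in J^\infty(\mathbf{A}) : j \leqslant a\}$, which is a downset of $J^\infty(\mathbf{A})$ in the original order and hence an upset of $P$. Join-density of $J^\infty(\mathbf{A})$ gives injectivity; preservation of arbitrary joins is direct; and preservation of meets reduces to the equivalence $j \leqslant \bigwedge_i a_i \Leftrightarrow \forall i\ (j \leqslant a_i)$. The crux is surjectivity: for an upset $U$ of $P$ and $a = \bigvee U$, one must show that every $j \in J^\infty(\mathbf{A})$ with $j \leqslant a$ already lies in $U$. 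This reduces to establishing that each $j \in J^\infty(\mathbf{A})$ is completely join-prime in $\mathbf{A}$; once this is known, $j \leqslant \bigvee U$ gives $j \leqslant u$ for some $u \in U$, and the downset condition on $U$ then places $j$ in $U$.

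The main obstacle is extracting complete join-primeness from hypothesis (ii) alone, since complete distributivity is not yet in hand. The remedy is to exploit the splitting pair $(j, j_*)$, where $j_*$ is the unique lower cover of $j \in J^\infty(\mathbf{A})$, together with meet-density of $M^\infty(\mathbf{A})$: for each $j$ there is some $m \in M^\infty(\mathbf{A})$ with $j_* \leqslant m$ and $j \not\leqslant m$ (otherwise every completely meet-irreducible above $j_*$ lies above $j$, forcing $j_* \geqslant j$ by meet-density), and combining these separating elements with the ambient distributivity produces the completely join-prime property that closes the cycle.
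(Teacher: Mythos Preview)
The paper does not prove this theorem; it attributes the result to Gehrke and J\'onsson \cite[Theorem~2.2]{GJ94} and moves on. Your cycle (iii) $\Rightarrow$ (i) $\Rightarrow$ (ii) $\Rightarrow$ (iii) is the standard route and is correct as a plan.

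One remark on the last implication. You rightly identify that the whole weight of (ii) $\Rightarrow$ (iii) rests on showing each $j \in J^\infty(\mathbf A)$ is completely join-prime, and you correctly produce the separating $m \in M^\infty(\mathbf A)$ with $j_* \leqslant m$ and $j \not\leqslant m$. What ``combining with the ambient distributivity'' must actually deliver is that $(j,m)$ is a splitting pair of $\mathbf A$, i.e.\ $j \leqslant x \iff x \not\leqslant m$ for every $x$. The forward direction is immediate, but the converse does not drop out of meet-irreducibility of $m$ alone: one uses join-density to pick $j' \in J^\infty(\mathbf A)$ with $j' \leqslant m^*$ and $j' \not\leqslant m$, applies finite distributivity to $j' = (j' \wedge x) \vee (j' \wedge m)$ to force $j' \leqslant x$ whenever $x \not\leqslant m$, and then observes that $j' \leqslant j$ together with $j' \not\leqslant j_*$ gives $j' = j$. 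With the splitting pair in hand, $j \leqslant \bigvee S$ yields $\bigvee S \not\leqslant m$, hence some $s \not\leqslant m$, hence $j \leqslant s$. This is exactly the mechanism behind the map $\kappa$ the paper invokes just after this theorem, so your plan lines up with what the cited source does; it just needs that one extra paragraph to be airtight.
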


We remark that in a completely distributive lattice, all completely join-irreducible elements are completely join-prime and all completely meet-irreducible elements are completely meet-prime. 

Next we present complete perfect  DqRAs and complete perfect DInFL-algebras by frames, similar to the frames of finite representable weakening relation algebras \cite{JS2023} and atom structures of atomic relation algebras \cite{Mad1982}.

The frames for DInFL-algebras are essentially Routley--Meyer style frames from relevance logic~\cite{BDM09}, but with the relevant negation replaced by two linear negations that are inverses of each other. 

Note that given a binary set operation $\circ: W \times W \to \mathcal{P}(W)$, it can be extended to
a binary operation on $\mathcal P(W)$ via $U\circ V=\bigcup\{x\circ y\mid x\in U,y\in V\}$. We will  also make use of the notation $x\circ V=\{x\}\circ V$ and $U\circ y=U\circ \{y\}$.

\begin{definition}\label{def:DInFL-frames}
A \emph{DInFL-frame} is a tuple $\mathbb{W}= \left(W, I, \preccurlyeq, \circ, {^{\sim}}, {^{-}}\right)$ with 
$I\subseteq W$, a partial order $\preccurlyeq$ on $W$, a binary set-operation $\circ:W\times W\to \mathcal P(W)$ 
and functions $^{\sim} : W \to W$ and $^{-} : W\to W$ such that the following conditions hold for all $u,v, x, y, z \in W$:
\begin{enumerate}[label=\textup{(\arabic*)}]
\item $x \preccurlyeq y \Longleftrightarrow \R{I}{x}{y}\Longleftrightarrow \R{x}{I}{y}$ 
\item $x \preccurlyeq y \text{ and } x\in I \implies y\in I$
\item $x \preccurlyeq y \text{ and } \R{u}{v}{x} \implies \R{u}{v}{y}$
\item $(x\circ y)\circ z=x\circ (y\circ z)$ 
\item $\R{x}{y}{z^\sim} \Longleftrightarrow \R{z}{x}{y^{-}}$
\item $x^{\sim -} \preccurlyeq x$ and $  x^{-\sim} \preccurlyeq x$
\end{enumerate}
\end{definition}

The expression $z\in x\circ y$ is equivalent to a ternary relation that is usually denoted by $Rxyz$ (e.g., in relevance logic). Hence conditions (1)--(6) are equivalent to first-order formulas.

The following lemma shows that $^-$ and $^\sim$ are order-reversing inverses of each other, and that $\circ$ is downward persistent in both arguments. 

\begin{lemma}\label{lem:twiddle_minus_involutive}
In a DInFL-frame $\mathbb{W}= \left(W, I,\preccurlyeq, \RR, {^{\sim}}, {^{-}}\right)$ the following hold for all $x,y,z,u,v\in W$:
\begin{enumerate}[label=\textup{(\roman*)}]
\item $x^{\sim -} =x = x^{-\sim}$. 
\item $x \preccurlyeq y\implies y^- \preccurlyeq x^-$ and 
$y^\sim \preccurlyeq x^\sim$. 
\item $x \preccurlyeq y$ and $\R{y}{w}{z}\implies \R{x}{w}{z}$. 
\item $x \preccurlyeq y$ and $\R{w}{y}{z}\implies \R{w}{x}{z}$. 
\end{enumerate}
\end{lemma}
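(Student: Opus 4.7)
The plan is to establish the four claims in sequence, with the main work being a ``swapping'' equivalence derived from axioms~(1) and~(5) that implies~(i) and~(ii); parts~(iii) and~(iv) will then follow from axioms~(1),~(3), and~(4).

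The key intermediate step is to prove that for all $y, z \in W$,
\[ y \preccurlyeq z^\sim \;\Longleftrightarrow\; z \preccurlyeq y^-, \qquad y \preccurlyeq z^- \;\Longleftrightarrow\; z \preccurlyeq y^\sim. \qquad (*) \]
For the first equivalence, I would argue as follows. By axiom~(1), $y \preccurlyeq z^\sim$ is equivalent to the existence of $i \in I$ with $\R{i}{y}{z^\sim}$; by axiom~(5) this is equivalent to $\R{z}{i}{y^-}$; and by the other form of~(1) (namely $\R{x}{I}{y}$) this is in turn equivalent to $z \preccurlyeq y^-$. The second equivalence in $(*)$ follows simply by swapping the names of $y$ and $z$.

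For~(i), setting $y = x^\sim$ and $z = x$ in the first form of $(*)$ gives $x^\sim \preccurlyeq x^\sim \iff x \preccurlyeq x^{\sim -}$, so $x \preccurlyeq x^{\sim -}$; combined with axiom~(6) this yields $x = x^{\sim -}$, and an analogous choice gives $x = x^{-\sim}$. For~(ii), given $x \preccurlyeq y$, the first form of $(*)$ with $y \to x$, $z \to y^-$ (and $(y^-)^\sim = y$ from~(i)) gives $y^- \preccurlyeq x^-$; the second form with $y \to x$, $z \to y^\sim$ (and $(y^\sim)^- = y$) gives $y^\sim \preccurlyeq x^\sim$.

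Finally for~(iii), suppose $x \preccurlyeq y$ and $\R{y}{w}{z}$. By axiom~(1) there is $i \in I$ with $y \in i \circ x$, so by associativity (axiom~(4)), $z \in y \circ w \subseteq (i \circ x) \circ w = i \circ (x \circ w)$; hence there is some $v \in x \circ w$ with $z \in i \circ v \subseteq I \circ v$. Axiom~(1) then gives $v \preccurlyeq z$, and axiom~(3) upgrades $\R{x}{w}{v}$ to $\R{x}{w}{z}$. Statement~(iv) is entirely symmetric, using the other form of~(1) to write $y \in x \circ i$ and then associating as $(w \circ x) \circ i$. The principal obstacle is the swapping equivalence $(*)$: while axiom~(5) looks superficially like it, turning the equivalence about $\circ$ into one about $\preccurlyeq$ requires using \emph{both} forms of axiom~(1) --- one on each side of the application of~(5). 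Once $(*)$ is in place, the remaining steps are mechanical.
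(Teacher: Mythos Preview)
Your proof is correct. For parts (i) and (ii) your approach via the ``swapping'' equivalence $(*)$ is essentially the same as the paper's --- both start from reflexivity of $\preccurlyeq$, apply one form of axiom~(1), then axiom~(5), then the other form of~(1) --- but your packaging of this as a single reusable equivalence is cleaner than the paper's ad hoc repetition.

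Where you genuinely diverge is in parts~(iii) and~(iv). The paper does not use associativity~(4) at all for these: instead it first invokes part~(ii) to obtain $y^\sim \preccurlyeq x^\sim$, rewrites $z$ as $z^{\sim-}$ using part~(i), applies~(5) to rotate $z \in y \circ w$ to $y^\sim \in w \circ z^\sim$, uses axiom~(3) to replace $y^\sim$ by $x^\sim$, and then applies~(5) again to rotate back. Your argument via $(i \circ x) \circ w = i \circ (x \circ w)$ is shorter and, notably, makes (iii) and (iv) logically independent of (i) and (ii). The paper's route has the minor advantage of not touching axiom~(4), so it would survive in a setting where associativity is weakened; your route has the advantage of a more natural dependency structure and a one-line argument.
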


\begin{proof}
(i) We only have to prove that $x \preccurlyeq x^{\sim -}$ and $x \preccurlyeq x^{-\sim}$. 
We have $x^\sim \preccurlyeq x^{\sim}$, so by Definition \ref{def:DInFL-frames}(1) there is some $i \in I$ such that $i \in I$ and 
$\R{i}{x^\sim}{x^\sim}$.
An application of (5) gives $\R{x}{i}{x^{\sim -}}$, and using (1) again, $x \preccurlyeq x^{\sim -}$. The second equality follows similarly.

(ii) Assume $x \preccurlyeq y$. Then by 
(1) there is some $i \in I$ and $\R{i}{x}{y}$. The latter is equivalent to $\R{i}{x}{y^{-\sim}}$ by (i). Applying (5) to this gives $\R{y^-}{i}{x^-}$. Therefore, using (1) again, $y^- \preccurlyeq x^-$.  
The proof of the second inequality is similar.

For (iii), assume $x \preccurlyeq y$ and $\R{y}{u}{v}$. Applying 
(ii) to the first part gives $y^\sim \preccurlyeq x^\sim$. The second part is equivalent to $\R{y}{u}{v^{\sim -}}$ by (i), which means $\R{u}{v^\sim}{y^\sim}$ by (5). 
Hence, since $y^\sim \preccurlyeq x^\sim$ and $\R{u}{v^\sim}{y^\sim}$, by (3) 
$\R{u}{v^\sim}{x^\sim}$. This gives $\R{x}{u}{v^{\sim -}}$, which is equivalent to $\R{x}{u}{v}$. The proof of (iv) is similar.
\end{proof}

From any DInFL-frame, it is possible to construct a complete perfect DInFL-algebra. 

\begin{proposition}\label{prop:complex_algebra_DInFL-frame}
Let $\mathbb{W}= \left(W, I,\preccurlyeq, \RR, {^{\sim}}, {^{-}}\right)$ be a DInFL-frame and let $\mathsf{Up}\left(W, \preccurlyeq\right)$ be  the set of all upsets
of $\left(W,\preccurlyeq\right)$.  For all $U \in \mathsf{Up}\left(W, \preccurlyeq\right)$, define
${\sim} U  = \left\{w \in W \mid w^- \notin U\right\}$ and 
$-U  = \left\{w \in W \mid w^{\sim} \notin U\right\}$.
Then $\mathbb{W}^+ = \left( \mathsf{Up}\left(W, \preccurlyeq\right), \cap, \cup, \circ, I, \sim, -\right)$ is a DInFL-algebra. 
\end{proposition}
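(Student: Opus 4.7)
The plan is to verify each clause in the definition of an InFL-algebra for $\mathbb{W}^+$ using the axioms of Definition~\ref{def:DInFL-frames} together with Lemma~\ref{lem:twiddle_minus_involutive}; distributivity is automatic because $\mathsf{Up}(W,\preccurlyeq)$ is a distributive lattice in any case. The proof splits into a few routine checks plus one nontrivial residuation verification.

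First I would confirm that every operation is well-defined on upsets. Arbitrary intersections and unions of upsets are upsets; $U\circ V$ is an upset by condition~(3); both ${\sim}U$ and ${-}U$ are upsets by the order-reversing property in Lemma~\ref{lem:twiddle_minus_involutive}(ii); and $I$ is an upset by condition~(2).

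Second, I would check that $(\mathsf{Up}(W,\preccurlyeq), \circ, I)$ is a monoid. Associativity of the extended operation reduces pointwise to condition~(4). For the identity law $I \circ U = U$, the forward inclusion follows because $z \in i \circ u$ with $i \in I$ forces $u \preccurlyeq z$ by condition~(1), hence $z \in U$; the reverse inclusion applies condition~(1) to $u \preccurlyeq u$ to produce $i \in I$ with $u \in i \circ u \subseteq I \circ U$. The equation $U \circ I = U$ is symmetric.

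The main step is the residuation law
\[
U \circ V \subseteq Z \iff U \subseteq {-}(V \circ {\sim}Z) \iff V \subseteq {\sim}({-}Z \circ U),
\]
which I would prove by unfolding both sides into first-order quantifier statements and aligning them via condition~(5). The middle statement asserts that for all $x\in U$, $y\in V$, and $w\in W$ with $w^-\notin Z$ one has $x^\sim\notin y\circ w$. Condition~(5) converts $x^\sim\in y\circ w$ to $w^-\in x\circ y$, and by Lemma~\ref{lem:twiddle_minus_involutive}(i) every element $z$ of $W$ equals $w^-$ for $w=z^\sim$; together these rewrite the middle statement as $U\circ V\subseteq Z$. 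The third biconditional is handled symmetrically using condition~(5) in the converse direction. I expect this to be the main obstacle, because both sides carry three alternating quantifiers, and one must substitute carefully into condition~(5) so that the roles of $x$, $y$, $w$ line up with the axiom's slots and then use involutivity to move between arbitrary $z\in W$ and expressions of the form $w^-$. All remaining InFL-algebra requirements, including the derived involutivity ${\sim}{-}U = U = {-}{\sim}U$ and the order-reversing behaviour of the linear negations, follow either immediately from the frame axioms or as standard consequences of (Res) noted in the paper.
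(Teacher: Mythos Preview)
Your proposal is correct and follows essentially the same route as the paper's proof: verify closure of $\mathsf{Up}(W,\preccurlyeq)$ under the operations using conditions~(2), (3) and Lemma~\ref{lem:twiddle_minus_involutive}(ii), check the monoid identity via condition~(1), associativity via~(4), and then reduce $\mathsf{(Res)}$ to condition~(5) together with the involutivity from Lemma~\ref{lem:twiddle_minus_involutive}(i). The only cosmetic difference is that the paper argues each implication in $\mathsf{(Res)}$ by contradiction (assuming an element lies outside the target set and deriving a clash), whereas you phrase the same computation as a direct chain of quantifier rewrites; the substance, including the key substitution $z = w^-$ with $w = z^{\sim}$, is identical.
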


\begin{proof}
We first show that $\mathsf{Up}\left(W, \preccurlyeq\right)$ is closed under the above operations. It is well-known (and easy to check) that $\mathsf{Up}\left(W, \preccurlyeq\right)$ is closed under taking unions and intersections. It follows that $I \in \mathsf{Up}\left(W, \preccurlyeq\right)$ by (2). Now let $x \in U\circ V$ and $x \preccurlyeq y$. Since $x \in U\circ V$, there are $u \in U$ and $v \in V$ such that $\R{u}{v}{x}$. Hence, by (3) 
we get $\R{u}{v}{y}$, and so $y \in U\circ V$. This shows that $U\circ V \in \mathsf{Up}\left(W, \preccurlyeq\right)$.

Next let $x \in {\sim}U$ and  $x \preccurlyeq y$. Since $x \in {\sim}U$, we have $x^- \notin U$. Applying Lemma~\ref{lem:twiddle_minus_involutive}(ii)
to $x\preccurlyeq y$, we obtain $y^- \preccurlyeq x^-$. Hence, since $U \in \mathsf{Up}\left(W, \preccurlyeq\right)$, we have $y^- \notin U$. This gives $y \in {\sim}U$. Thus, ${\sim}U \in \mathsf{Up}\left(W, \preccurlyeq\right)$ for all $U \in \mathsf{Up}\left(W, \preccurlyeq\right)$. 
Using Lemma~\ref{lem:twiddle_minus_involutive}(ii)
we can show in a similar way that $-U \in \mathsf{Up}\left(W, \preccurlyeq\right)$ for all $U \in \mathsf{Up}\left(W, \preccurlyeq\right)$.

Next we show that $I\circ U = U \circ I = U$ for all $U \in \mathsf{Up}\left(W, \preccurlyeq\right)$.
Let $x \in I \circ U$. Then there is some $i \in I$ and $u \in U$ such that $\R{i}{u}{x}$. Hence, by (1), 
$u \preccurlyeq x$. But $U \in \mathsf{Up}\left(W, \preccurlyeq\right)$, so $x \in U$. 

Now let $x \in U$. We have $x \preccurlyeq x$, so by (1) there is some $i \in I$ such that $\R{i}{x}{x}$. Hence, $x \in I\circ U$. 
Using the other equivalence of (1) 
we can show in a similar way that $U\circ I = U$ for all $U \in \mathsf{Up}\left(W, \preccurlyeq\right)$. 

The associativity of $\circ$ follows from (4) in Definition \ref{def:DInFL-frames}. 

Finally, we prove that $T \circ U \subseteq V$ iff $U \subseteq {\sim}\left(-V\circ T\right)$ iff $T \subseteq -\left(U\circ {\sim}V\right)$ for all upsets $T, U, V$ of $\left(W, \preccurlyeq\right)$. First, assume that $T \circ U \subseteq V$. Let $u \in U$ and suppose for the sake of a contradiction that $u \notin {\sim}\left(-V\circ T\right)$. The latter gives $u^{-}\in \left(-V\circ T\right)$. Hence, there exist $v\in -V$ and $t \in T$ such that $\R{v}{t}{u^{-}}$, and so by (5), $\R{t}{u}{v^{\sim}}$.  Since $t \in T$ and $u \in U$, we have $v^{\sim} \in T \circ U$, which means $v^{\sim} \in V$. This gives $v \notin -V$, a contradiction. 

Conversely, assume $U \subseteq {\sim}\left(-V\circ T\right)$. 
Let $w \in T\circ U$. Then there exist $t \in T$ and $u \in U$ such that $\R{t}{u}{w}$. Since $u \in U$, we have $u \in {\sim}\left(-V\circ T\right)$. Hence, $u^{-} \notin -V\circ T$. This implies that for all $x, y \in W$, if $y \in T$ and $\R{x}{y}{u^{-}}$ then $x \notin -V$. Now $\R{t}{u}{w}$ is equivalent to $\R{t}{u}{w^{-\sim}}$ by Lemma \ref{lem:twiddle_minus_involutive}, which is equivalent to $\R{w^-}{t}{u^-}$ by (5). Since $t \in T$, we have $w^- \notin -V$. Hence, $w^{-\sim} \in V$, and so $w \in V$. This shows that $T \circ U \subseteq V$. 

Now assume  $T \circ U \subseteq V$ and let $t \in T$. Suppose towards contradiction that $t \notin -\left(U\circ {\sim}V\right)$. Then $t^{\sim} \in U \circ {\sim}V$. Hence, there exists $u \in U$ and $v \in {\sim}V$ such that $\R{u}{v}{t^{\sim}}$. We thus have $t\in T$, $u \in U$ and $\R{t}{u}{v^{-}}$. This gives $v^{-} \in T\circ U$, and so $v^{-}\in V$. Thus, $v \notin {\sim}V$, a contradiction. 

Conversely, assume $T \subseteq -\left(U\circ {\sim}V\right)$ and let $w \in T\circ U$. This implies there are $t \in T$ and $u \in U$ such that $\R{t}{u}{w}$. Since $t \in T$, we get $t \in -\left(U\circ {\sim}V\right)$, and so $t^{\sim} \notin U\circ {\sim}V$. Hence, for all $x , y \in W$, if $x \in U$ and $\R{x}{y}{t^{\sim}}$, then $y \notin {\sim}V$. Now $\R{t}{u}{w}$ is equivalent to $\R{t}{u}{w^{\sim -}}$, which in turn is equivalent to $\R{u}{w^{\sim}}{t^{\sim}}$. Therefore, since $u \in U$, we get $w^{\sim} \notin {\sim}V$. This gives $w^{\sim -} \in V$, thus $w \in V$. 
\end{proof}

The algebra $\mathbb{W}^+$ in Proposition \ref{prop:complex_algebra_DInFL-frame} is called the $\emph{complex algebra}$ of $\mathbb{W}$. It is easy to check that $\mathbb{W}^+$ is complete and perfect. 

Recall that if $\mathbf{A}$ is a complete perfect distributive lattice,   
then the  map $\kappa : J^{\infty}\left(\mathbf{A}\right) \to M^{\infty}\left(\mathbf{A}\right)$ defined by $\kappa\left(j\right) = \bigvee\left\{a \in A \mid j \not\leqslant a\right\}$ is an order isomorphism (cf.~\cite[Section II.5]{FJN1995}). 
We now define two useful unary operations on the completely join-irreducibles of a complete perfect distributive InFL-algebra. 

\begin{definition}\label{def:superscrip_minus/tilde}
For every completely join-irreducible $a$ of a complete perfect DInFL-algebra $\mathbf{A}$, define $a^{\sim} = {\sim}\kappa\left(a\right)$ and $a^- = -\kappa\left(a\right)$. 
\end{definition}

Using the above definitions, we obtain the following important lemma. 

\begin{lemma}\label{lem:minus/tilde_completely_join_irreducible}
Let $\mathbf{A}=\left(A,\wedge, \vee, \cdot, 1, \sim, -\right)$ be a complete  perfect DInFL-algebra. If $a$ is a completely join-irreducible, then so are $a^{\sim}$ and $a^{-}$. 
\end{lemma}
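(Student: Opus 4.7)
The plan is to show that the linear negations $\sim$ and $-$ act as dual lattice isomorphisms on $\mathbf{A}$, and therefore exchange the classes of completely join-irreducibles and completely meet-irreducibles. Since the paragraph preceding Definition~\ref{def:superscrip_minus/tilde} tells us that $\kappa(a)$ is completely meet-irreducible whenever $a$ is completely join-irreducible, applying $\sim$ or $-$ to $\kappa(a)$ then lands back inside $J^\infty(\mathbf{A})$.

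First I would establish the anti-isomorphism claim. From the background, $\sim$ and $-$ are order-reversing bijections with ${\sim}{-}a = a = {-}{\sim}a$. Since $\mathbf{A}$ is complete, any order-reversing bijection automatically turns arbitrary joins into meets: for $X \subseteq A$, every $x \in X$ satisfies $x \leqslant \bigvee X$, hence ${\sim}\bigvee X \leqslant {\sim}x$, giving ${\sim}\bigvee X \leqslant \bigwedge {\sim}X$. The reverse inequality follows by applying $-$ to both sides and using involutivity, so
\[
{\sim}\bigvee X = \bigwedge {\sim}X, \qquad {-}\bigvee X = \bigwedge {-}X,
\]
and symmetrically for arbitrary meets.

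Next I would show that $\sim$ carries $M^\infty(\mathbf{A})$ into $J^\infty(\mathbf{A})$. Suppose $m$ is completely meet-irreducible and ${\sim}m = \bigvee X$. Applying $-$ and using the anti-isomorphism property just derived yields $m = {-}{\sim}m = \bigwedge {-}X$. Complete meet-irreducibility of $m$ forces $m = {-}x$ for some $x \in X$, and therefore ${\sim}m = x \in X$. This is exactly what it means for ${\sim}m$ to be completely join-irreducible. The identical argument with the roles of $\sim$ and $-$ swapped shows that $-$ also sends $M^\infty(\mathbf{A})$ into $J^\infty(\mathbf{A})$.

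Finally, for $a \in J^\infty(\mathbf{A})$ the element $\kappa(a)$ is completely meet-irreducible, so $a^{\sim} = {\sim}\kappa(a)$ and $a^{-} = {-}\kappa(a)$ both lie in $J^\infty(\mathbf{A})$. There is no real obstacle in the argument—everything is a matter of unpacking definitions—and the only point worth a little care is verifying that the order-reversing involutions really do behave as complete lattice anti-isomorphisms, which follows purely from completeness and involutivity and makes no further appeal to the DInFL structure.
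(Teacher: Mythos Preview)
Your proof is correct and follows essentially the same approach as the paper: both argue that $\kappa(a)\in M^\infty(\mathbf A)$ and that the linear negations, being dual lattice isomorphisms, send completely meet-irreducibles to completely join-irreducibles. The only difference is that the paper obtains the dual-isomorphism property by citing \cite[Section~2]{GJ2013}, whereas you spell it out directly from order-reversal, involutivity, and completeness.
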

\begin{proof}
Let $a$ be a completely join-irreducible element of a  complete perfect DInFL-algebra $\mathbf{A}$. By the definition of $\kappa$, we have $\kappa(a) \in M^\infty(\mathbf{A})$. By~\cite[Section 2]{GJ2013}, in an InFL-algebra, both $\sim$ and $-$ are dual lattice isomorphisms and hence ${\sim}\kappa(a) \in J^\infty(\mathbf{A})$ and $-\kappa(a) \in J^\infty(\mathbf{A})$. 
\end{proof}

The following proposition adapts a well-known method (see \cite{DGP2005}) for obtaining dual frames from complete perfect algebras:

\begin{proposition}\label{prop:DInFL-frame_from_perfect_InFL-algebra}
Let $\mathbf{A}=\left(A,\wedge, \vee, \cdot, 1, \sim, -\right)$ be a complete perfect DInFL-algebra. 
Set $I_1=\left\{i \in J^\infty(\mathbf{A}) \mid i \leqslant 1\right\}$ and, for all $a, b, c$ in $J^\infty(\mathbf{A})$, define $\preccurlyeq {=} \geqslant$, $\R{a}{b}{c}$ iff $c \leqslant a\cdot b$, $a^{\sim} = {\sim}\kappa\left(a\right)$ and $a^- = -\kappa\left(a\right)$.
Then the structure $\mathbf{A}_{+} =\left(J^{\infty}\left(\mathbf A\right), I_1, \preccurlyeq, \circ, {^\sim}, {^-}\right)$ is a DInFL-frame. 
\end{proposition}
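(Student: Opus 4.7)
The plan is to verify each of the six axioms of Definition~\ref{def:DInFL-frames} for $\mathbf{A}_+$. The remaining structural requirements---that $\preccurlyeq$ is a partial order on $J^\infty(\mathbf{A})$, and that $^{\sim}$ and $^-$ take values in $J^\infty(\mathbf{A})$---are immediate, with the latter supplied by Lemma~\ref{lem:minus/tilde_completely_join_irreducible}. The ingredients I will draw on are: perfection (so $1 = \bigvee I_1$ and $x \cdot y = \bigvee\{j \in J^\infty(\mathbf{A}) \mid j \leqslant x \cdot y\}$); complete join-primeness of join-irreducibles in the completely distributive lattice $\mathbf{A}$; distribution of $\cdot$ over arbitrary joins, which follows from $\mathsf{(Res)}$; and the involutive law $-{\sim}a = a = {\sim}{-}a$.

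Axioms (2) and (3) are pure monotonicity: since $\preccurlyeq$ is reverse order, (2) says $y \leqslant x \leqslant 1$ forces $y \leqslant 1$, and (3) says $y \leqslant x \leqslant u \cdot v$ forces $y \leqslant u \cdot v$. For axiom (1), $y \leqslant x$ gives $y \leqslant 1 \cdot x = \bigvee\{i \cdot x \mid i \in I_1\}$ by perfection of $1$ and distribution, and complete join-primeness of $y$ then produces $i \in I_1$ with $y \leqslant i \cdot x$, i.e.\ $\R{I}{x}{y}$; the converse uses $i \leqslant 1$ and monotonicity, and the equivalence with $\R{x}{I}{y}$ is symmetric. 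For axiom (4), one inclusion follows from associativity of $\cdot$ combined with monotonicity; for the converse, $w \leqslant (x \cdot y) \cdot z = \bigvee\{s \cdot z \mid s \in J^\infty(\mathbf{A}),\, s \leqslant x \cdot y\}$ by perfection of $x \cdot y$ and distribution of $\cdot$ over arbitrary joins, and complete join-primeness of $w$ supplies the intermediate join-irreducible $s$.

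The main obstacle is axiom (5), which shifts the positions of $x$, $y$, $z$ under $^\sim$ and $^-$. The plan pivots on the intermediate identity $\kappa(z^\sim) = {\sim}z$ (and symmetrically $\kappa(y^-) = -y$), which I would establish by unfolding $\kappa(j^\sim) = \bigvee\{a \mid j^\sim \not\leqslant a\}$ and using $-{\sim}a = a$ together with the fact that $-b \leqslant \kappa(j)$ iff $j \not\leqslant -b$ to rewrite the defining condition on $a$ as $a \leqslant {\sim}z$. Granted this, and using that for a completely join-prime $j$ we have $j \leqslant p$ iff $p \not\leqslant \kappa(j)$, axiom (5) translates into the equivalence $x \cdot y \not\leqslant {\sim}z$ iff $z \cdot x \not\leqslant -y$. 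This is the contrapositive of the residuation-derived identity $a \cdot b \leqslant c \Leftrightarrow -c \cdot a \leqslant -b$ applied with $c = {\sim}z$, since $-{\sim}z = z$. Finally, axiom (6) now holds with equality: $j^{\sim -} = -\kappa(j^\sim) = -{\sim}j = j$, and symmetrically $j^{-\sim} = j$, so $x^{\sim -} \preccurlyeq x$ and $x^{-\sim} \preccurlyeq x$ are trivially satisfied.
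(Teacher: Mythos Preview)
Your proposal is correct. For axioms (1)--(4) it matches the paper's argument essentially verbatim (indeed your treatment of (4) is more careful: the paper simply cites associativity of $\cdot$, whereas you spell out why the intermediate join-irreducible exists via perfection and complete join-primeness).

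For axioms (5) and (6) your route differs from the paper's. You isolate the identity $\kappa(z^{\sim}) = {\sim}z$ (and dually $\kappa(y^{-}) = -y$) as a lemma, then reduce (5) to the InFL equivalence $a\cdot b \leqslant c \iff -c\cdot a \leqslant -b$ and obtain (6) as the one-line computation $j^{\sim -} = -\kappa(j^{\sim}) = -{\sim}j = j$. The paper instead argues (5) and (6) by unwinding the definition of $\kappa$ directly and deriving the contradiction $c \leqslant \kappa(c)$ in each case, without ever stating $\kappa(z^{\sim}) = {\sim}z$ explicitly. Your approach is cleaner and immediately yields equality in (6) (which the paper only recovers later as Lemma~\ref{lem:twiddle_minus_involutive}(i)); the paper's approach avoids the auxiliary lemma but at the cost of two separate contradiction arguments.
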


\begin{proof}
(1): Let $a, b \in J^{\infty}\left(\mathbf{A}\right)$ and assume $a \preccurlyeq b$. Then $b \leqslant a$. Since $\mathbf{A}$ is completely join-generated by $J^{\infty}\left(\mathbf{A}\right)$, we have $$b \leqslant 1 \cdot a  = \bigvee\left\{j \in J^{\infty}\left(\mathbf{A}\right) \mid j \leqslant 1\right\} \cdot a 
 = \bigvee\left\{j \cdot a \mid j \in J^{\infty}\left(\mathbf{A}\right) \textnormal{ and } j \leqslant 1\right\}.$$
Now $b$ is completely join-prime, so we have $b \leqslant i\cdot a$ for some $i \in J^\infty\left(\mathbf{A}\right)$ such that $i \leqslant 1$. Hence, there is some $i \in J^\infty\left(\mathbf{A}\right)$ such that $i \in I_1$ and $\R{i}{a}{b}$.  Consequently, $b \in I_1\circ a$. 

Conversely, let $b \in I_1\circ a$. Then there is some $i \in J^\infty\left(\mathbf{A}\right)$ such that $i \in I_1$ and $\R{i}{a}{b}$. Hence,  $i \leqslant 1$ and $b \leqslant i\cdot a$. From $i \leqslant 1$, we obtain $i\cdot a \leqslant 1\cdot a = a$, and so $b \leqslant i \cdot a \leqslant  a$. This shows that $a \preccurlyeq b$. 
The proof of the other equivalence in (1) is similar. 

(2): Let $i \in I_1$ and $j \in J^{\infty}\left(\mathbf{A}\right)$, and assume $i \preccurlyeq j$. Then $i \leqslant 1$ and $j \leqslant i$. Hence, $j \leqslant 1$, and so $j \in I_1$.

(3): Assume $a \preccurlyeq b$ and $\R{c}{d}{a}$. Then $b \leqslant a$ and $a \leqslant c\cdot d$, and so $b \leqslant c\cdot d$, which shows that $\R{c}{d}{b}$. 

(4): This follows from the associativity of the monoid operation. 

(5): 
First assume $\R{a}{b}{c^{\sim}}$. Then $c^\sim \leqslant a \cdot b$, and so  ${\sim}\kappa(c)
\leqslant a \cdot b$. Since $-$ is order-reversing and $-$ and $\sim$ are inverses of each other, we have  $-\left(a\cdot b\right) \leqslant \kappa(c)$.
To prove that $\R{c}{a}{b^{-}}$, we have to show that $b^-\leqslant c\cdot a$, i.e., $-\kappa(b)
\leqslant c \cdot a$. This is equivalent to showing that ${\sim}\left(c\cdot a\right) \leqslant \kappa(b)$.
Hence, 
it suffices to 
show that $b \not\leqslant {\sim}\left(c \cdot a\right)$.
Suppose for the sake of a contradiction that $b \leqslant {\sim}\left(c\cdot a\right)$. Then $b \leqslant {\sim}\left(-{\sim}c \cdot a\right)$, and so $a\cdot b \leqslant {\sim}c$ by ($\mathsf{Res}$) in Section~\ref{sec:background}, which means $c \leqslant -\left(a \cdot b\right)$. 
We thus obtain $c \leqslant \kappa(c)$.
Therefore, since $c$ is completely join-prime, there is some $s \in A$ such that $c \leqslant s$ and  $c \not\leqslant s$, a contradiction. 
The converse implication can be proved in a similar way. 

(6): 
We will show $a^{\sim -} \preccurlyeq a$ for all $a \in J^{\infty}\left(\mathbf{A}\right)$ and leave the other case for the reader.  Proving that $a^{\sim -} \preccurlyeq a$ is equivalent to showing that $a \leqslant a^{-\sim}$, which is equivalent to $a \leqslant {\sim}\kappa\left(a^-\right)$.
Since $-$ is order-reversing and $\sim$ and $-$ are inverses of each other, this is equivalent to showing that $\kappa\left(a^{-}\right) 
\leqslant -a$. That is, we have to show that $b \leqslant -a$ for all $b \in A$ such that $a^- \not\leqslant b$. Let $b$ be an arbitrary element of $A$ such that $a^{-} \not\leqslant b$. For the sake of a contradiction, suppose $b \not\leqslant -a$. Then $a = {\sim}{-}a \not\leqslant {\sim} b$.  Since  $a^- \not\leqslant b$, we have ${\sim}b \not\leqslant \kappa\left(a\right)$.
Hence, ${\sim}b \not\leqslant c$ for all $c \in A$ such that $a \not\leqslant c$. But $a \not\leqslant {\sim}b$, so in particular, ${\sim}b\not\leqslant {\sim} b$, a contradiction.
\end{proof}

The next theorem shows
that every complete perfect DInFL-algebra is isomorphic to the complex algebra of its DInFL-frame of completely join-irreducibles.

\begin{theorem}\label{thm:DInFl-algebras_duality}
If $\mathbf{A} = \left(A,\wedge, \vee, \cdot, 1, \sim, -\right)$ is a complete perfect DInFL-algebra, then $\mathbf{A} \cong  \left(\mathbf{A}_+\right)^+$. 
\end{theorem}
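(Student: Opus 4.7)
The plan is to exhibit the canonical map $\varphi\colon A\to \mathsf{Up}(J^\infty(\mathbf{A}), \preccurlyeq)$ defined by $\varphi(a)=\{j\in J^\infty(\mathbf{A})\mid j\leqslant a\}$, and to verify that it is a DInFL-isomorphism onto $(\mathbf{A}_+)^+$. Because $\preccurlyeq$ is the reverse of the lattice order on $J^\infty(\mathbf{A})$, the set $\varphi(a)$ is a downset in $\leqslant$, hence an upset in $\preccurlyeq$, so $\varphi$ is well-defined into $\mathsf{Up}(J^\infty(\mathbf{A}), \preccurlyeq)$. Complete join-density of $J^\infty(\mathbf{A})$ in $\mathbf{A}$ gives $a=\bigvee\varphi(a)$, from which injectivity and preservation of $\wedge,\vee$ follow by a standard Birkhoff-style argument. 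For surjectivity, I would check that every $\preccurlyeq$-upset $U$ of $J^\infty(\mathbf{A})$ coincides with $\varphi\bigl(\bigvee U\bigr)$: the inclusion $U\subseteq \varphi(\bigvee U)$ is immediate, and the reverse uses complete join-primeness of $j$ together with the upset condition on $U$ to conclude that $j\leqslant \bigvee U$ forces $j\in U$.

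Next I would check preservation of the monoid unit and product. The equation $\varphi(1)=I_1$ is immediate from the definition of $I_1$. For $\varphi(a\cdot b)=\varphi(a)\circ\varphi(b)$ I use that in any InFL-algebra the operation $\cdot$ has residuals and hence distributes over arbitrary joins, so
$$a\cdot b=\bigvee\{j_1\cdot j_2\mid j_1\in\varphi(a),\ j_2\in\varphi(b)\}.$$
If $j\in J^\infty(\mathbf{A})$ satisfies $j\leqslant a\cdot b$ then by complete join-primeness $j\leqslant j_1\cdot j_2$ for some $j_1\in\varphi(a)$, $j_2\in\varphi(b)$, which is exactly to say $j\in j_1\circ j_2\subseteq \varphi(a)\circ\varphi(b)$. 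The reverse inclusion is direct from the definition of $\circ$ on $\mathbf{A}_+$ together with monotonicity of $\cdot$.

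The preservation of the linear negations is the step that requires the most care, and I expect it to be the main obstacle. The key tool is the well-known characterization that, for any completely join-prime element $w$ of a complete perfect distributive lattice, $x\leqslant \kappa(w)$ if and only if $w\not\leqslant x$. Combined with the fact that $\sim$ is an order-reversing bijection satisfying ${\sim}({-}x)=x$, this yields the chain of equivalences
$$w\in{\sim}\varphi(a)\iff -\kappa(w)\not\leqslant a\iff {\sim}a\not\leqslant \kappa(w)\iff w\leqslant {\sim}a\iff w\in\varphi({\sim}a),$$
and a symmetric argument, using ${-}({\sim}x)=x$, handles $-$. Assembling all of these verifications shows that $\varphi$ is the required isomorphism $\mathbf{A}\cong (\mathbf{A}_+)^+$.
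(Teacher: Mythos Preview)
Your proposal is correct and follows essentially the same approach as the paper: the same canonical map $a\mapsto\{j\in J^\infty(\mathbf{A})\mid j\leqslant a\}$, the same use of complete join-primeness and distributivity of $\cdot$ over joins for the product, and the same reliance on the $\kappa$ map for the linear negations. Your chain of equivalences for ${\sim}$-preservation (via $x\leqslant\kappa(w)\iff w\not\leqslant x$) is a slightly cleaner packaging of what the paper proves by two separate contradiction arguments, but the underlying content is identical.
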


\begin{proof}
The fact that the map $\psi : A \to \mathsf{Up}\left(J^{\infty}\left(\mathbf A\right), \preccurlyeq\right)$ defined by $\psi\left(a\right) = \left\{j \in J^{\infty}\left(\mathbf{A}\right) \mid j \leqslant a\right\}$ is 
a lattice isomorphism is well known (cf.~\cite{GNV2005}). 

First, we note that $\psi\left(1\right) = \left\{j \in J^{\infty}\left(\mathbf{A}\right) \mid j \leqslant 1\right\} = I_1$. 
To show that $\psi$ preserves the monoid operation, first let $i \in \psi\left(a\cdot b\right)$. Then $i \in J^{\infty}\left(\mathbf{A}\right)$ and $i \leqslant a \cdot b$. Since $\mathbf{A}$ is completely join-generated by $J^{\infty}\left(\mathbf{A}\right)$, we have 
\begin{align*}
i \leqslant a\cdot b & = \bigvee\left\{j \in J^{\infty}\left(\mathbf{A}\right) \mid j \leqslant a \right\} \cdot \bigvee\left\{k \in J^{\infty}\left(\mathbf{A}\right) \mid k \leqslant b\right\}\\
& = \bigvee\left\{j \cdot k \mid j, k \in J^{\infty}\left(\mathbf{A}\right), j \leqslant a  \textnormal{ and } k \leqslant b\right\}.
\end{align*}
Since $i$ is completely join-prime, it follows that $i \leqslant j \cdot k$ for some $j, k \in J^{\infty}\left(\mathbf{A}\right)$ such that $j \leqslant a$ and $k \leqslant b$. Hence, $j \in \psi\left(a\right)$, $k \in \psi\left(b\right)$ and $\R{j}{k}{i}$, which shows that $i \in \psi\left(a\right) \circ \psi\left(b\right)$. 

For the other inclusion, let $i \in J^\infty(\mathbf{A})$ such that  $i \in \psi\left(a\right) \circ \psi\left(b\right)$. Then there are $j \in \psi\left(a\right)$ and $k \in \psi\left(b\right)$
 such that $\R{j}{k}{i}$. Hence, $j \leqslant a$, $k \leqslant b$ and $i \leqslant j\cdot k$. From the first part we get $j\cdot k \leqslant a\cdot k$ and from the second part we get $a \cdot k \leqslant a \cdot b$. It follows that $i \leqslant j\cdot k \leqslant a\cdot k \leqslant a\cdot b$, and so $i \in \psi\left(a \cdot b\right)$. 

 Next we show that $\psi\left({\sim}a\right) = {\sim}\psi\left(a\right)$. First, let $j \in \psi\left({\sim}a\right)$. Then $j \in J^{\infty}\left(\mathbf{A}\right)$ and $j \leqslant {\sim}a$. For the sake of a contradiction, suppose $j \notin {\sim}\psi\left(a\right)$. Then we have $j^-\in \psi\left(a\right)$, and so $j^-\leqslant a$. Since $j \leqslant {\sim}a$ and $-$ is order-reversing, we obtain $a = -{\sim}a \leqslant - j$. Hence, $-\kappa\left(j\right) = j^- \leqslant -j$, which means $j \leqslant \kappa\left(j\right)$. Since $j$ is completely join-prime, there is some $s \in A$ such that $j \leqslant s$ and $j \not\leqslant s$, a contradiction. It must therefore be the case that $j \in {\sim}\psi\left(a\right)$.

For the other inclusion, let $j \in {\sim}\psi\left(a\right)$. Then $j^- \notin \psi\left(a\right)$, and so $j^- = -\kappa\left(j\right) \not\leqslant a$, which gives ${\sim}a \not\leqslant \kappa\left(j\right)$. Hence, ${\sim}a\not\leqslant s$ for all $s \in A$ such that $j \not\leqslant s$. Therefore it must be the case that $j \leqslant {\sim}a$; for otherwise, ${\sim}a \not\leqslant {\sim}a$, a contradiction. 
Thus $j \in \psi\left({\sim}a\right)$. 
The proof that $\psi\left(-a\right) = -\psi\left(a\right)$ is similar. 
\end{proof}

As mentioned after Proposition~\ref{prop:complex_algebra_DInFL-frame}, the complex algebra of a DInFL-frame $\mathbb{W} = \left(W, I, \preccurlyeq, \RR, {^\sim}, {^-}\right)$ is a complete perfect DInFL-algebra. 
We will define frame morphisms in Section~\ref{sec:frame-morphisms}. For the result below, the claimed isomorphic relationship between the two frames should be viewed simply in terms of the usual concept of isomorphism between first-order structures. 

\begin{theorem}\label{thm:DInFL-frames_duality}
If $\mathbb{W}= \left(W, I,\preccurlyeq, \RR, {^{\sim}}, {^{-}}\right)$ is a DInFL- frame, then  $\mathbb{W} \cong \left(\mathbb{W}^+\right)_+$.     
\end{theorem}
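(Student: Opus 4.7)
The natural candidate for the isomorphism is the map $\varphi\colon W\to J^\infty(\mathbb{W}^+)$ sending $x$ to its principal upset ${\uparrow}x=\{w\in W\mid x\preccurlyeq w\}$. A standard fact for the upset lattice of a poset is that its completely join-irreducible elements are precisely the principal upsets, and that $x\mapsto {\uparrow}x$ is a bijection between $W$ and $J^\infty(\mathbb{W}^+)$. Since the recipe in Proposition~\ref{prop:DInFL-frame_from_perfect_InFL-algebra} equips $J^\infty(\mathbb{W}^+)$ with the reversed lattice order, we have $\varphi(x)\preccurlyeq\varphi(y)$ in $(\mathbb{W}^+)_+$ iff ${\uparrow}y\subseteq{\uparrow}x$, which is equivalent to $x\preccurlyeq y$ in $\mathbb{W}$. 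Hence $\varphi$ is simultaneously a bijection and an order isomorphism.

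Next I would check the designated subset and the two linear negations. Condition (2) of Definition~\ref{def:DInFL-frames} says that $I$ is an upset of $W$, so ${\uparrow}x\subseteq I$ iff $x\in I$, which yields $\varphi(x)\in I_1$ iff $x\in I$. For $^\sim$, I would first observe that the largest upset of $W$ not containing $x$ is $W\setminus{\downarrow}x$, so $\kappa({\uparrow}x)=W\setminus{\downarrow}x$ and
\[
\varphi(x)^\sim \;=\; {\sim}\kappa({\uparrow}x) \;=\; \{w\in W\mid w^-\preccurlyeq x\}.
\]
By Lemma~\ref{lem:twiddle_minus_involutive}(i)--(ii), the maps $^\sim$ and $^-$ are mutually inverse order-reversing bijections on $W$, so $w^-\preccurlyeq x$ is equivalent to $x^\sim\preccurlyeq w$, and therefore $\varphi(x)^\sim={\uparrow}x^\sim=\varphi(x^\sim)$. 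The case of $^-$ is symmetric.

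The main obstacle, and essentially the only place where a genuine frame axiom beyond bijectivity is invoked, is the ternary relation. The condition $\R{\varphi(x)}{\varphi(y)}{\varphi(z)}$ in $(\mathbb{W}^+)_+$ unpacks to ${\uparrow}z\subseteq{\uparrow}x\circ{\uparrow}y$; since ${\uparrow}x\circ{\uparrow}y$ is an upset by Proposition~\ref{prop:complex_algebra_DInFL-frame}, this is equivalent to $z\in{\uparrow}x\circ{\uparrow}y$. The forward direction $\R{x}{y}{z}\Rightarrow z\in{\uparrow}x\circ{\uparrow}y$ is immediate from $x\in{\uparrow}x$ and $y\in{\uparrow}y$. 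For the converse, suppose $z\in u\circ v$ with $x\preccurlyeq u$ and $y\preccurlyeq v$. Then Lemma~\ref{lem:twiddle_minus_involutive}(iii) yields $\R{x}{v}{z}$, and a further application of Lemma~\ref{lem:twiddle_minus_involutive}(iv) yields $\R{x}{y}{z}$. This downward persistence of $\circ$ in each argument is exactly what is needed to close the verification.
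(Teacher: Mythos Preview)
Your proof is correct and follows essentially the same route as the paper: the same map $x\mapsto{\uparrow}x$, the same use of $I$ being an upset, the same reduction of ${\uparrow}z\subseteq{\uparrow}x\circ{\uparrow}y$ to $z\in{\uparrow}x\circ{\uparrow}y$ via upset-closure, and the same appeal to Lemma~\ref{lem:twiddle_minus_involutive}(iii)--(iv) for the converse direction. The only cosmetic difference is that you identify $\kappa({\uparrow}x)=W\setminus{\downarrow}x$ directly, whereas the paper unwinds $\kappa$ as a union and manipulates the resulting quantifier chain to reach the same set $\{w\mid w^-\preccurlyeq x\}$; your shortcut is arguably cleaner but not a different idea.
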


\begin{proof}
It is well known that $J^{\infty}\left(\mathbb{W}^+\right) = \left\{{\uparrow}x\mid x \in W\right\}$ where ${\uparrow}x = \left\{y \in W\mid x \preccurlyeq y\right\}$. Hence the standard map $x \mapsto {\uparrow}x$
gives us an order-isomorphism from $(W,\preccurlyeq) $ to $(J^{\infty}(\mathbb{W}^+),\supseteq)$. 

Since $I$ is an upset, we have  
 $x \in I$ if and only if ${\uparrow}x \subseteq I$. 
Now $I_I \subseteq J^{\infty}(\mathbb{W}^+)$ is defined by 
$I_I= \{ U \in \mathsf{Up}(W,\preccurlyeq) \mid U \in J^\infty (\mathbb{W}^+), U \subseteq I\}=\{ {\uparrow}x \mid x \in W, {\uparrow}x \subseteq I\}$. Hence $x \in I$ if and only if ${\uparrow}x \in I_I$. 

Next, we need to show, for any $x,y,z \in W$, that $z \in x \circ y$ if and only if ${\uparrow}z \subseteq {\uparrow}x \circ {\uparrow} y$. We have 
$${\uparrow}x\circ {\uparrow} y =\bigcup \{ u \circ v \mid u \in {\uparrow}x, v \in {\uparrow} y \}=\bigcup \{u \circ v \mid x \preccurlyeq u, y \preccurlyeq v \}.$$
Assume $z \in x \circ y$ and $z \preccurlyeq w$. By the reflexivity of $\preccurlyeq$ we get $x \circ y\subseteq {\uparrow}x \circ {\uparrow}y$. Hence $z$ is in the upset ${\uparrow}x \circ {\uparrow} y$ and so  $w \in {\uparrow} x \circ {\uparrow} y$. For the converse, assume ${\uparrow}z \subseteq {\uparrow}x \circ {\uparrow} y$. So there exist $u,v$ with $x \preccurlyeq u$ and $y \preccurlyeq v$ such that $z \in u \circ v$. By items (iii) and (iv) of 
Lemma~\ref{lem:twiddle_minus_involutive} we get $z\in x \circ y$. 

Finally we will show that  ${\uparrow}(x^{\sim})=({\uparrow}x)^{\sim}$. The proof that ${\uparrow}(x^{-})=({\uparrow}x)^{-}$ will follow similarly. Using the definition of $^{\sim}$ on $J^{\infty}(\mathbb{W}^+)$ used in Proposition~\ref{prop:DInFL-frame_from_perfect_InFL-algebra}, we get the following:
\begin{align*}
({\uparrow}x)^{\sim} = {\sim} \kappa ({\uparrow}x) &={\sim} \bigcup \{ U \in \mathsf{Up}(W,\preccurlyeq) \mid {\uparrow}x \not\subseteq U \} \\ 
&= \left\{ w \in W \mid w^- \notin \bigcup \{ U \in \mathsf{Up}(W,\preccurlyeq) \mid {\uparrow}x \not\subseteq U \} \right\} 
\\
&= \{w \in W \mid \text{for all } U \in \mathsf{Up}(W,\preccurlyeq),{\uparrow}x \not\subseteq U \Longrightarrow  w^- \notin U \} \\
&= \{w \in W \mid \text{for all } U \in \mathsf{Up}(W,\preccurlyeq), {\uparrow}w^- \subseteq U \Longrightarrow  {\uparrow}x \subseteq U  \} \\
&= \{w \in W \mid {\uparrow}x \subseteq {\uparrow}w^-  \} \\
&= \{w \in W \mid w^- \preccurlyeq x   \}\\
&= \{w \in W \mid x^{\sim} \preccurlyeq
w\} = {\uparrow}(x^{\sim}). \end{align*}
\end{proof}

We now equip DInFL-frames with additional structure to dually represent complete perfect DqRAs.

\begin{definition} \label{def:DqRA_frames}
A \emph{DqRA-frame} is a tuple $\mathbb{W}= \left(W, I,\preccurlyeq, \RR, {^{\sim}}, {^{-}}, {^\neg}\right)$ such that $\left(W, I,\preccurlyeq, \RR, {^{\sim}}, {^{-}}\right)$ is a DInFL-frame and 
the following additional conditions hold for all $x, y, z\in W$:
\begin{enumerate}[label=\textup{(\arabic*)}]
\stepcounter{enumi}
\stepcounter{enumi}
\stepcounter{enumi}
\stepcounter{enumi}
\stepcounter{enumi}
\stepcounter{enumi}
\item $x^{\neg\neg} = x$
\item $x \preccurlyeq y \implies y^\neg \preccurlyeq x^\neg$
\item $\R{x}{y}{z^-}  \Longleftrightarrow \R{y^{\sim\neg}}{x^{\sim\neg}}{z^\neg}$
\end{enumerate}
\end{definition}

To illustrate the concept of DqRA-frame, Table~\ref{tab:smallqRA} shows all frames for nontrivial algebras up to size four. The names of these frames are chosen to match the list of algebras in the first two rows of Figure 2 in Section~\ref{sec:list-of-algebras}. Hence the superscript shows the number of upward closed sets of the frame (rather than the number of elements of the frame). The operation $^-$ is not shown since it is the inverse of $^\sim$ (and for the frames in the table $x^-=x^\sim$).

Note that $\mathbb W_{1,1}^2$ is the frame of the 2-element Boolean algebra (as a DqRA), $\mathbb W_{1,1}^3$ and $\mathbb W_{1,2}^4$ are frames of MV-chains usually denoted by \L$_3$ and \L$_4$, while $\mathbb W_{1,2}^3$ and $\mathbb W_{1,3}^4$ are frames of the Sugihara chains $\m S_3$ and $\m S_4$. The DqRA-frames $\mathbb W_{2,1a}^4$ and $\mathbb W_{2,1b}^4$ are isomorphic when considered as DInFL-frames (in Figure 7.3 they correspond to the InFL-algebra $D_{2,1,2}^4$).
The empty frame $\mathbb W_{1,1}^1$, which corresponds to the one-element DqRA, is not shown in Table~\ref{tab:smallqRA}.

\begin{table}
\begin{center}
\small
\tikzstyle{every picture} = [scale=0.6]
\tikzstyle{every node} = [draw, fill=white, circle, inner sep=0pt, minimum size=5pt]
\begin{tikzpicture}[baseline=0pt]
\node(0) at (0,0)[label=right:$e$]{};
\end{tikzpicture}
\quad
$\begin{array}{|c|c||c|c|}\hline
\!\mathbb W_{1,1}^2\!\!&\ e&^\sim&^\neg\\\hline
e   &e	&e&e	\\\hline
\end{array}$
\qquad\quad
\begin{tikzpicture}[baseline=9pt]
\node(1) at (0,1)[label=right:$u$]{};
\node(0) at (0,0)[label=right:$e$]{} edge (1);
\end{tikzpicture}
\quad
$\begin{array}{|c|cc||c|c|}\hline
\!\mathbb W_{1,1}^3\!\!&\ u&\ e&^\sim&^\neg\\\hline
u   &\emptyset	&u	&e&e\\
e   &u&	eu&u	&u\\\hline
\end{array}$
\quad
$\begin{array}{|c|cc||c|c|}\hline
\!\mathbb W_{1,2}^3\!\!&e&\ u&^\sim&^\neg\\\hline
e&e&eu	&u&u	\\
u&eu&eu &e&e	\\\hline
\end{array}$\\[8pt]

\begin{tikzpicture}[baseline=15pt]
\node(2) at (0,2)[label=right:$v$]{};
\node(1) at (0,1)[label=right:$u$]{} edge (2);
\node(0) at (0,0)[label=right:$e$]{} edge (1);
\end{tikzpicture}
\quad
$\begin{array}{|c|ccc||c|c|}\hline
\!\mathbb W_{1,1}^4\!\!&v\ &u\ &e&^\sim&^\neg\\\hline
v&\emptyset	&\emptyset&v&e&e\\
u&\emptyset  &uv	&uv	&u&u\\
e&v&uv	&euv	&v&v\\\hline
\end{array}$
\quad
$\begin{array}{|c|ccc||c|c|}\hline
\!\mathbb W_{1,2}^4\!\!&v\ &u&\ e&^\sim&^\neg\\\hline
v&\emptyset	&\emptyset&v&e&e\\
u&\emptyset  &v	&uv	&u&u\\
e&v&uv	&euv	&v&v\\\hline
\end{array}$\\[8pt]

\begin{tikzpicture}[baseline=16pt]
\node(2) at (0,2)[label=right:$v$]{};
\node(1) at (0,1)[label=right:$e$]{} edge (2);
\node(0) at (0,0)[label=right:$u$]{} edge (1);
\end{tikzpicture}
\quad
$\begin{array}{|c|ccc||c|c|}\hline
\!\mathbb W_{1,3}^4\!\!&v\ &e&\ u&^\sim&^\neg\\\hline
v&v	&v	&euv&u&u	\\
e&v  &ev&euv&e&e	\\
u&euv&euv&euv&v&v\\\hline
\end{array}$
\qquad\quad
\begin{tikzpicture}[baseline=16pt]
\node(2) at (0,2)[label=right:$e$]{};
\node(1) at (0,1)[label=right:$v$]{} edge (2);
\node(0) at (0,0)[label=right:$u$]{} edge (1);
\end{tikzpicture}
\quad
$\begin{array}{|c|ccc||c|c|}\hline
\!\mathbb W_{1,4}^4\!\!&e\ &v\ &u&^\sim&^\neg\\\hline
e&e&ev&euv &u&u\\
v&ev&euv&euv&v&v\\
u&euv&euv&euv&e&e\\\hline
\end{array}$\\[8pt]

\!\!\!\!\begin{tikzpicture}[baseline=0pt]
\node(2) at (0,0)[label=below:$e$]{};
\node(0) at (0.7,0)[label=below:$f$]{};
\end{tikzpicture}
\quad
$\begin{array}{|c|cc||c|c|}\hline
\!\mathbb W_{2,1a}^4\!\!&e&f&^\sim&^\neg\\\hline
e	&e&\emptyset	&e&e	\\
f   &\emptyset&f &f&f	\\\hline
\end{array}$
\quad
$\begin{array}{|c|cc||c|c|}\hline
\!\mathbb W_{2,1b}^4\!\!&e&f&^\sim&^\neg\\\hline
e	&e&\emptyset	&e&f	\\
f   &\emptyset&f &f&e	\\\hline
\end{array}$\\[8pt]

\begin{tikzpicture}[baseline=0pt]
\node(2) at (0,0)[label=below:$e$]{};
\node(0) at (0.7,0)[label=below:$u$]{};
\end{tikzpicture}
\ 
$\begin{array}{|c|cc||c|c|}\hline
\!\mathbb W_{2,2}^4\!\!&e&u&^\sim&^\neg\\\hline
e	&e	&u	&e&e	\\
u   &u&e	&u&u	\\\hline
\end{array}$
\quad 
$\begin{array}{|c|cc||c|c|}\hline
\!\mathbb W_{2,3}^4\!\!&e&u&^\sim&^\neg\\\hline
e	&e	&u	&e&e	\\
u   &u&eu	&u&u	\\\hline
\end{array}$
\quad
$\begin{array}{|c|cc||c|c|}\hline
\!\mathbb W_{3,1}^4\!\!&e&u&^\sim&^\neg\\\hline
e	&e	&u	&u&u	\\
u   &u&e	&e&e	\\\hline
\end{array}$
\quad
$\begin{array}{|c|cc||c|c|}\hline
\!\mathbb W_{3,2}^4\!\!&e&u&^\sim&^\neg\\\hline
e	&e	&u	&u&u	\\
u   &u  &\emptyset 	&e&e	\\\hline
\end{array}$
\end{center}

\caption{DqRA-frames for distributive quasi relation algebras up to cardinality four. The identity set $I$ is either $\{e\}$ or $\{e,f\}$, and in the table for the binary $\circ$, a single element $u$ or a sequence of elements such as $euv$ denotes the set $\{u\}$ or $\{e,u,v\}$ respectively. The corresponding DqRAs are shown in the first two rows of Figure~2.}
\label{tab:smallqRA}

\vspace{-0.6cm}
\end{table}

As in the case of qRAs (see Lemma~\ref{lem:Di-in-qRA}), we can deduce that $x^{\sim\neg} = x^{\neg -}$ for all $x \in W$. 

\begin{lemma}\label{lem:proving_Di_frames}
Let $\mathbb{W}= \left(W, I,\preccurlyeq, \RR, {^{\sim}}, {^{-}}, {^\neg}\right)$ be a DqRA-frame. Then $x^{\sim\neg} = x^{\neg -}$ for all $x \in W$. 
\end{lemma}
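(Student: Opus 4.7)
My plan is to transfer the equality from the algebra side via the complex algebra $\mathbb{W}^+$ of Proposition~\ref{prop:complex_algebra_DInFL-frame}. First I would extend $\mathbb{W}^+$ to a DqRA by defining, for every upset $U$, the unary operation $\neg U := \{w \in W \mid w^\neg \notin U\}$. That $\neg$ is a well-defined order-reversing involution on upsets satisfying the De~Morgan law is immediate from conditions (7) and (8), so $\mathbb{W}^+$ becomes a De~Morgan InFL-algebra.

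The substantive step is to verify $\mathsf{(Dp)}$: $\neg(U \cdot V) = \neg U + \neg V$. Unfolding $+$ via $a + b = {\sim}({-}b \cdot {-}a)$, membership in the right-hand side amounts to saying that for all $x,y \in W$ with $x^{\sim\neg} \in V$ and $y^{\sim\neg} \in U$ one has $w^- \notin x\circ y$. Condition (9) converts this, after the change of variables $u = y^{\sim\neg}$, $v = x^{\sim\neg}$ (which ranges over all of $U \times V$ because $w \mapsto w^{\sim\neg}$ is a bijection on $W$ with inverse $w \mapsto w^{\neg-}$, by Lemma~\ref{lem:twiddle_minus_involutive}(i) and condition (7)), into membership in $\neg(U \cdot V)$.

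With $\mathbb{W}^+$ now a DqRA, Lemma~\ref{lem:Di-in-qRA} supplies $\neg{\sim}U = -\neg U$ for every upset $U$. Evaluating at $U = {\uparrow}x$ and using that $^\sim$, $^-$, $^\neg$ are order-reversing with the appropriate inverses, I would compute directly
\[
\neg{\sim}{\uparrow}x = \{w \mid x \preccurlyeq w^{\neg-}\} = {\uparrow}(x^{\sim\neg}),
\quad
-\neg{\uparrow}x = \{w \mid x \preccurlyeq w^{\sim\neg}\} = {\uparrow}(x^{\neg-}),
\]
so the equality of these two principal upsets forces $x^{\sim\neg} = x^{\neg-}$.

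The main obstacle I foresee is the verification of $\mathsf{(Dp)}$: the quantifier-matching uses condition (9) in both directions and depends on the bijectivity of $^{\sim\neg}$. Everything else is either routine from the frame axioms or standard principal-upset bookkeeping. A direct frame-theoretic proof transcribing Lemma~\ref{lem:Di-in-qRA} step by step, with $\mathsf{(Res)}$ and $\mathsf{(Dp)}$ replaced by conditions (5) and (9), is possible but notationally heavier due to the nested compositions of $^\sim$, $^-$, and $^\neg$.
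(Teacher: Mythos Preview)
Your argument is correct and takes a genuinely different route from the paper. The paper gives a direct frame-level computation: starting from reflexivity and repeatedly applying conditions (1), (5), (9) together with (7), (8) and Lemma~\ref{lem:twiddle_minus_involutive}, it derives each of the two inequalities $x^{\sim\neg}\preccurlyeq x^{\neg-}$ and $x^{\neg-}\preccurlyeq x^{\sim\neg}$ by element chasing. Your approach instead lifts to the algebra $\mathbb W^+$, verifies $\mathsf{(Dp)}$ there, and then pulls $\mathsf{(Di)}$ back via Lemma~\ref{lem:Di-in-qRA} and the principal-upset calculation. The crucial point is that your verification of $\mathsf{(Dp)}$ uses only the bijectivity of $w\mapsto w^{\sim\neg}$ (which follows from (7) and Lemma~\ref{lem:twiddle_minus_involutive}(i) alone) and condition (9), so there is no circularity. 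This is worth noting because in the paper's own organisation the proof of $\mathsf{(Dp)}$ for $\mathbb W^+$ (Proposition~\ref{prop:complex_algebra_DqRA-frame}) invokes Lemma~\ref{lem:equivalent_Di}, and hence Lemma~\ref{lem:proving_Di_frames}, for one inclusion; your bijection argument shows that detour is unnecessary. What your route buys is a conceptual reduction to the already-proved algebraic $\mathsf{(Di)}$; what the paper's direct computation buys is self-containment at the frame level, avoiding any appeal to the complex algebra.
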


\begin{proof}
We first show that $x^{\sim \neg} \preccurlyeq x^{\neg -}$. Since $x^{-\sim} \preccurlyeq x^{-\sim}$, there exists $i \in W$ such that $i \in I$ and $\R{x^{-\sim}}{i}{x^{-\sim}}$ by (1). Applying (5) to the second part gives $\R{x^-}{x^{-\sim}}{i^-}$, and so, by (9), $\R{x^{-\sim\sim\neg}}{x^{-\sim \neg}}{i^\neg}$. Hence, using Lemma~\ref{lem:twiddle_minus_involutive}(i), we obtain $\R{x^{\sim\neg}}{x^{\neg}}{i^{\neg - \sim}}$. Another application of (5) yields $\R{i^{\neg -}}{x^{\sim \neg}}{x^{\neg -}}$. Therefore, by (9), $\R{x^{\sim\neg\sim\neg}}{i^{\neg -\sim\neg}}{x^{\neg\neg}}$, and so, by (7) and Lemma~\ref{lem:twiddle_minus_involutive}(i),  $\R{x^{\sim\neg\sim\neg}}{i}{x}$. Since $i \in I$, we can apply (1) again to get $x^{\sim\neg\sim\neg} \preccurlyeq x$. Hence, by (7) and (8), $x^\neg \preccurlyeq x^{\sim\neg \sim\neg\neg} = x^{\sim \neg\sim}$. Using Lemma~\ref{lem:twiddle_minus_involutive}(i) and (ii) we obtain $x^{\sim\neg} = x^{\sim\neg\sim -} \preccurlyeq x^{\neg -}$. 

Now $x \preccurlyeq x^{\neg\neg}$ by (7), so there is some $i \in W$ such that $i \in I$ and $\R{x}{i}{x^{\neg\neg}}$. Hence, by (7) and Lemma~\ref{lem:twiddle_minus_involutive}, we have $\R{x^{\neg -\sim\neg}}{i^{\neg -\sim \neg}}{x^{\neg\neg}}$. Applying (9) to this gives $\R{i^{\neg -}}{x^{\neg -}}{x^{\neg -}}$, and so $\R{x^{\neg -}}{x^\neg}{i^{\neg - \sim}}$ by (5). Thus, by (7) and Lemma~\ref{lem:twiddle_minus_involutive}, $\R{x^{\neg - \neg -\sim\neg}}{x^{\neg\neg - \sim \neg}}{i^{\neg}}$. Applying (9) again gives $\R{x^{\neg\neg -}}{x^{\neg -\neg -}}{i^{-}}$, and therefore, $\R{x^-}{x^{\neg -\neg -}}{i^-}$. By (5), $\R{x^{\neg -\neg -}}{i}{x^{-\sim}}$, which means $\R{x^{\neg -\neg -}}{i}{x}$. We thus get $x^{\neg -\neg -} \preccurlyeq x$.  Hence, $x^\sim \preccurlyeq x^{\neg -\neg -\sim} = x^{\neg - \neg}$, and so $x^{\neg -} = x^{\neg -\neg \neg} \preccurlyeq x^{\sim\neg}$. 
\end{proof}

The following lemma shows that $x^{\sim\neg} = x^{\neg -}$ is equivalent to $x^{-\neg} = x^{\neg\sim}$, as expected. 

\begin{lemma}\label{lem:equivalent_Di}
Let $\mathbb{W}= \left(W, I,\preccurlyeq, \RR, {^{\sim}}, {^{-}}, {^\neg}\right)$ be a DqRA-frame. Then $x^{\sim\neg} = x^{\neg -}$ for all $x \in W$ iff $x^{-\neg} = x^{\neg\sim}$ for all $x \in W$.
\end{lemma}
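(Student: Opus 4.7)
The strategy is purely algebraic manipulation: exploit the fact (Lemma~\ref{lem:twiddle_minus_involutive}(i)) that $^\sim$ and $^-$ are two-sided inverses on $W$, together with the involutivity of $^\neg$ from condition (7) of Definition~\ref{def:DqRA_frames}. No use of the ternary relation $\circ$ or of condition (9) is needed, because the identity in question involves only the three unary operations.

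For the forward direction, suppose $y^{\sim\neg} = y^{\neg -}$ holds for every $y \in W$. The plan is to substitute a specific $y$ chosen so that one of $^\sim$, $^-$ cancels. Taking $y = x^-$ gives
\[
x^{-\sim\neg} = x^{-\neg -},
\]
and Lemma~\ref{lem:twiddle_minus_involutive}(i) collapses $x^{-\sim}$ on the left to $x$, yielding $x^{\neg} = x^{-\neg -}$. Applying $^\sim$ to both sides and using $x^{-\sim}=x$ again (this time to cancel the trailing $-\sim$) produces $x^{\neg\sim} = x^{-\neg}$, as required.

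The converse is entirely symmetric: assume $x^{-\neg} = x^{\neg\sim}$ for all $x$, substitute $x = y^\sim$, and use $y^{\sim -} = y$ to obtain $y^\neg = y^{\sim\neg\sim}$; then apply $^-$ to both sides and cancel the trailing $\sim -$ to recover $y^{\neg -} = y^{\sim\neg}$. I expect no genuine obstacle here — the only thing to be careful about is keeping the order of the unary operations straight (they are all written as right-superscripts, so composition reads left-to-right), and to invoke Lemma~\ref{lem:twiddle_minus_involutive}(i) on the correct side in each step. The argument is a short computation of four lines in each direction.
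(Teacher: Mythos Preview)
Your proof is correct and follows essentially the same approach as the paper: substitute $x^-$ (respectively $x^\sim$) into the assumed identity and cancel twice using Lemma~\ref{lem:twiddle_minus_involutive}(i). The only remark is that you mention condition~(7) in your setup but never actually use it; the argument runs entirely on the inverse relationship between $^\sim$ and $^-$.
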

\begin{proof} 
Assume $x^{\sim\neg} = x^{\neg -}$ for all $x \in W$. Then 
$(x^-)^{\sim\neg}=(x^-)^{\neg -}$. Applying 
Lemma~\ref{lem:twiddle_minus_involutive}(i) gives us $x^\neg = (x^-)^{\neg -}$. Hence $x^{\neg\sim}=(x^-)^{\neg -\sim}$ and applying Lemma~\ref{lem:twiddle_minus_involutive}(i) again gives $x^{\neg\sim}=x^{-\neg}$. The proof of the other implication is similar.
\end{proof}

The above lemma allows us to prove the following result. 

\begin{proposition}\label{prop:complex_algebra_DqRA-frame}
Let $\mathbb{W}= \left( W, I, \preccurlyeq, \RR, {^\sim}, {^-}, {^\neg}\right)$ be a DqRA-frame.  For all $U, V \in \mathsf{Up}\left(W, \preccurlyeq\right)$, define $\circ$, $\sim$ and $-$ as in Proposition \ref{prop:complex_algebra_DInFL-frame} and $\neg$ by
$\neg U  = \left\{x \in W \mid x^\neg \notin U\right\}$. 
Then the structure $\mathbb{W}^+ = \left( \mathsf{Up}\left(W, \preccurlyeq\right), \cap, \cup, \circ, I, \sim, -, \neg\right)$ is a DqRA.
\end{proposition}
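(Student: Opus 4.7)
The plan is to invoke Proposition~\ref{prop:complex_algebra_DInFL-frame}, which already establishes that the reduct $(\mathsf{Up}(W,\preccurlyeq),\cap,\cup,\circ,I,\sim,-)$ is a DInFL-algebra. It thus remains only to verify three things: that $\neg$ sends upsets to upsets, that $\neg\neg U=U$ and $\neg(U\cap V)=\neg U\cup\neg V$, and that $(\mathsf{Dp})$ holds in the form $\neg(U\circ V)=-(\sim\neg V\circ\sim\neg U)$.

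The first two items are short. For well-definedness, suppose $x\in\neg U$ and $x\preccurlyeq y$; then condition~(8) of Definition~\ref{def:DqRA_frames} gives $y^\neg\preccurlyeq x^\neg$, and since $U$ is an upset and $x^\neg\notin U$ we conclude $y^\neg\notin U$, whence $y\in\neg U$. The involution law $\neg\neg U=U$ is immediate from condition~(7), and the De Morgan identity follows by unfolding the definitions.

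The main task is $(\mathsf{Dp})$. Unfolding, $x\notin\neg(U\circ V)$ iff there exist $u\in U$ and $v\in V$ with $\R{u}{v}{x^\neg}$, while $x\notin-(\sim\neg V\circ\sim\neg U)$ iff there exist $a,b\in W$ with $a^{-\neg}\in V$, $b^{-\neg}\in U$ and $\R{a}{b}{x^\sim}$. A direct computation using Lemma~\ref{lem:equivalent_Di} together with the basic identities $\sim-=-\sim=\mathrm{id}$ and $\neg\neg=\mathrm{id}$ shows that the map $w\mapsto w^{-\neg}$ is an involution on $W$. Reparameterising the second condition via $a=v^{-\neg}$ and $b=u^{-\neg}$ therefore reduces $(\mathsf{Dp})$ to the single frame-level equivalence
\[
\R{u}{v}{x^\neg}\iff\R{v^{-\neg}}{u^{-\neg}}{x^\sim}\qquad\text{for all } u,v,x\in W.
\]

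To establish this equivalence, I would first apply condition~(9), which (using the analogous involutivity of $\sim\neg$ to invert the substitutions on the outer two positions) rewrites $\R{u}{v}{x^\neg}$ as $\R{v^{\sim\neg}}{u^{\sim\neg}}{x^-}$. Then three successive applications of condition~(5) cycle the three arguments of the relation, passing through the intermediate forms $\R{u^{\sim\neg}}{x}{v^\neg}$ and $\R{x}{v^{-\neg}}{u^\neg}$ before arriving at $\R{v^{-\neg}}{u^{-\neg}}{x^\sim}$. The key simplifications at each step are $(w^{\sim\neg})^\sim=w^\neg$ (via Lemma~\ref{lem:proving_Di_frames} and $-\sim=\mathrm{id}$) and $(w^{-\neg})^-=w^\neg$ (via Lemma~\ref{lem:equivalent_Di} and $\sim-=\mathrm{id}$). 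The main obstacle will be bookkeeping: since $--$ and $\sim\sim$ are not in general identities in $W$, each substitution in (5) must be chosen so as to avoid introducing such compositions, and every intermediate term must be kept in one of the simplifiable forms $w^\neg$, $w^{\sim\neg}$, or $w^{-\neg}$.
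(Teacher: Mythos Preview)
Your argument is correct, but it takes a longer detour than the paper's. The difference stems from your choice to write $(\mathsf{Dp})$ in the form $\neg(U\circ V)=-({\sim}\neg V\circ{\sim}\neg U)$ rather than the equivalent form ${\sim}(-\neg V\circ-\neg U)$ used in the paper. Condition~(9) is tailor-made for the latter: since $-\neg U=\{w\mid w^{\sim\neg}\in U\}$, a single application of~(9) (together with the involutivity of $^{\sim\neg}$, which you also establish) translates $x^-\in(-\neg V)\circ(-\neg U)$ directly into $x^\neg\in U\circ V$ and conversely. Your approach reaches the same intermediate relation $\R{v^{\sim\neg}}{u^{\sim\neg}}{x^-}$ after the first step, but because you are targeting $x^\sim$ rather than $x^-$ in the third slot, you must then cycle three more times through~(5) to move from $x^-$ to $x^\sim$. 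The bookkeeping you outline is sound (each step uses Lemma~\ref{lem:proving_Di_frames} or Lemma~\ref{lem:equivalent_Di} exactly as you describe), so nothing is wrong; it is simply that choosing the ${\sim}(-\neg V\circ-\neg U)$ form of $+$ makes the frame condition~(9) do all the work in one move.
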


\begin{proof}
Since $\left( \mathsf{Up}\left(W, \preccurlyeq\right), \cap, \cup, \circ, I, \sim, -\right)$ is a DInFL-algebra (Proposition~\ref{prop:complex_algebra_DInFL-frame}), it will follow that $\mathbb{W}^+$ is a 
DqRA
if we can show that $\neg U \in \mathsf{Up}\left(W, \preccurlyeq\right)$ for all $U \in \mathsf{Up}\left(W, \preccurlyeq\right)$, $\neg$ is involutive, and $\mathsf{(Dm)}$ and $\mathsf{(Dp)}$ hold.  
Using the fact that $^\neg$ is order-reversing, we can show that if $U \in \mathsf{Up}\left(W, \preccurlyeq\right)$, then $\neg U \in \mathsf{Up}\left(W, \preccurlyeq\right)$. 

Next we show that $\neg$ is involutive. Let $U \in \mathsf{Up}\left(W, \preccurlyeq\right)$. Then $x \in \neg\neg U$ iff $x^\neg \notin \neg U$ iff $x^{\neg\neg} \in U$ iff $x \in U$. Here the last equivalence follows from (7). 

For $\mathsf{(Dm)}$, let $U, V \in \mathsf{Up}\left(W, \preccurlyeq\right)$. Then $x \in \neg\left(U \cap V\right)$ iff $x^\neg \notin U \cap V$ iff $x^\neg \notin U$ or $x^{\neg} \notin V$ iff $x \in \neg U$ or $x \in \neg V$ iff $x \in \neg U \cup \neg V$. 

Finally, for $\mathsf{(Dp)}$, let $U, V \in \mathsf{Up}\left(W, \preccurlyeq\right)$. We first show that $\neg\left(U \circ V\right) \subseteq {\sim}\left(-\neg V \circ -\neg U\right)$. Suppose  $x \notin {\sim}\left(-\neg V \circ -\neg U\right)$. Then we have $x^- \in -\neg V \circ -\neg U$. This means there exist $v \in -\neg V$ and $u \in -\neg U$ such that $\R{v}{u}{x^{-}}$. Applying condition (9) to $\R{v}{u}{x^{-}}$ gives $\R{u^{\sim\neg}}{v^{\sim\neg}}{x^\neg}$. From $v \in -\neg V$ and $u \in -\neg U$ we get $v^{\sim\neg} \in V$ and $u^{\sim\neg} \in U$. Hence, $x^\neg \in U\circ V$, and so $x \notin \neg\left(U\circ V\right)$. 

For the other inclusion, suppose $x \notin \neg\left(U\circ V\right)$. Then $x^\neg \in U\circ V$. This means there exist $u \in U$ and $v \in V$ such that $\R{u}{v}{x^\neg}$. Now $u = u^{\sim - \neg \neg}$ by (7) and Lemma~\ref{lem:twiddle_minus_involutive}. Furthermore, by  Lemma~\ref{lem:equivalent_Di}, $u^{\sim - \neg \neg} = u^{\sim\neg\sim\neg}$, and hence  $u = u^{\sim\neg\sim\neg}$. Likewise, $v = v^{\sim - \neg \neg} = v^{\sim\neg\sim\neg}$. We thus have  $u^{\sim\neg\sim\neg} \in U$, $v^{\sim\neg\sim\neg} \in V$ and $\R{u^{\sim\neg\sim\neg}}{v^{\sim\neg\sim\neg}}{x^\neg}$. Applying (9) to this gives $\R{v^{\sim\neg}}{u^{\sim\neg}}{x^-}$. From $u^{\sim\neg\sim\neg} \in U$ and $v^{\sim\neg\sim\neg} \in V$ we get $u^{\sim\neg}\in -\neg U$ and $v^{\sim\neg}\in -\neg V$. It thus follows that $x^- \in -\neg V \circ -\neg U$, and so $x \notin {\sim}\left(-\neg V \circ -\neg U\right)$. 
\end{proof}

\begin{definition}
Let $\mathbf{A}=\left(A,\wedge, \vee, \cdot, 1, \sim, -, \neg\right)$ be a complete perfect DqRA. For every completely join-irreducible $a$ of $\mathbf{A}$, define $a^{\neg} = {\neg}\kappa\left(a\right)$. 
\end{definition}

\begin{lemma}\label{lem:neg_completely_join-irreducible}
Let $\mathbf{A}=\left(A,\wedge, \vee, \cdot, 1, \sim, -, \neg\right)$ be a complete  perfect DqRA. If $a$ is a completely join-irreducible, then so is $a^{\neg}$. 
\end{lemma}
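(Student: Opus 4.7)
The plan mirrors the proof of Lemma~\ref{lem:minus/tilde_completely_join_irreducible}. By the definition of $\kappa$, we have $\kappa(a) \in M^{\infty}(\mathbf{A})$ whenever $a \in J^{\infty}(\mathbf{A})$, so it suffices to show that $\neg$ sends completely meet-irreducible elements to completely join-irreducible elements. The proof for $\sim$ and $-$ appealed to the fact (from \cite{GJ2013}) that these are dual lattice isomorphisms; here the analogous fact must be established directly from the De Morgan axioms.

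The key step is to verify that $\neg$ is a dual complete lattice automorphism of $\mathbf{A}$. Involutivity ($\neg\neg a = a$) together with $\mathsf{(Dm)}$ yields that $\neg$ is an involutive, antitone bijection on $\mathbf{A}$ (antitonicity: if $a\leqslant b$ then $\neg a=\neg(a\wedge b)=\neg a\vee \neg b$, hence $\neg b\leqslant \neg a$). A standard argument then promotes this to the complete setting: if $a=\bigvee X$, then $\neg a$ is a lower bound of $\neg X$ by antitonicity; conversely, if $b$ is any lower bound of $\neg X$, then $\neg b$ is an upper bound of $X$, so $a\leqslant \neg b$ and therefore $b\leqslant \neg a$. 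Hence $\neg a=\bigwedge \neg X$, and dually $\neg\bigwedge Y=\bigvee \neg Y$.

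From this the result is immediate. If $m \in M^{\infty}(\mathbf{A})$ and $\neg m = \bigvee X$, then applying $\neg$ gives $m = \bigwedge \neg X$; complete meet-irreducibility of $m$ forces $m = \neg x$ for some $x \in X$, so $x = \neg m \in X$, and $\neg m \in J^{\infty}(\mathbf{A})$. In particular, $a^{\neg} = \neg\kappa(a) \in J^{\infty}(\mathbf{A})$. I do not expect a real obstacle: distributivity and perfectness of $\mathbf{A}$ are not invoked, only completeness and the De Morgan structure, and the reasoning is the exact analogue of that used in Lemma~\ref{lem:minus/tilde_completely_join_irreducible}, with the elementary complete-De Morgan argument above replacing the citation to \cite{GJ2013}.
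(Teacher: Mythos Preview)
Your proposal is correct and follows essentially the same approach as the paper, which in fact states the lemma without proof, leaving it implicit by analogy with Lemma~\ref{lem:minus/tilde_completely_join_irreducible}. You appropriately supply the one detail that the analogy does not literally cover: whereas the paper cites \cite{GJ2013} for $\sim$ and $-$ being dual lattice isomorphisms, you correctly derive the corresponding fact for $\neg$ directly from involutivity and $\mathsf{(Dm)}$, and then lift it to complete joins and meets.
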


\begin{proposition}\label{prop:DqRA-frame_from_perfect_qRA}
Let $\mathbf{A}=(A,\wedge, \vee, \cdot, 1, \sim, -, \neg)$ be a complete perfect DqRA. 
Define $I_1$,
$\preccurlyeq$, $\RR$, $^\sim$, $^-$ as in Proposition~\ref{prop:DInFL-frame_from_perfect_InFL-algebra} and for all $a \in J^{\infty}\left(\mathbf A\right)$, define 
$a^{\neg} = \neg \kappa(a)$.
Then the structure $\mathbf{A}_+ = \left(J^{\infty}\left(\mathbf A\right), I_1, \preccurlyeq, \RR, {^\sim}, {^-}, {^\neg}\right)$ is a DqRA-frame. 
\end{proposition}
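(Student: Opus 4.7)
The plan is to verify only conditions (7), (8), and (9) of Definition~\ref{def:DqRA_frames}, since (1)--(6) already hold by Proposition~\ref{prop:DInFL-frame_from_perfect_InFL-algebra} applied to the underlying DInFL-structure, and Lemma~\ref{lem:neg_completely_join-irreducible} ensures that $^\neg$ is a well-defined operation on $J^{\infty}(\mathbf{A})$. The first step will be to establish the preliminary identity $\kappa(\neg\kappa(j)) = \neg j$ for every $j \in J^{\infty}(\mathbf{A})$. This follows from the defining property $a \leqslant \kappa(j) \iff j \not\leqslant a$ together with $\neg$ being an involutive dual lattice automorphism: $\neg\kappa(j) \not\leqslant a$ is equivalent to $j \leqslant \neg a$, hence to $a \leqslant \neg j$, whence $\kappa(\neg\kappa(j)) = \bigvee\{a : a \leqslant \neg j\} = \neg j$. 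I will also use the analogous identity $\kappa(\sim\kappa(j)) = \sim j$, which is already implicit in the proof of Proposition~\ref{prop:DInFL-frame_from_perfect_InFL-algebra}.

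With these in hand, conditions (7) and (8) are immediate. For (7), $x^{\neg\neg} = \neg\kappa(\neg\kappa(x)) = \neg\neg x = x$. For (8), $x \preccurlyeq y$ means $y \leqslant x$ in $\mathbf{A}$; since $\kappa$ is order-preserving and $\neg$ is order-reversing, applying them in sequence gives $x^\neg \leqslant y^\neg$, which is $y^\neg \preccurlyeq x^\neg$.

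Condition (9) unpacks to $-\kappa(z) \leqslant x \cdot y \iff \neg\kappa(z) \leqslant y^{\sim\neg} \cdot x^{\sim\neg}$, and is where the main work lies. I will first simplify the right-hand product: by the preliminary identity, $y^{\sim\neg} = \neg\kappa(y^\sim) = \neg\sim y = -\neg y$, where the final equality is $\mathsf{(Di)}$ from Lemma~\ref{lem:Di-in-qRA}; similarly $x^{\sim\neg} = -\neg x$. Next, using $\mathsf{(Dp)}$ together with $a + b = \sim(-b \cdot -a)$ and $-\sim = \mathrm{id}$, one obtains $-\neg(x \cdot y) = -(\neg x + \neg y) = -\sim(-\neg y \cdot -\neg x) = -\neg y \cdot -\neg x$. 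The claim thus reduces to $-\kappa(z) \leqslant x \cdot y \iff \neg\kappa(z) \leqslant -\neg(x \cdot y)$, which I will verify by applying $\neg$ then $-$ to either side and invoking the derived identity $\neg = -\neg-$ (obtained from $\mathsf{(Di)}$ by substituting $a \mapsto -a$ and using $\sim- = \mathrm{id}$). The hard part will be this final equivalence: the three negations and $\mathsf{(Dp)}$ must be orchestrated carefully, though no principles beyond Lemmas~\ref{lem:Di-in-qRA} and~\ref{lem:twiddle_minus_involutive} are needed.
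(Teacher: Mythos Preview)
Your proof is correct, and your route differs from the paper's in a way worth noting. The paper never isolates the identity $\kappa(\neg\kappa(j)) = \neg j$; instead, for (7) and (8) it argues directly by contradiction using the definition of $\kappa$, and for (9) it runs a chain of equivalences that applies $\mathsf{(Dp)}$ to $\neg(b^{\sim\neg}\cdot a^{\sim\neg})$ first, unwinds $\neg(b^{\sim\neg}) = \kappa(b^\sim)$ via the definition, rewrites the resulting $+$ as $\sim(-(\cdot)\cdot -(\cdot))$, and finishes with $a^{\sim-} = a$ from Lemma~\ref{lem:twiddle_minus_involutive}(i). Your approach instead simplifies the factors $y^{\sim\neg}$ and $x^{\sim\neg}$ to $-\neg y$ and $-\neg x$ up front using the $\kappa$-identity and $\mathsf{(Di)}$, computes their product as $-\neg(x\cdot y)$ via $\mathsf{(Dp)}$, and then closes the equivalence with $\neg = -\neg-$. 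Both arguments use the same ingredients, but factoring out $\kappa(\neg\kappa(j)) = \neg j$ (and its $\sim$ analogue) as you do makes (7) and (8) one-liners and makes the algebraic manipulations in (9) more transparent; the paper's version avoids stating that auxiliary identity but pays for it with longer element-level arguments for (7) and (8).
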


\begin{proof}
We only have to prove that $\mathbf{A}_+$ satisfies conditions (7) to (9) of Definition~\ref{def:DqRA_frames}.
We first show that $a^{\neg\neg} = a$. To prove that $a \preccurlyeq a^{\neg\neg}$, we have to show that $\neg\kappa\left(a^\neg\right)
\leqslant a$, i.e., $\neg a \leqslant \kappa\left(a^\neg\right)$.
If we can show that $a^\neg \not\leqslant \neg a$, we would be done, so suppose towards contradiction that $a^\neg \leqslant \neg a$. Then $\neg\kappa(a)
\leqslant \neg a$, and so $a \leqslant \kappa\left(a\right)$
Since $a$ is completely join-prime, there is some $s \in A$ such that $a \leqslant s$ and $a \not\leqslant s$, a contradiction. 

To prove that $a^{\neg\neg} \preccurlyeq a$, we have to show that $a \leqslant \neg \kappa\left(a^\neg\right)$,
which is equivalent to showing that 
$\kappa\left(a^\neg\right) \leqslant \neg a$. Let $b$ be an arbitrary element of $A$ such that $a^\neg \not\leqslant b$. Then $\neg\kappa(a)
\not\leqslant b$, and therefore $\neg b \not\leqslant \kappa(a)$
This means $\neg b \not\leqslant s$ for all $s \in A$ such that $a \not\leqslant s$. It follows that it must be the case that $b \leqslant \neg a$; for otherwise, $a \not\leqslant \neg b$, which means $\neg b\not\leqslant \neg b$, a contradiction. Since $b$ was an arbitrary element of $A$ we have $b \leqslant \neg a$ for all $b \in A$ such that 
$a^\neg \not\leqslant b$, and hence $\kappa\left(a^\neg\right)
\leqslant \neg a$, as required. 

To see that $^\neg$ is order-reversing, assume $a \preccurlyeq b$. Then we have $b \leqslant a$. We have to show that $a^\neg \leqslant b^{\neg}$; that is, we have to show that $\neg\kappa(a) \leqslant \neg\kappa(b)$. 
This is equivalent to showing that $\kappa(b) \leqslant \kappa(a)$.
Let $s$ be an arbitrary element of $A$ such that $b \not\leqslant s$. Then it must be the case that $a \not\leqslant s$; for otherwise, $b \leqslant a \leqslant s$, a contradiction. It follows that $s \leqslant \kappa(a)$.
Hence, since $s$ was arbitrary, $s \leqslant \kappa(a)$
for all $s \in A$ such that $b \not\leqslant s$. This proves that $\kappa(b) \leqslant \kappa(a)$, as required. 

Finally, we show that condition (9) holds 
for all $a, b, c \in J^{\infty}\left(\mathbf{A}\right)$ by the following chain of logical equivalences: 
\begin{align*}
\R{b^{\sim\neg}}{a^{\sim\neg}}{c^{\neg}} &\iff
c^\neg \leqslant b^{\sim\neg}\cdot a^{\sim\neg}\\
&\iff \neg \kappa(c)
\leqslant b^{\sim\neg}\cdot a^{\sim\neg}\\
&\iff \neg\left(b^{\sim\neg}\cdot a^{\sim\neg}\right) \leqslant \kappa(c)\\
&\iff
\neg\left(b^{\sim\neg}\right)+ \neg \left(a^{\sim\neg}\right) \leqslant \kappa(c)\quad\text{by }\mathsf{(Dp)}\\
&\iff 
\neg\neg \kappa\left(b^{\sim}\right) 
+ \neg\neg \kappa\left(a^{\sim}\right)
\leqslant \kappa(c)\\
&\iff
\kappa\left(b^{\sim}\right) + \kappa\left(a^{\sim}\right) 
\leqslant 
\kappa(c)\\
&\iff
{\sim}\left(-\kappa\left(a^\sim\right)
\cdot - \kappa\left(b^{\sim}\right)\right)
\leqslant \kappa(c)\\
&\iff
- \kappa\left(c\right)
\leqslant -\kappa\left(a^\sim\right)
\cdot -\kappa\left(b^\sim\right)\\
&\iff c^- \leqslant a^{\sim -}\cdot b^{\sim -}\\
&\iff \R{a}{b}{c^-}. 
\end{align*}
\end{proof}

\begin{theorem}\label{thm:DqRA-algebras_duality}
If $\mathbf{A}$
is a complete perfect DqRA, then $\mathbf{A} \cong  \left(\mathbf{A}_+\right)^+$. 
\end{theorem}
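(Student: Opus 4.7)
The plan is to exploit the result already established in Theorem~\ref{thm:DInFl-algebras_duality}: the map $\psi : A \to \mathsf{Up}(J^\infty(\mathbf{A}), \preccurlyeq)$ defined by $\psi(a) = \{j \in J^\infty(\mathbf{A}) \mid j \leqslant a\}$ is already a DInFL-algebra isomorphism between $\mathbf{A}$ (viewed as a DInFL-algebra by forgetting $\neg$) and the complex algebra of $\mathbf{A}_+$ (viewed as a DInFL-frame by forgetting ${}^\neg$). Since the underlying lattice, monoid, and linear negation structure are preserved, it suffices to verify that this same $\psi$ also satisfies $\psi(\neg a) = \neg \psi(a)$ for every $a \in A$, where $\neg$ on the right is the operation on $\mathsf{Up}(J^\infty(\mathbf{A}), \preccurlyeq)$ defined in Proposition~\ref{prop:complex_algebra_DqRA-frame}.

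The key technical tool is the standard characterisation of the order isomorphism $\kappa$ in a complete perfect distributive lattice, namely that for every $j \in J^\infty(\mathbf{A})$ and every $b \in A$ one has $j \leqslant b$ if and only if $b \not\leqslant \kappa(j)$; this was used implicitly in several places in the proof of Theorem~\ref{thm:DInFl-algebras_duality}. Combined with the fact that the De Morgan operation $\neg$ is an order-reversing involution, we obtain the basic equivalence $\neg \kappa(j) \leqslant a \iff \neg a \leqslant \kappa(j) \iff j \not\leqslant \neg a$. Unwinding the definitions, $j \in \neg \psi(a)$ iff $j^\neg = \neg\kappa(j) \notin \psi(a)$ iff $\neg \kappa(j) \not\leqslant a$, and by the preceding chain this holds iff $j \leqslant \neg a$, i.e.\ iff $j \in \psi(\neg a)$. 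This yields $\psi(\neg a) = \neg \psi(a)$ directly.

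The argument runs along the same lines as the proofs that $\psi({\sim} a) = {\sim}\psi(a)$ and $\psi(-a) = -\psi(a)$ given in Theorem~\ref{thm:DInFl-algebras_duality}, and if one prefers to mirror that style one can argue the two inclusions separately: for $\psi(\neg a) \subseteq \neg \psi(a)$, assume $j \leqslant \neg a$ and $j^\neg \in \psi(a)$ for contradiction, so that $\neg \kappa(j) \leqslant a$ gives $\neg a \leqslant \kappa(j)$ and hence $j \leqslant \kappa(j)$, contradicting the complete join-primeness of $j$; for the reverse inclusion, $\neg \kappa(j) \not\leqslant a$ forces $\neg a \not\leqslant \kappa(j)$, and applying the contrapositive of the characterisation of $\kappa(j)$ to $s = \neg a$ yields $j \leqslant \neg a$. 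There is no real obstacle here, since the work is purely formal once the DInFL duality is in hand and Lemma~\ref{lem:neg_completely_join-irreducible} ensures that $j^\neg$ lies in $J^\infty(\mathbf{A})$ so that the operation $\neg$ on the frame is well defined; the only mildly subtle point is keeping track of the interaction between the involutivity of $\neg$ and the Galois-style property of $\kappa$, which is precisely what the two-line equivalence above packages.
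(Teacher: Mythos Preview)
Your proposal is correct and takes essentially the same approach as the paper: the paper's proof simply states that it remains to show $\psi(\neg a) = \neg\psi(a)$ and that this is analogous to the verification of $\psi({\sim}a) = {\sim}\psi(a)$ in Theorem~\ref{thm:DInFl-algebras_duality}. You have supplied exactly that argument, in fact in more detail than the paper itself, including both the slick direct equivalence via the $\kappa$-characterisation and the two-inclusion version mirroring the $\sim$ case.
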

\begin{proof}
All that is left here to do is to show that the map $\psi: A \to \mathsf{Up}\left(J^{\infty}\left(\mathbf{A}\right), \preccurlyeq\right)$ defined by $\psi\left(a\right) = \left\{j \in J^\infty\left(\mathbf{A}\right) \mid j \leqslant a\right\}$ preserves $\neg$, and the proof of this is analogous to the proof in Theorem~\ref{thm:DInFl-algebras_duality} that $\psi\left({\sim}a\right) = {\sim}\psi\left(a\right)$. 
\end{proof}

The theorem below follows using the same map as in Theorem~\ref{thm:DInFL-frames_duality}. 

\begin{theorem}\label{thm:DqRA-frames_duality}
If $\mathbb{W}= \left(W, I,\preccurlyeq, \RR, {^{\sim}}, {^{-}}, {^\neg}\right)$ is a DqRA-frame, then $\mathbb{W} \cong \left(\mathbb{W}^+\right)_+$.     
\end{theorem}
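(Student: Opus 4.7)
The plan is to leverage Theorem~\ref{thm:DInFL-frames_duality}: the map $\eta: W \to J^{\infty}(\mathbb{W}^+)$ defined by $\eta(x) = {\uparrow}x$ has already been shown to be a bijective correspondence that preserves $\preccurlyeq$, $I$, $\circ$, $^\sim$ and $^-$. Since a DqRA-frame adds only the single new operation $^\neg$ (subject to conditions (7)--(9)), the only thing left to verify is that $\eta$ preserves $^\neg$, i.e., $\uparrow(x^\neg) = (\uparrow x)^\neg$ for every $x \in W$.

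The verification is a direct computation, entirely analogous to the one carried out for $^\sim$ at the end of the proof of Theorem~\ref{thm:DInFL-frames_duality}. By Proposition~\ref{prop:DqRA-frame_from_perfect_qRA} applied to $\mathbb{W}^+$, we have $(\uparrow x)^\neg = \neg \kappa(\uparrow x)$, where $\kappa(\uparrow x) = \bigcup\{U \in \mathsf{Up}(W,\preccurlyeq) \mid {\uparrow}x \not\subseteq U\}$. Unfolding the definition of $\neg$ on upsets from Proposition~\ref{prop:complex_algebra_DqRA-frame}, I would compute
\begin{align*}
({\uparrow}x)^\neg &= \{w \in W \mid w^\neg \notin \kappa({\uparrow}x)\} \\
&= \{w \in W \mid \text{for all } U \in \mathsf{Up}(W,\preccurlyeq),\ {\uparrow}x \not\subseteq U \Longrightarrow w^\neg \notin U\} \\
&= \{w \in W \mid \text{for all } U \in \mathsf{Up}(W,\preccurlyeq),\ {\uparrow}w^\neg \subseteq U \Longrightarrow {\uparrow}x \subseteq U\} \\
&= \{w \in W \mid {\uparrow}x \subseteq {\uparrow}w^\neg\} \\
&= \{w \in W \mid w^\neg \preccurlyeq x\}.
\end{align*}
The final step is to turn $w^\neg \preccurlyeq x$ into $x^\neg \preccurlyeq w$. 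This is where the DqRA-frame axioms come in: condition (7) gives involutivity of $^\neg$ and condition (8) gives its order-reversing property, which together yield $w^\neg \preccurlyeq x \iff x^\neg \preccurlyeq w^{\neg\neg} \iff x^\neg \preccurlyeq w$. Hence $({\uparrow}x)^\neg = \{w \in W \mid x^\neg \preccurlyeq w\} = {\uparrow}(x^\neg)$, as desired.

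There is essentially no obstacle beyond bookkeeping: condition (9) plays no role in this theorem because it is already built into the definition of $\neg$ on upsets via Proposition~\ref{prop:complex_algebra_DqRA-frame}, and the dual frame recovered from $\mathbb{W}^+$ inherits (9) automatically from Proposition~\ref{prop:DqRA-frame_from_perfect_qRA}. The only care needed is to cite the right axioms (7) and (8) at the final step, mirroring the use of Lemma~\ref{lem:twiddle_minus_involutive}(i),(ii) in the $^\sim$ case.
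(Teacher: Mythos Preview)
Your proposal is correct and follows exactly the paper's approach: the paper's proof simply states that one needs to check ${\uparrow}(x^\neg)=({\uparrow}x)^\neg$ by an argument analogous to Theorem~\ref{thm:DInFL-frames_duality} using conditions (7) and (8) of Definition~\ref{def:DqRA_frames}, and you have carried out precisely that computation in full detail.
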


\begin{proof}
All that needs to be checked is that ${\uparrow}(x^\neg)=(\uparrow x)^\neg$. This follows using an argument similar to that in Theorem~\ref{thm:DInFL-frames_duality} and  using (7) and (8) from Definition~\ref{def:DqRA_frames}. 
\end{proof}

\section{Morphisms for DInFL- and DqRA-frames}\label{sec:frame-morphisms}

In this section, we define morphisms for DInFL- and DqRA-frames and outline the duality between complete homomorphisms on complete perfect DInFL-algebras (resp. complete perfect DqRAs) and morphisms on DInFL-frames (resp. DqRA-frames).

\begin{definition}\label{def:DInFL-frame-morphism}
Let $\mathbb{W}_1= \left(W_1, I_1,\preccurlyeq_1, \RR_1, {^{\sim_1}}, {^{-_1}}\right)$
and 
$\mathbb{W}_2= \left(W_2, I_2,\preccurlyeq_2, \RR_2, {^{\sim_2}}, {^{-_2}}\right)$ be DInFL-frames. 
A function $f: W_1 \to W_2$ is a \emph{DInFL-frame morphism} if the following conditions hold: 
\begin{enumerate}
[label=\textup{(\arabic*)}]
\item $x \preccurlyeq_1 y \Longrightarrow f(x) \preccurlyeq_2 f(y)$ 
\item  $z \in x\circ_1 y \Longrightarrow f(z) \in f(x)\circ_2 f(y)$
\item $f(z) \in u \circ_2 v \Longrightarrow  (\exists x, y \in W_1)(u \preccurlyeq_2 f(x), v \preccurlyeq_2 f(y)$ and $z \in  x\circ_1 y)$
\item $f(x^{\sim_1})=f(x)^{\sim_2}$
\item $f(x^{-_1})=f(x)^{-_2}$
\item $I_1 = f^{-1}[I_2]=\{x \in W_1\mid f(x) \in I_2\} $
\end{enumerate}
\end{definition}

A DInFL-frame morphism from a DInFL-frame $\mathbb{W}_1$ to a DInFL-frame $\mathbb{W}_2$ gives rise to a complete homomorphism from the complex algebra $\mathbb{W}^+_2$ of $\mathbb{W}_2$ to the complex algebra $\mathbb{W}^+_1$ of $\mathbb{W}_1$. The lemma below  says that the preimage of an upset of $(W_2, \preccurlyeq_2)$ under this morphism is an upset of $(W_1, \preccurlyeq_1)$, suggesting that the homomorphism we are looking for is the map sending an upset of $(W_2, \preccurlyeq_2)$ to its preimage under this morphism. 

\begin{lemma}\label{inv(f)_upsets_to_upsets}
Let $\mathbb{W}_1= \left(W_1, I_1,\preccurlyeq_1, \RR_1, {^{\sim_1}}, {^{-_1}}\right)$
and 
$\mathbb{W}_2= \left(W_2, I_2,\preccurlyeq_2, \RR_2, {^{\sim_2}}, {^{-_2}}\right)$ be DInFL-frames. If $f: W_1 \to W_2$ is a DInFL-frame morphism and $U_2$ is an upset of $(W_2, \preccurlyeq_2)$, then the preimage $f^{-1}[U_2] = \{u_1 \in W_1\mid f(u_1) \in U_2\}$ is an upset of $(W_1, \preccurlyeq_1)$. 
\end{lemma}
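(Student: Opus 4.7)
The plan is to unpack the definition of an upset and use condition (1) of the DInFL-frame morphism definition (the monotonicity of $f$).

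Specifically, I would take an arbitrary $x \in f^{-1}[U_2]$ and any $y \in W_1$ with $x \preccurlyeq_1 y$, and show $y \in f^{-1}[U_2]$. By definition of preimage, $f(x) \in U_2$. Applying condition (1) of Definition~\ref{def:DInFL-frame-morphism} to $x \preccurlyeq_1 y$ yields $f(x) \preccurlyeq_2 f(y)$. Since $U_2$ is an upset of $(W_2, \preccurlyeq_2)$ and $f(x) \in U_2$ with $f(x) \preccurlyeq_2 f(y)$, we conclude $f(y) \in U_2$, hence $y \in f^{-1}[U_2]$.

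There is no real obstacle here — the lemma is essentially a direct transport of the upset condition across a monotone map, and none of the more delicate structure (the operation $\circ$, the negations $^{\sim}, ^{-}$, or the designated set $I$) plays any role. The only conditions from Definition~\ref{def:DInFL-frame-morphism} that are needed are (1) (monotonicity) together with the elementary fact that preimages commute with the relation of "membership in $U_2$". The proof should be just a couple of lines.
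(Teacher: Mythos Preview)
Your proposal is correct and matches the paper's proof essentially line for line: the paper also takes $u_1 \in f^{-1}[U_2]$ with $u_1 \preccurlyeq_1 v_1$, applies condition~(1) of Definition~\ref{def:DInFL-frame-morphism} to get $f(u_1) \preccurlyeq_2 f(v_1)$, and then uses that $U_2$ is an upset to conclude $v_1 \in f^{-1}[U_2]$. Your observation that only monotonicity is needed is exactly right.
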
 

\begin{proof}
Let $u_1 \in f^{-1}[U_2]$  and  $u_1 \preccurlyeq_1 v_1$. The first part implies $f(u_1) \in U_2$, and the second part implies $f(u_1) \preccurlyeq_2 f(v_1)$ (by (1) of Definition~\ref{def:DInFL-frame-morphism}). Hence, since $U_2$ is an upset of $(W_2, \preccurlyeq_2)$, it follows that $f(v_1) \in U_2$, which means $v_1 \in f^{-1}[U_2]$. 
\end{proof}

\begin{definition}\label{def:dual_morphism_compl_alg}
Let $\mathbb{W}_1= \left(W_1, I_1,\preccurlyeq_1, \RR_1, {^{\sim_1}}, {^{-_1}}\right)$
and 
$\mathbb{W}_2= \left(W_2, I_2,\preccurlyeq_2, \RR_2, {^{\sim_2}}, {^{-_2}}\right)$ be DInFL-frames, and let $f: W_1 \to W_2$ be  a DInFL-frame morphism. The \emph{dual} of $f$ is the map $f^{+}: \mathsf{Up}\left(W_2, \preccurlyeq_2\right) \to \mathsf{Up}\left(W_1, \preccurlyeq_1\right)$ defined by $f^{+}(U_2)=f^{-1}[U_2]$, 
for all $U_2 \in \mathsf{Up}\left(W_2, \preccurlyeq_2\right)$.
\end{definition}

Recall that a map $f$ between posets $(P,\leqslant_P)$ and $(Q,\leqslant_Q)$ is an \emph{order-embedding} if $x \leqslant_P y$ if and only if $f(x) \leqslant_Q f(y)$. 

\begin{proposition}  
\label{prop:DInFL-frame-morphisms-duality}
Let $\mathbb{W}_1= \left(W_1, I_1,\preccurlyeq_1, \RR_1, {^{\sim_1}}, {^{-_1}}\right)$
and 
$\mathbb{W}_2= \left(W_2, I_2,\preccurlyeq_2, \RR_2, {^{\sim_2}}, {^{-_2}}\right)$ be DInFL-frames, and let $f: W_1\to W_2$ be a DInFL-frame morphism.
\begin{enumerate}
[label=\textup{(\roman*)}]
\item The map $f^{+}$ is a complete homomorphism from $\mathbb{W}^+_2$ to $\mathbb{W}^+_1$.
\item If $f$ is surjective, then $f^{+}$ is injective.
\item If $f$ is 
an order-embedding, 
then $f^{+}$ is surjective. 
\end{enumerate}
\end{proposition}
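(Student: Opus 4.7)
The plan is to establish the three parts of the proposition in the order given, leveraging the axioms in Definition~\ref{def:DInFL-frame-morphism} precisely where the corresponding algebraic operation needs to be preserved or reflected.

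For part (i), the first observation is that $f^+$, being a preimage operator, automatically preserves arbitrary intersections and unions, so once the remaining operations are handled, completeness comes for free. Preservation of the constant follows immediately from condition (6): $f^+(I_2) = f^{-1}[I_2] = I_1$. For the linear negations, the chain of equivalences $x \in f^+({\sim_2} U) \iff f(x)^{-_2} \notin U \iff f(x^{-_1}) \notin U \iff x^{-_1} \notin f^+(U) \iff x \in {\sim_1} f^+(U)$ uses condition (5), and the dual computation using (4) handles $-$. The interesting step is preservation of $\circ$, which is exactly what axioms (2) and (3) of the morphism definition are tailored for: the forward inclusion $f^+(U) \circ_1 f^+(V) \subseteq f^+(U \circ_2 V)$ is an easy application of (2) combined with the fact that $U, V$ are upsets, while the reverse inclusion needs (3) to ``lift'' a witness $f(x) \in u \circ_2 v$ back to witnesses in $W_1$, after which upset-closure of $U,V$ ensures that these lifted witnesses land inside $U$ and $V$.

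Part (ii) is a one-line argument: if $U \neq V$ are distinct upsets in $\mathsf{Up}(W_2, \preccurlyeq_2)$, pick $u$ in their symmetric difference; by surjectivity of $f$ choose $x \in W_1$ with $f(x) = u$, and then $x$ witnesses $f^+(U) \neq f^+(V)$.

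For part (iii), given an arbitrary upset $U_1 \in \mathsf{Up}(W_1, \preccurlyeq_1)$, I will set
\[
U_2 \;=\; {\uparrow}f[U_1] \;=\; \{\, w \in W_2 \mid f(x) \preccurlyeq_2 w \text{ for some } x \in U_1 \,\},
\]
which is evidently an upset of $(W_2, \preccurlyeq_2)$, and then verify $f^+(U_2) = U_1$. The inclusion $U_1 \subseteq f^+(U_2)$ is trivial via $x = y$ and reflexivity of $\preccurlyeq_2$. The reverse inclusion is where the order-embedding hypothesis does the work: if $f(x) \in U_2$, there is some $y \in U_1$ with $f(y) \preccurlyeq_2 f(x)$, and since $f$ is an order-embedding this forces $y \preccurlyeq_1 x$, so $x \in U_1$ because $U_1$ is an upset. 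The main obstacle throughout is really in (i), specifically ensuring that the interplay of axioms (2) and (3) with the upset condition correctly captures the complex-algebra product; the other items are essentially bookkeeping.
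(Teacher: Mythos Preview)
Your proposal is correct and follows essentially the same route as the paper: preimage handles arbitrary meets and joins, condition~(6) gives $f^+(I_2)=I_1$, conditions~(4)/(5) give the element-wise equivalence chains for the linear negations, conditions~(2) and~(3) yield the two inclusions for $\circ$, and for~(iii) you choose exactly the same witness $U_2={\uparrow}f[U_1]$ and use the order-embedding hypothesis for the nontrivial inclusion. One tiny inaccuracy in your narrative: the inclusion $f^+(U)\circ_1 f^+(V)\subseteq f^+(U\circ_2 V)$ needs only~(2), not the upset property of $U,V$; upset-closure is used solely in the reverse inclusion after applying~(3).
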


\begin{proof}
(i) Proving that $f^+$ preserves arbitrary intersections and unions is routine. By (6) of Definition \ref{def:DInFL-frame-morphism}, we have $f^+(I_2) = f^{-1}[I_2] = I_1$. To see that $f^+$ preserves $\circ_2$, let $w_1 \in f^{+}(U_2 \circ_2V_2)$. Then $f(w_1) \in U_2 \circ_2 V_2$. Hence, there are $u_2 \in U_2$ and $v_2 \in V_2$ such that $f(w_1) \in u_2\circ_2 v_2$. By item (3) of Definition~\ref{def:DInFL-frame-morphism}, there are $u_1, v_1\in W_1$ such that $u_2 \preccurlyeq_2 f(u_1)$, $v_2\preccurlyeq_2 f(v_1)$ and $w_1 \in u_1\circ_1 v_1$. Since $U_2$ is an upset of $(W_2, \preccurlyeq_2)$, it follows  that $f(u_1) \in U_2$, and so $u_1 \in f^{+}(U_2)$. Likewise, $f(v_1) \in V_2$, so $v_1 \in f^{+}(V_2)$. Hence, since $w_1 \in u_1\circ_1 v_1$, it follows that $w_1 \in f^{+}(U_2)\circ_1 f^{+}(V_2)$. This shows that $f^{+}(U_2\circ_2 V_2) \subseteq f^{+}(U_2)\circ_1 f^{+}(V_2)$. 

For the other containment, let $w_1 \in f^{+}(U_2)\circ_1 f^{+}(V_2)$. Then there exist $u_1 \in f^{+}(U_2)$ and $v_1 \in f^{+}(V_2)$ such that $w_1 \in  u_1 \circ_1 v_1$. Thus, $f(u_1) \in U_2$, $f(v_1) \in V_2$ and $w_1 \in u_1\circ_1 v_1$. Applying (2) of Definition~\ref{def:DInFL-frame-morphism} to the last part gives $f(w_1) \in f(u_1)\circ_2 f(v_1)$. Hence, $f(w_1) \in U_2 \circ_2 V_2$, and so $w_1 \in f^{+}(U_2 \circ V_2)$. 

Next we show that $f^{+}$ preserves ${\sim}_2$. Using (4) of Definition~\ref{def:DInFL-frame-morphism}, we get $w_1 \in f^{+}({\sim}_2U_2)$ iff $f(w_1) \in {\sim}_2U_2$ iff $f(w_1)^{-_{2}} \notin U_2$ iff $f(w_1^{-_1}) \notin U_2$ iff $w_1^{-_1} \notin f^{+}(U_2)$ iff $w_1 \in {\sim}_1f^{+}(U_2)$.

Likewise, using (5) of Definition~\ref{def:DInFL-frame-morphism}, we can show that $f^{+}$ preserves $-_2$.

(ii) Let $U_2, V_2 \in \mathsf{Up}\left(W_2, \preccurlyeq_2\right)$ such that $U_2 \neq V_2$. Then, without loss of generality, there is some $w_2\in W_2$ such that $w_2 \in U_2$ while $w_2 \notin V_2$. Since $f$ is surjective, there is some $w_1 \in W_1$ such that $f(w_1)= w_2$. Hence, $f(w_1) \in U_2$ while $f(w_1) \notin V_2$, and so $w_1 \in f^{+}(U_2)$ while $w_1 \notin f^{+}(V_2)$. This shows $f^{+}(U_2) \neq f^{+}(V_2)$. 


(iii) Let $U_1 \in  \mathsf{Up}\left(W_1, \preccurlyeq_1\right)$. We have to find some $U_2 \in \mathsf{Up}\left(W_2, \preccurlyeq_2\right)$ such that $f^{+}(U_2) = U_1$. Set $U_2 = {\uparrow}f[U_1]= {\uparrow}\{f(u_1) \mid u_1 \in U_1\}$. To see that $f^{+}(U_2) = U_1$, let $u_1 \in U_1$. Then $f(u_1) \in f[U_1]$, and so, since $f(u_1) \preccurlyeq_2 f(u_1)$, we have $f(u_1) \in U_2$. This shows $u_1 \in f^{-1}[U_2] = f^+(U_2)$, and hence $U_1 \subseteq f^{+}(U_2)$. Now let $u_1 \in f^{+}(U_2) = f^{-1}[U_2]$. Then we have $f(u_1) \in U_2 = {\uparrow}f[U_1]$. Hence, $f(v_1) \preccurlyeq_2 f(u_1)$ for some $v_1 \in U_1$. But $f$ is an order-embedding, so $v_1 \preccurlyeq_1 u_1$. Since $U_1 \in \mathsf{Up}\left(W_1, \preccurlyeq_1\right)$, it follows that $u_1 \in U_1$, and therefore $f^{+}(U_2) \subseteq U_1$. 
\end{proof}

In the opposite direction, a complete homomorphism from a complete perfect DInFL-algebra $\mathbf A$ to a complete perfect DInFL-algebra $\mathbf{B}$ gives rise to a DInFL-frame morphism from the DInFL-frame of completely join-irreducibles of $\mathbf{B}$ to the DInFL-frame of completely join-irreducibles of $\mathbf{A}$. We will need the following lemma:

\begin{lemma}\label{lem:interaction-h_and_invh}
Let $\mathbf{A}=(A,\wedge^\mathbf{A}, \vee^\mathbf{A}, \cdot^\mathbf{A}, 1^\mathbf{A}, \sim^\mathbf{A}, -^\mathbf{A})$ and $\mathbf{B}=(B,\wedge^\mathbf{B}, \vee^\mathbf{B}, \cdot^\mathbf{B}, 1^\mathbf{B}, \sim^\mathbf{B}, -^\mathbf{B})$ be complete perfect DInFL-algebras, and let $h: \mathbf A \to \mathbf B$ be a complete homomorphism. 
\begin{enumerate}
[label=\textup{(\roman*)}]
\item If $b \in B$, then $b \leqslant_\mathbf{B}h\left(\bigwedge h^{-1}[{\uparrow}b]\right) = h\left(\bigwedge \{a\in A\mid b \leqslant_{\mathbf{B}} h(a)\}\right)$.
\item If $a \in A$, $b \in B$ and $b\leqslant_\mathbf{B} h(a)$, then $\bigwedge h^{-1}[{\uparrow}b] \leqslant_\mathbf{A} a$.
\item If $b \in J^\infty(\mathbf{B})$, then $\bigwedge h^{-1}[{\uparrow}b] \in J^\infty(\mathbf{A})$.
\item If $a \in A$, $b \in J^{\infty}(\mathbf{B})$ and $\bigwedge h^{-1}[{\uparrow}b] \leqslant_\mathbf{A} a$, then $b\leqslant_\mathbf{B} h(a)$. 
\end{enumerate}
\end{lemma}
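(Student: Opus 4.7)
The plan is to prove the four items in sequence, exploiting the fact that $h$, being a complete homomorphism between complete perfect DInFL-algebras, preserves arbitrary meets and arbitrary joins. The key conceptual observation is that in a complete perfect distributive lattice, completely join-irreducible elements are exactly the completely join-prime elements (as noted in the paragraph following the theorem quoted from Gehrke--J\'onsson), so (iii) reduces to a join-primeness check rather than requiring a direct irreducibility argument.

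For (i), I would note that by definition every $a \in h^{-1}[{\uparrow}b]$ satisfies $b \leqslant_\mathbf{B} h(a)$. Since $h$ preserves arbitrary meets, $h(\bigwedge h^{-1}[{\uparrow}b]) = \bigwedge\{h(a) \mid a \in h^{-1}[{\uparrow}b]\}$, and this meet of elements above $b$ lies above $b$. For (ii), the assumption $b \leqslant_\mathbf{B} h(a)$ says precisely that $a \in h^{-1}[{\uparrow}b]$, so $\bigwedge h^{-1}[{\uparrow}b] \leqslant_\mathbf{A} a$ trivially. Items (i) and (ii) express a Galois-style adjunction between $h$ and the map $b \mapsto \bigwedge h^{-1}[{\uparrow}b]$, which is essentially what drives (iii) and (iv).

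For (iii), set $c = \bigwedge h^{-1}[{\uparrow}b]$ and suppose $c \leqslant_\mathbf{A} \bigvee X$ for some $X \subseteq A$. Applying $h$ (which preserves joins) and then (i) gives $b \leqslant_\mathbf{B} h(c) \leqslant_\mathbf{B} \bigvee h[X]$. Since $b \in J^\infty(\mathbf{B})$ is completely join-prime, $b \leqslant_\mathbf{B} h(x)$ for some $x \in X$, and then (ii) yields $c \leqslant_\mathbf{A} x$. Thus $c$ is completely join-prime; taking $X = \emptyset$ also forces $c \ne \bot$, so $c \in J^\infty(\mathbf{A})$. Finally, for (iv), combine (i) with the monotonicity of $h$: from $\bigwedge h^{-1}[{\uparrow}b] \leqslant_\mathbf{A} a$ we get $b \leqslant_\mathbf{B} h(\bigwedge h^{-1}[{\uparrow}b]) \leqslant_\mathbf{B} h(a)$, and the hypothesis $b \in J^\infty(\mathbf{B})$ is not even needed here.

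The only real obstacle is (iii), and it is mild: one must remember to invoke completely join-primeness (not merely irreducibility) for $b$, so that the preservation of joins by $h$ can be leveraged to transport an inequality downstairs through an arbitrary supremum. Everything else is bookkeeping on the adjunction between $h$ and $b \mapsto \bigwedge h^{-1}[{\uparrow}b]$.
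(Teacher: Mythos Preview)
Your proof is correct and follows essentially the same approach as the paper for (i)--(iii): both exploit preservation of arbitrary meets for (i), the trivial membership observation for (ii), and for (iii) both apply (i), push through $h$ using preservation of joins, and invoke complete join-primeness of $b$. The only cosmetic difference in (iii) is that the paper verifies complete join-\emph{irreducibility} directly (assuming $c=\bigvee S$ and showing $c\in S$) while you verify complete join-\emph{primeness} (assuming $c\leqslant\bigvee X$ and finding $x\in X$ with $c\leqslant x$); in a complete perfect distributive lattice these coincide, as you note.

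Your argument for (iv) is genuinely simpler than the paper's. The paper invokes (iii) to get $\bigwedge h^{-1}[{\uparrow}b]\in J^\infty(\mathbf A)$ and then uses perfectness of $\mathbf A$ to write $a$ as a join of join-irreducibles before applying $h$. You instead observe that (i) plus bare monotonicity of $h$ already gives $b\leqslant_\mathbf{B} h(\bigwedge h^{-1}[{\uparrow}b])\leqslant_\mathbf{B} h(a)$, so neither (iii) nor the hypothesis $b\in J^\infty(\mathbf B)$ is needed. This is a cleaner route and shows that (iv) holds for arbitrary $b\in B$.
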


\begin{proof}
(i) Since $h$ is a complete homomorphism, proving (i) is equivalent to showing that $b \leqslant_\mathbf{B} \bigwedge h[h^{-1}[{\uparrow}b]]$. To this end, let $c$ be an arbitrary element of $B$ such that $c \in h[h^{-1}[{\uparrow}b]]$. Then there is a $d \in A$ such that $c = h(d)$ and  $d \in h^{-1}[{\uparrow} b]$. Hence, $b\leqslant_\mathbf{B} h(d)$, which means $b \leqslant_\mathbf{B} c$. Since $c$ was arbitrary, we obtain $b \leqslant \bigwedge h[h^{-1}[{\uparrow}b]]$.

(ii) Assume $b \leqslant_\mathbf{B} h(a)$. Then $a \in h^{-1}[{\uparrow}b]$, and so $\bigwedge h^{-1}[{\uparrow}b] \leqslant_\mathbf{A} a$. 

(iii) Let $S \subseteq A$ and assume $\bigwedge h^{-1}[{\uparrow} b]=\bigvee S$. By part (i), we have $b \leqslant_\mathbf{B} h\left(\bigwedge h^{-1}[{\uparrow}b]\right)$. Hence, $b \leqslant_\mathbf{B} h\left(\bigvee S\right) = \bigvee h[S]$. Since $b$ is completely join-prime, there is some $s \in S$ such that $b \leqslant_\mathbf{B} h(s)$. This gives $\bigwedge h^{-1}[{\uparrow}b] \leqslant_\mathbf{A} s$ by part (ii). Therefore, since $s \leqslant_\mathbf{A} \bigvee S$, we have $s \leqslant_\mathbf{A} \bigwedge h^{-1}[{\uparrow}b]$. It thus follows that $s = \bigwedge h^{-1}[{\uparrow}b]$, which means $\bigwedge h^{-1}[{\uparrow}b] \in S$. 

(iv) Assume $\bigwedge h^{-1}[{\uparrow}b] \leqslant_\mathbf{A} a$. Now $ \bigwedge h^{-1}[{\uparrow}b] \in J^\infty(\mathbf{A})$ by (iii), so it follows from item (i) that $
b \leqslant_\mathbf{B} h(\bigwedge h^{-1}[{\uparrow}b]) \leqslant_\mathbf{B} \bigvee\{h(j) \mid j \in J^\infty(\mathbf{A}) \textnormal{ and } j \leqslant a\} = h(\bigvee\{j \in J^\infty(\mathbf{A}) \mid j \leqslant a\} = h(a)$. 
\end{proof}

In the following definition, the \emph{dual} of a homomorphism $h$ from a complete perfect DInFL-algebra $\mathbf{A}$ to a complete perfect DInFL-algebra $\mathbf{B}$ is given. Item (iii) of Lemma~\ref{lem:interaction-h_and_invh} guarantees that this map is well-defined. 

\begin{definition}\label{def:dual_homo_compl_alg}
Let $\mathbf{A}=(A,\wedge^\mathbf{A}, \vee^\mathbf{A}, \cdot^\mathbf{A}, 1^\mathbf{A}, \sim^\mathbf{A}, -^\mathbf{A})$ and $\mathbf{B}=(B,\wedge^\mathbf{B}, \vee^\mathbf{B}, \cdot^\mathbf{B}, 1^\mathbf{B}, \sim^\mathbf{B}, -^\mathbf{B})$ be complete perfect DInFL-algebras, and let $h: \mathbf A \to \mathbf B$ be a complete homomorphism. The  
\emph{dual} of $h$ is the map $h_{+}: J^\infty\left(\mathbf{B}\right) \to J^\infty\left(\mathbf{A}\right)$ defined by $h_+(b)=\bigwedge h^{-1}[{\uparrow} b]$, for all $b \in J^\infty\left(\mathbf{B}\right)$.   \end{definition}

The following proposition asserts that the duals of complete DInFL-homomorphisms are DInFL-frame morphisms.

\begin{proposition} \label{prop:DInFL-homomorphisms-duality}
Let $\mathbf{A}=(A,\wedge^\mathbf{A}, \vee^\mathbf{A}, \cdot^\mathbf{A}, 1^\mathbf{A}, \sim^\mathbf{A}, -^\mathbf{A})$ and $\mathbf{B}=(B,\wedge^\mathbf{B}, \vee^\mathbf{B}, \cdot^\mathbf{B}, 1^\mathbf{B}, \sim^\mathbf{B}, -^\mathbf{B})$ be complete perfect DInFL-algebras, and let $h: \mathbf A \to \mathbf B$ be a complete homomorphism. 
\begin{enumerate}
[label=\textup{(\roman*)}]
\item The map $h_+$ is a DInFL-frame morphism from $\mathbf{B}_+$ to $\mathbf{A}_+$.
\item If $h$ is injective, then $h_+$ is surjective. 
\item If $h$ is surjective, then $h_+$ is 
an order-embedding.
\end{enumerate}
\end{proposition}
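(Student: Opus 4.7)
The plan is to verify all six clauses of Definition~\ref{def:DInFL-frame-morphism} for $h_+$, and then to deduce parts (ii) and (iii). The central tool throughout is the adjunction-style biconditional packaged in Lemma~\ref{lem:interaction-h_and_invh}(ii) and (iv): for every $b \in J^\infty(\mathbf{B})$ and $a \in A$, $b \leqslant_\mathbf{B} h(a) \iff h_+(b) \leqslant_\mathbf{A} a$, together with the unit $b \leqslant_\mathbf{B} h(h_+(b))$ from part (i). Recall also that $\preccurlyeq$ on both frames is the reverse of the underlying algebra order.

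Clauses (1) and (6) are immediate: the first is direct monotonicity of $b \mapsto \bigwedge h^{-1}[{\uparrow}b]$, and the second is the biconditional applied at $a=1^\mathbf{A}$, using $h(1^\mathbf{A})=1^\mathbf{B}$. For clauses (4) and (5), note that $b^\sim = {\sim}^\mathbf{B}\kappa_\mathbf{B}(b)$ is itself completely join-irreducible by Lemma~\ref{lem:minus/tilde_completely_join_irreducible}. Combining the biconditional with the defining property $x \leqslant \kappa(y) \Longleftrightarrow y \not\leqslant x$ of $\kappa$ and preservation of $\sim$ by $h$ yields, for every $a \in A$, the chain
\[
h_+(b^\sim) \leqslant a \iff b^\sim \leqslant h(a) \iff b \not\leqslant h({\sim}a) \iff h_+(b) \not\leqslant {\sim}a \iff h_+(b)^\sim \leqslant a,
\]
so the two join-irreducibles $h_+(b^\sim)$ and $h_+(b)^\sim$ coincide. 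The argument for ${}^-$ is symmetric.

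Clause (2) is the forward propagation for the monoid operation: from $c \leqslant b_1\cdot b_2$ and $b_i \leqslant h(h_+(b_i))$, monotonicity of $\cdot$ together with the biconditional gives $h_+(c) \leqslant h_+(b_1)\cdot h_+(b_2)$. Clause (3), the zig-zag condition, is the technical heart and the step I expect to be the main obstacle. Given $u,v\in J^\infty(\mathbf{A})$ with $h_+(c) \leqslant u\cdot v$, the biconditional yields $c \leqslant h(u)\cdot h(v)$; I then expand $h(u)$ and $h(v)$ as joins of their completely join-irreducible parts in $\mathbf{B}$, invoke distributivity of $\cdot$ over arbitrary joins, and use that $c$ is completely join-prime to extract $x,y \in J^\infty(\mathbf{B})$ with $c\leqslant x\cdot y$, $x \leqslant h(u)$ and $y \leqslant h(v)$. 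Applying the biconditional once more converts the last two inequalities into $h_+(x) \leqslant u$ and $h_+(y)\leqslant v$, i.e.\ $u \preccurlyeq h_+(x)$ and $v \preccurlyeq h_+(y)$ in $\mathbf{A}_+$. This step leans simultaneously on perfectness of $\mathbf{B}$, complete join-primeness in a completely distributive lattice, and infinite distributivity of the monoid operation.

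For part (ii), a complete homomorphism that is injective is automatically an order-embedding (via preservation of $\wedge$). Given $a \in J^\infty(\mathbf{A})$ we have $a \not\leqslant \kappa_\mathbf{A}(a)$, which lifts to $h(a) \not\leqslant h(\kappa_\mathbf{A}(a))$, so perfectness of $\mathbf{B}$ produces a witness $b \in J^\infty(\mathbf{B})$ with $b \leqslant h(a)$ and $b \not\leqslant h(\kappa_\mathbf{A}(a))$. The biconditional then forces $h_+(b) \leqslant a$; if this were strict, join-irreducibility of $a$ in the distributive lattice $\mathbf{A}$ would give $h_+(b) \leqslant \kappa_\mathbf{A}(a)$, hence $b \leqslant h(h_+(b)) \leqslant h(\kappa_\mathbf{A}(a))$, contradicting our choice of $b$. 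For part (iii), assume $h$ is surjective and $h_+(b_2) \leqslant_\mathbf{A} h_+(b_1)$; pick any $a\in A$ with $h(a)=b_1$, apply Lemma~\ref{lem:interaction-h_and_invh}(ii) to get $h_+(b_1)\leqslant a$, hence $h_+(b_2)\leqslant a$, and then Lemma~\ref{lem:interaction-h_and_invh}(iv) yields $b_2 \leqslant h(a)=b_1$, i.e.\ $b_1 \preccurlyeq b_2$. Combined with clause (1) from part (i), this gives the order-embedding property.
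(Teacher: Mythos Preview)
Your argument is essentially correct and in two places cleaner than the paper's, but the biconditional chain for clause (4) contains a sign slip. From $b^\sim = {\sim}^\mathbf{B}\kappa_\mathbf{B}(b) \leqslant h(a)$, applying $-^\mathbf{B}$ yields $-h(a)=h(-a) \leqslant \kappa_\mathbf{B}(b)$, hence $b \not\leqslant h(-a)$ rather than $b \not\leqslant h({\sim}a)$; similarly the last equivalence should read $h_+(b) \not\leqslant -a \iff h_+(b)^\sim \leqslant a$. With $-a$ in place of ${\sim}a$ the chain is valid (and the symmetric chain with ${\sim}a$ proves clause (5) instead). This route is more transparent than the paper's, which establishes the two inequalities $h_+(b^\sim)\leqslant (h_+(b))^\sim$ and $(h_+(b))^\sim\leqslant h_+(b^\sim)$ separately by contradiction.

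For part (ii) you also diverge from the paper. The paper shows $h(a) = h\bigl(\bigvee\{h_+(c) : c \in J^\infty(\mathbf{B}),\ c \leqslant h(a)\}\bigr)$ and then cancels $h$ by injectivity, concluding via join-irreducibility of $a$. You instead pick a single witness $b\in J^\infty(\mathbf{B})$ separating $h(a)$ from $h(\kappa_\mathbf{A}(a))$ and force $h_+(b)=a$ directly using the unit $b\leqslant h(h_+(b))$; this is valid and a little more economical. The remaining clauses (1)--(3), (6) and part (iii) match the paper's arguments essentially verbatim.
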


\begin{proof}
(i) To prove that condition (1) of Definition~\ref{def:DInFL-frame-morphism} holds, let $b_1, b_2 \in J^\infty(\mathbf B)$ and assume $b_1 \preccurlyeq_\mathbf{B} b_2$. Then $b_2 \leqslant_\mathbf{B} b_1$. We have to show that $h_{+}(b_1) \preccurlyeq_\mathbf{A} h_{+}(b_2)$, i.e., that $h_{+}(b_2) \leqslant_\mathbf{A} h_{+}(b_1)$. It is therefore enough to show that $h^{-1}[{\uparrow} b_1] \subseteq h^{-1}[{\uparrow}b_2]$. To this end, let $a \in h^{-1}[{\uparrow} b_1]$. Then $b_1 \leqslant_\mathbf{B} h(a)$, and so, since $b_2 \leqslant_\mathbf{B} b_1$, we have $b_2 \leqslant_\mathbf{B} h(a)$. Hence, $a \in h^{-1}[{\uparrow}b_2]$, as required.

For (2), let $b_1, b_2, b_3 \in J^\infty(\mathbf{B})$ and assume $b_1 \in b_2 \circ^\mathbf{B} b_3$. Then $b_1 \leqslant_\mathbf{B} b_2 \cdot^{\mathbf B} b_3$. Now, by part (i) of Lemma~\ref{lem:interaction-h_and_invh}, $b_2 \leqslant_\mathbf{B} h\left(\bigwedge h^{-1}[{\uparrow}b_2]\right)$ and $b_3 \leqslant_\mathbf{B} h\left(\bigwedge h^{-1}[{\uparrow}b_3]\right)$. Hence, 
\[
b_1 \leqslant_\mathbf{B} b_2 \cdot^{\mathbf{B}} b_3 \leqslant_\mathbf{B} h\left(\bigwedge h^{-1}[{\uparrow}b_2]\right)\cdot^\mathbf{B} h\left(\bigwedge h^{-1}[{\uparrow}b_3]\right) = h\left(h_+(b_2)\right) \cdot^{\mathbf{B}} h\left(h_{+}(b_3)\right).
\]
Since $h(h_+(b_2)) \cdot^\mathbf{B} h(h_+(b_3)) = h(h_+(b_2)\cdot^\mathbf{A} h_+(b_3))$, we have $b_1 \leqslant_\mathbf{B} h(h_+(b_2)\cdot^\mathbf{A} h_+(b_3))$. By part (ii) of Lemma~\ref{lem:interaction-h_and_invh}, $h_+(b_1) \leqslant_{\mathbf{B}} h_+(b_2)\cdot^\mathbf{A} h_+(b_3))$. This gives $h_+(b_1) \in h_+(b_2)\cdot^\mathbf{A} h_+(b_3)$. 

For the proof of (3), let $c\in J^\infty(\mathbf B), a_1,a_2\in J^\infty(\mathbf A)$ and assume $h_+(c)\in a_1\circ^\mathbf{A} a_2$, or equivalently, $h_+(c)\leqslant_\mathbf{A} a_1\cdot^\mathbf{A} a_2$. Then $c\leqslant_\mathbf{B} h(a_1\cdot^\mathbf{A} a_2) = h(a_1)\cdot^\mathbf{B} h(a_2)$, and since $\mathbf B$ is perfect,  
\begin{align*}
h(a_1)\cdot^\mathbf{B} h(a_2)&=\bigvee\{i\in J^\infty(\mathbf B)\mid i\leqslant_\mathbf{B} h(a_1)\}\cdot^\mathbf{B} h(a_2)\\
&=\bigvee\{i\cdot^\mathbf{B} h(a_2)\mid i \in J^\infty(\mathbf{B}) \textnormal{ and } i\leqslant_\mathbf{B} h(a_1)\}.
\end{align*}
Because $c$ is completely join-prime, it follows that $c\leqslant_\mathbf{B} b_1\cdot^\mathbf{B} h(a_2)$ for some $b_1 \in J^\infty(\mathbf{B})$ such that $b_1\leqslant_\mathbf{B} h(a_1)$. Similarly we find $b_2 \in J^\infty({\mathbf{B}})$ such that $b_2\leqslant_\mathbf{B} h(a_2)$ and $c\leqslant_\mathbf{B} b_1\cdot^\mathbf{B} b_2$. Therefore, $h_+(b_1)\leqslant_\mathbf{A} a_1$, $h_+(b_2)\leqslant_\mathbf{A} a_2$, or equivalently $a_1\preccurlyeq_\mathbf{A} h_+(b_1)$, $a_2\preccurlyeq_\mathbf{A} h_+(b_2)$, and $c\in b_1\circ^\mathbf{B} b_2$.

To show that (4) holds, let $b \in J^\infty(\mathbf{B})$. We have to show that $\bigwedge h^{-1}[{\uparrow}b^{{\sim}^{\mathbf{B}}}] = {\sim}^\mathbf{A}\kappa(h_{+}(b))$. First, let $c \in A$ such that $c \leqslant_\mathbf{A} \bigwedge h^{-1}[{\uparrow}b^{{\sim}^{\mathbf{B}}}]$. We have to show that $c \leqslant_\mathbf{A} {\sim}^\mathbf{A}\kappa(h_+(b))$. This is equivalent to proving that $\kappa(h_{+}(b)) \leqslant_\mathbf{A} {-}^\mathbf{A}c$. To this end, let $a$ be an arbitrary element of $A$ such that $h_{+}(b) \not\leqslant_{\mathbf{A}} a$. This implies $d \not\leqslant_\mathbf{A} a$ for all $d \in A$ such that $b \leqslant_\mathbf{B} h(d)$. Consequently, $b\not\leqslant_\mathbf{B} h(a)$; for otherwise, $a \not\leqslant_\mathbf{A} a$, a contradiction. It thus follows that $h(a) \leqslant_\mathbf{B} \kappa(b)$, which means $b^{{\sim}^\mathbf{B}} = {\sim}^\mathbf{B}\kappa(b) \leqslant_\mathbf{B} {\sim}^\mathbf{B}h(a)$. Since $h({\sim}^\mathbf{A}a) = {\sim}^\mathbf{B}h(a)$, we have  $b^{{\sim}^\mathbf{B}} \leqslant_\mathbf{B} h({\sim}^\mathbf{A}a)$. Hence, it must be the case that $c \leqslant_\mathbf{A} {\sim}^\mathbf{A}a$. Therefore, $a \leqslant_\mathbf{A} {-}^\mathbf{A}c$, and so, since $a$ was arbitrary, we have $\kappa(h_{+}(b)) \leqslant_\mathbf{A} {-}^\mathbf{A}c$.

Now let $c \in A$ such that $c \leqslant_\mathbf{A} {\sim}^\mathbf{A}\kappa(h_{+}(b))$. We have to show that $c \leqslant_\mathbf{A} \bigwedge h^{-1}[{\uparrow}b^{{\sim}^{\mathbf{B}}}]$, so let $a$ be an arbitrary element of $A$ such that $b^{{\sim}^{\mathbf{B}}} \leqslant_\mathbf{B} h(a)$. We have to show $c \leqslant_\mathbf{A} a$. We have  ${\sim}^\mathbf{B}\kappa(b) \leqslant_\mathbf{B} h(a)$, and so, since $h(-^\mathbf{A}a) = -^\mathbf{B}h(a)$, it follows that 
$h(-^\mathbf{A}a) = -^{\mathbf{B}}h(a) \leqslant_\mathbf{B} \kappa(b)$. Now $c \leqslant_\mathbf{A} {\sim}^\mathbf{A}\kappa(h_{+}(b))$ implies that $\kappa(h_+(b)) \leqslant_\mathbf{A} -^\mathbf{A} c$, which means $d \leqslant_\mathbf{A} -^\mathbf{A}c$ for all $d \in A$ such that $h_+(b) \not\leqslant_\mathbf{A} d$. If we can thus show that $h_+(b) \not\leqslant_\mathbf{A} -^\mathbf{A}a$, we would have $-^{\mathbf{A}}a \leqslant_\mathbf{A} -^{\mathbf{A}}c$, which gives $c \leqslant_\mathbf{A} a$, as required. To see why $h_+(b) \not\leqslant_\mathbf{A} -^\mathbf{A}a$, suppose $h_+(b) \leqslant_\mathbf{A} -^\mathbf{A}a$. Thus, by part by part (iv) of Lemma~\ref{lem:interaction-h_and_invh}, $b \leqslant_\mathbf{B} h(-^{\mathbf{A}}a)$, and therefore $b \leqslant_\mathbf{B} \kappa(b)$. Since $b$ is completely join-prime, $b \leqslant_\mathbf{B} b'$ for some $b' \in B$ such that $b \not\leqslant_\mathbf{B} b'$, a contradiction. 

In a similar way we can show that (5) holds. 

For (6), let $b \in J^{\infty}(\mathbf{B})$. Using Lemma~\ref{lem:interaction-h_and_invh} and the fact that $h(1^{\mathbf{A}}) = 1^\mathbf{B}$, we have $b \in I_{1^\mathbf{B}}$ iff $b \leqslant_\mathbf{B} 1^{\mathbf{B}}$ iff $b \leqslant_{\mathbf{B}} h(1^{\mathbf{A}})$ 
iff $\bigwedge h^{-1}[{\uparrow} b] \leqslant_\mathbf{A} 1^{\mathbf{A}}$ iff $\bigwedge h^{-1}[{\uparrow} b] \in I_{1^{\mathbf{A}}}$ iff $h_+(b) \in I_{1^{\mathbf{A}}}$ iff $b \in h_+^{-1}[I_{1^\mathbf{A}}]$.

(ii) Assume $h:\m A\to \m B$ is a complete injective homomorphism. Let $a \in J^{\infty}(\mathbf{A})$. Since $\m B$ is perfect, $h(a)=\bigvee\{c\in J^\infty(\m B)\mid c\leqslant_\mathbf{B} h(a)\}$. Moreover, $h_+(c)\leqslant_\mathbf{A} a$ for all $c\in J^\infty(\m B)$ such that $c\leqslant_\mathbf{B} h(a)$. Hence, $\bigvee\{h_+(c)\mid c\in J^\infty(\m B) \textnormal{ and } c\leqslant_\mathbf{B} h(a)\}\leqslant_\mathbf{A} a$. Applying $h$ we get $h(\bigvee\{h_+(c)\mid c\in J^\infty(\m B) \textnormal{ and } c\leqslant_\mathbf{B} h(a)\})= \bigvee \{h(h_+(c))\mid c\in J^\infty(\m B) \textnormal{ and } c\leqslant_\mathbf{B} h(a)\}\leqslant_{\mathbf{B}} h(a)$. Since $c \leqslant_\mathbf{A} h(h_+(c))$ for all $c \in J^\infty(\mathbf{B})$ such that $c \leqslant_\mathbf{B} h(a)$ by item (i) of Lemma~\ref{lem:interaction-h_and_invh}, we have $$h(a) =\bigvee\{c\in J^\infty(\m B)\mid c\leqslant_\mathbf{B} h(a)\} \leqslant_\mathbf{B} \bigvee \{h(h_+(c))\mid c\in J^\infty(\m B) \textnormal{ and } c\leqslant_\mathbf{B} h(a)\}.$$
Therefore, $h(a) = h(\bigvee\{h_+(c)\mid c\in J^\infty(\m B) \textnormal{ and } c\leqslant_\mathbf{B} h(a)\})$. By the injectivity of $h$ it follows that $\bigvee\{h_+(c)\mid c\in J^\infty(\m B) \textnormal{ and } c\leqslant_\mathbf{B} h(a)\}=a$. Now, since $a$ is a completely join-irreducible element of $\mathbf{A}$, $a=h_+(c)$ for some $c\in J^\infty(\m B)$, hence $h_+$ is surjective.

(iii) Let $b_1, b_2 \in J^{\infty}(\mathbf{B})$ such that $b_1 \not\leqslant_\mathbf{B} b_2$. We have to show that $h_{+}(b_1) \not\leqslant_\mathbf{B} h_+(b_2)$.  Since $h$ is surjective, there are $a_1, a_2 \in A$ such that $h(a_1) = b_1$ and $h(a_2) = b_2$. Hence,  $b_1 \not\leqslant_\mathbf{B} h(a_2)$, and so, by item (iv) of Lemma~\ref{lem:interaction-h_and_invh}, $\bigwedge h^{-1}[{\uparrow}b_1] \not\leqslant_\mathbf{B} a_2$.  Now $b_2 = h(a_2)$, so $a_2 \in h^{-1}[{\uparrow}b_2]$.  Consequently, $\bigwedge h^{-1}[{\uparrow}b_1] \not\leqslant_\mathbf{B} \bigwedge h^{-1}[{\uparrow}b_1]$, i.e., $h_{+}(b_1) \not\leqslant_\mathbf{B} h_+(b_2)$.
\end{proof}

We now extend the definition of a DInFL-frame morphism to the definition of a DqRA-frame morphism.

\begin{definition}\label{def:DqRA-frame-morphism}
Let $\mathbb{W}_1= \left(W_1, I_1,\preccurlyeq_1, \RR_1, {^{\sim_1}}, {^{-_1}}, {^{\neg_1}}\right)$
and 
$\mathbb{W}_2= \left(W_2, I_2,\preccurlyeq_2, \RR_2, {^{\sim_2}}, {^{-_2}}, {^{\neg_2}}\right)$ be DqRA-frames. A function $f: W_1 \to W_2$ is a \emph{DqRA-frame morphism} if 
$f$ is a DInFL-frame morphism and the following condition holds:
\begin{enumerate}
[label=\textup{(\arabic*)}]
\setcounter{enumi}{6}
\item $f(x^{\neg_1})=f(x)^{\neg_2}$
\end{enumerate}
\end{definition}

If $f: W_1 \to W_2$ is a DqRA-frame morphism from a DqRA-frame $\mathbb{W}_1$ to a DqRA-frame $\mathbb{W}_2$, then we define the dual $f^{+}: \mathsf{Up}\left(W_2, \preccurlyeq_2\right) \to \mathsf{Up}\left(W_1, \preccurlyeq_1\right)$ of $f$ by $f^{+}(U_2)=f^{-1}[U_2]$, for all $U_2 \in \mathsf{Up}\left(W_2, \preccurlyeq_2\right)$.

\begin{proposition}  
\label{prop:DqRA-frame-morphisms-duality}
Let $\mathbb{W}_1= \left(W_1, I_1,\preccurlyeq_1, \RR_1, {^{\sim_1}}, {^{-_1}}, {^{\neg_1}}\right)$
and 
$\mathbb{W}_2= \left(W_2, I_2,\preccurlyeq_2, \RR_2, {^{\sim_2}}, {^{-_2}}, {^{\neg_2}}\right)$ be DqRA-frames, and let $f: W_1\to W_2$ be a DqRA-frame morphism.
\begin{enumerate}
[label=\textup{(\roman*)}]
\item The map $f^{+}$ is a complete homomorphism from $\mathbb{W}^+_2$ to $\mathbb{W}^+_1$.
\item If $f$ is surjective, then $f^{+}$ is injective.
\item If $f$ is 
an order-embedding, 
then $f^{+}$ is surjective.
\end{enumerate}
\end{proposition}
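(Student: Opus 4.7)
The plan is to reduce the bulk of this proposition to the already-proved DInFL case (Proposition \ref{prop:DInFL-frame-morphisms-duality}) and only handle what is genuinely new, namely the behavior of $f^+$ on the De Morgan operation $\neg$. Since every DqRA-frame morphism is by definition a DInFL-frame morphism, and the underlying DInFL-reduct of $\mathbb{W}_i^+$ coincides with the complex algebra of the DInFL-reduct of $\mathbb{W}_i$, the preceding proposition already gives us that $f^+$ preserves $\cap$, $\cup$, $\circ$, $I$, $\sim$ and $-$ (and is a \emph{complete} homomorphism). Thus for item (i) the only remaining task is to verify $f^+(\neg U_2)=\neg f^+(U_2)$ for all $U_2\in\mathsf{Up}(W_2,\preccurlyeq_2)$.

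For that verification, I would imitate the chain of equivalences used for $\sim$ in the proof of Proposition \ref{prop:DInFL-frame-morphisms-duality}, with condition (7) of Definition \ref{def:DqRA-frame-morphism} playing the role previously played by (4). Concretely: for $w_1\in W_1$,
\[
w_1\in f^+(\neg U_2)\iff f(w_1)\in\neg U_2\iff f(w_1)^{\neg_2}\notin U_2\iff f(w_1^{\neg_1})\notin U_2\iff w_1^{\neg_1}\notin f^+(U_2)\iff w_1\in\neg f^+(U_2),
\]
where the crucial step $f(w_1)^{\neg_2}=f(w_1^{\neg_1})$ is exactly condition (7). Combined with Proposition \ref{prop:complex_algebra_DqRA-frame}, which tells us that the codomain $\mathbb{W}_1^+$ is indeed a DqRA, this confirms that $f^+$ is a complete DqRA-homomorphism from $\mathbb{W}_2^+$ to $\mathbb{W}_1^+$.

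Items (ii) and (iii) require no further argument at all: surjectivity of $f^+$ from order-embedding of $f$, and injectivity of $f^+$ from surjectivity of $f$, depend only on the set-theoretic behavior of $f$ and on $\preccurlyeq$, not on any of the algebraic operations. The proofs given in parts (ii) and (iii) of Proposition \ref{prop:DInFL-frame-morphisms-duality} therefore transfer verbatim, since $f$ being surjective (respectively an order-embedding) as a DqRA-frame morphism is exactly the same condition as it being surjective (respectively an order-embedding) as a DInFL-frame morphism.

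There is essentially no hard obstacle here; the whole proposition is a routine upgrade of the DInFL duality to account for one extra unary operation, and the preservation calculation above is a direct transcription of the $\sim$-argument already in the excerpt. I would therefore write the proof quite briefly, indicating that (ii), (iii), and the preservation of all DInFL operations follow from Proposition \ref{prop:DInFL-frame-morphisms-duality}, and present only the displayed equivalence above to handle~$\neg$.
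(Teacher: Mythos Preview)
Your proposal is correct and essentially identical to the paper's own proof: the paper likewise reduces (i) to verifying that $f^+$ preserves $\neg$ via exactly the chain of equivalences you display (invoking condition (7) of Definition~\ref{def:DqRA-frame-morphism}), and then states that (ii) and (iii) follow directly from Proposition~\ref{prop:DInFL-frame-morphisms-duality}.
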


\begin{proof}
For (i), we just need to check that $f^{+}$ preserves $\neg_2$. Using (7) of Definition~\ref{def:DqRA-frame-morphism}, we have 
\begin{align*}
w_1 \in f^{+}({\neg}_2U_2) & \iff f(w_1) \in {\neg}_2U_2\\
& \iff f(w_1)^{\neg_{2}} \notin U_2\\
& \iff f(w_1^{\neg_1}) \notin U_2 \\
& \iff w_1^{\neg_1} \notin f^{+}(U_2) \\
& \iff w_1 \in {\neg}_1f^{+}(U_2).
\end{align*}
Items (ii) and (iii) follow from Proposition~\ref{prop:DInFL-frame-morphisms-duality}.
\end{proof}

If $\mathbf{A}=(A,\wedge^\mathbf{A}, \vee^\mathbf{A}, \cdot^\mathbf{A}, 1^\mathbf{A}, \sim^\mathbf{A}, -^\mathbf{A}, \neg ^\mathbf{A})$ and $\mathbf{B}=(B,\wedge^\mathbf{B}, \vee^\mathbf{B}, \cdot^\mathbf{B}, 1^\mathbf{B}, \sim^\mathbf{B}, -^\mathbf{B}, \neg^\mathbf{B})$ are complete perfect DqRAs and $h: \mathbf A \to 
\mathbf B$ is a complete homomorphism, then we define the  
\emph{dual} $h_{+}: J^\infty\left(\mathbf{B}\right) \to J^\infty\left(\mathbf{A}\right)$ of $h$ by $h_+(b)=\bigwedge h^{-1}[{\uparrow} b]$, for all $b \in J^\infty\left(\mathbf{B}\right)$.   

\begin{proposition} \label{prop:DqRA-homomorphisms-duality}
Let $\mathbf{A}$ and $\mathbf{B}$ 
be complete perfect DqRAs, and let $h: \mathbf A \to \mathbf B$ be a complete homomorphism. 
\begin{enumerate}
[label=\textup{(\roman*)}]
\item The map $h_+$ is a DqRA-frame morphism from $\mathbf{B}_+$ to $\mathbf{A}_+$.
\item If $h$ is surjective, then $h_+$ is 
an order-embedding. 
\item If $h$ is injective, then $h_+$ is surjective. 
\end{enumerate}
\end{proposition}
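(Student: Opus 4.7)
The plan is to reduce everything to Proposition~\ref{prop:DInFL-homomorphisms-duality}: since the underlying map $h_+$ and its interaction with $\preccurlyeq$, $\circ$, $I_1$, $^\sim$, $^-$ are exactly those of the DInFL-reduct, items (ii) and (iii) are immediate from Proposition~\ref{prop:DInFL-homomorphisms-duality}(iii) and (ii) respectively (note the order in the present statement differs). For (i), Proposition~\ref{prop:DInFL-homomorphisms-duality}(i) already furnishes conditions (1)--(6) of Definition~\ref{def:DInFL-frame-morphism}, so the only thing left is to verify condition (7): $h_+(b^{\neg_\mathbf{B}}) = h_+(b)^{\neg_\mathbf{A}}$ for every $b \in J^\infty(\mathbf B)$. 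Unpacking definitions, this amounts to showing
\[
\bigwedge h^{-1}\bigl[{\uparrow}{\neg^\mathbf{B}}\kappa(b)\bigr] = {\neg^\mathbf{A}}\kappa\bigl(h_+(b)\bigr).
\]

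I would prove this equality by checking the two inequalities separately, in the same style as the proof that $h_+$ preserves $^\sim$ in Proposition~\ref{prop:DInFL-homomorphisms-duality}(i), replacing $\sim$ by $\neg$ and using that $\neg$ is an involutive dual lattice isomorphism on $\mathbf A$ and $\mathbf B$ with $h \circ \neg^\mathbf{A} = \neg^\mathbf{B}\circ h$. For ``$\leqslant$'', take $c \in A$ with $c \leqslant_\mathbf{A} \bigwedge h^{-1}[{\uparrow}{\neg^\mathbf{B}}\kappa(b)]$ and pick an arbitrary $a \in A$ with $h_+(b) \not\leqslant_\mathbf{A} a$. By the contrapositive of Lemma~\ref{lem:interaction-h_and_invh}(iv), $b \not\leqslant_\mathbf{B} h(a)$, so $h(a) \leqslant_\mathbf{B} \kappa(b)$, whence $\neg^\mathbf{B}\kappa(b) \leqslant_\mathbf{B} \neg^\mathbf{B} h(a) = h(\neg^\mathbf{A} a)$; thus $\neg^\mathbf{A} a \in h^{-1}[{\uparrow}{\neg^\mathbf{B}}\kappa(b)]$ and hence $c \leqslant_\mathbf{A} \neg^\mathbf{A} a$, i.e.\ $a \leqslant_\mathbf{A} \neg^\mathbf{A} c$. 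Since $a$ was arbitrary, $\kappa(h_+(b)) \leqslant_\mathbf{A} \neg^\mathbf{A} c$, which is equivalent to the desired inequality.

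For the reverse inequality, take $c \in A$ with $c \leqslant_\mathbf{A} \neg^\mathbf{A}\kappa(h_+(b))$ and an arbitrary $a \in A$ with $\neg^\mathbf{B}\kappa(b) \leqslant_\mathbf{B} h(a)$; the aim is $c \leqslant_\mathbf{A} a$. Applying $\neg^\mathbf{B}$ gives $h(\neg^\mathbf{A} a) = \neg^\mathbf{B} h(a) \leqslant_\mathbf{B} \kappa(b)$, so $b \not\leqslant_\mathbf{B} h(\neg^\mathbf{A} a)$, and by Lemma~\ref{lem:interaction-h_and_invh}(iv) this forces $h_+(b) \not\leqslant_\mathbf{A} \neg^\mathbf{A} a$. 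The hypothesis then yields $\neg^\mathbf{A} a \leqslant_\mathbf{A} \neg^\mathbf{A} c$, and applying $\neg^\mathbf{A}$ again gives $c \leqslant_\mathbf{A} a$, as required. This concludes (i); as noted, (ii) and (iii) are direct inheritance from the DInFL case.

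The only genuinely new step, and the one requiring care, is the preservation of $^\neg$; the main obstacle is correctly invoking Lemma~\ref{lem:interaction-h_and_invh}(iv) in its contrapositive form and keeping track of the interplay between $\kappa$ and the involutive $\neg$. Everything else is a transcription from the DInFL setting.
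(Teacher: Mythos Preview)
Your proof is correct and follows exactly the paper's approach: the paper says only that condition~(7) is proved analogously to the $^\sim$ case in Proposition~\ref{prop:DInFL-homomorphisms-duality}, and you have spelled out precisely that analogue. One small slip: in the ``$\leqslant$'' direction you invoke the contrapositive of Lemma~\ref{lem:interaction-h_and_invh}(iv) to pass from $h_+(b)\not\leqslant_\mathbf{A} a$ to $b\not\leqslant_\mathbf{B} h(a)$, but that is the contrapositive of part~(ii), not~(iv); your second use of~(iv) is correct.
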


\begin{proof}
For (i) we only need to check that (7) of Definition~\ref{def:DqRA-frame-morphism} holds, and the proof of this is analogous to the proof that $h_{+}(b^{\sim^\mathbf{B}}) = \left(h_+(b)\right)^{\sim^\mathbf{A}}$ for all $b \in J^\infty(\mathbf{B})$ (see Proposition~\ref{prop:DInFL-homomorphisms-duality}, also  for (ii), (iii)).
\end{proof}

\section{Priestley-style duality 
for DInFL-algebras and DqRAs}\label{sec:dual-spaces}

In this section, we will use the results of Section~\ref{sec:frames}
to define Priestley spaces with additional structure that will be dual to DInFL-algebras and DqRAs. As the signatures of  DInFL-algebras and DqRAs do not include lattice bounds, our dual spaces will in fact be \emph{doubly-pointed} Priestley spaces (i.e. the poset will have both a  least and greatest element). Such spaces have also been called \emph{bounded Priestley spaces} (cf. Section 1.2 and Theorem~4.3.2 of Clark and Davey~\cite{CD98}). More recent papers by Cabrer and Priestley (cf.~\cite{CP14}) refer to the spaces as doubly-pointed.  

We recall that a partially ordered topological space $(X,\leqslant ,\tau)$ is \emph{totally order-disconnected} if whenever $x \nleqslant y$ there exists a clopen upset $U$ of $X$ such that $x \in U$ and $y \notin U$. A \emph{doubly-pointed Priestley space} is a compact totally order-disconnected space with bounds $\bot \neq\top$. 
When recovering an unbounded distributive lattice from a doubly-pointed Priestley space, the proper, non-empty, clopen upsets form a lattice~\cite[Theorem 1.2.4]{CD98}. 

For $\mathbf{A}$ a DInFL-algebra or DqRA, we call  $F \subseteq A$ a \emph{generalised prime filter} if $F$ is a prime filter, $F=A$ or $F=\varnothing$. This terminology follows, for instance, Fussner and Galatos~\cite{FG19}, although we note that there they only allow $F=A$ as their algebras have a top element, but not a bottom element. Hence, their dual spaces are \emph{pointed} Priestley spaces rather than doubly-pointed Priestley spaces. Our dual spaces will use the set of generalised prime filters as their underlying set. 

We give two examples that show why  we need this version of Priestley duality for our setting. The integers (with $+$ as the monoid operation) are a commutative distributive residuated lattice and hence a DInFL-algebra. As the underlying lattice is unbounded, we cannot represent it as usual as the clopen upsets of a Priestley space $(X,\leqslant, \tau)$ as this would introduce bounds $\varnothing$ and $X$. 
Secondly, we would like to represent the dual of a homomorphism $h$ as $h^{-1}$. Consider the two-element Sugihara monoid $\mathbf{S}_2$ and the homomorphism $h(a_1)=a_1$ and $h(a_{-1})=a_{-1}$ into the four-element Sugihara monoid $\mathbf{S}_4$ (whose elements are $a_{-2}<a_{-1}<a_1<a_2$). Now ${\uparrow}a_2$ is a prime filter of $\mathbf{S}_4$, but $h^{-1}({\uparrow}a_2)=\emptyset$ is not a prime filter of  $\mathbf{S}_2$.

First, we develop  some 
results about generalised prime filters that will be needed later.
\begin{lemma}\label{lem:F^-_and_F^sim_prime_filters}
Let $\mathbf{A}=(A,\wedge, \vee, \cdot, 1, \sim, -)$ be a DInFL-algebra. If $F$ is a generalised prime filter of the lattice reduct of $\mathbf A$, then so are
$F^{\sim} = \left\{{\sim}a \mid a \notin F\right\}$
and $F^- = \left\{-a\mid a \notin F\right\}$.
\end{lemma}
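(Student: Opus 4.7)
The plan is to split into three cases according to the definition of generalised prime filter, dispose of the degenerate cases by direct computation, and then handle the main case using the key fact (recorded in the introductory material on InFL-algebras) that $\sim$ and $-$ are order-reversing involutions that are inverses of each other, hence dual lattice isomorphisms satisfying the De Morgan-type laws $-(a\wedge b)=-a\vee -b$, $-(a\vee b)=-a\wedge -b$, and analogously for $\sim$. I will write out the argument for $F^{\sim}$; the argument for $F^{-}$ is symmetric, swapping the roles of $\sim$ and $-$.

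First, if $F=A$ then no $a$ lies outside $F$, so $F^{\sim}=\varnothing$, which is a generalised prime filter. If $F=\varnothing$ then, since $\sim$ is a bijection from $A$ to $A$, every element of $A$ has the form $\sim a$ for some $a\notin F$, so $F^{\sim}=A$, again a generalised prime filter.

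For the main case, assume $F$ is a proper prime filter. The first useful reformulation is to observe that, because $-{\sim}a=a$, the set $F^{\sim}$ equals $\{b\in A\mid -b\notin F\}$. Using this description, the three filter conditions unfold cleanly. Upward closure: if $b\in F^{\sim}$ and $b\leqslant c$, then $-c\leqslant -b$ by the order-reversing property; were $-c$ in $F$, upward closure of $F$ would force $-b\in F$, contradicting $b\in F^{\sim}$. Closure under meets: if $b,c\in F^{\sim}$, then $-b,-c\notin F$, and since $F$ is prime this forces $-b\vee -c\notin F$; but $-b\vee -c=-(b\wedge c)$, so $b\wedge c\in F^{\sim}$. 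Primeness: if $b\vee c\in F^{\sim}$, then $-(b\vee c)=-b\wedge -c\notin F$, and since $F$ is a filter (closed under meets), at least one of $-b,-c$ must lie outside $F$, giving $b\in F^{\sim}$ or $c\in F^{\sim}$.

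Finally I would verify properness: $F^{\sim}$ is nonempty since $F\ne A$ provides some $a\notin F$, whence $\sim a\in F^{\sim}$; and $F^{\sim}\ne A$ since $F\ne\varnothing$ provides some $a\in F$, and then $\sim a\notin F^{\sim}$ because $-{\sim}a=a\in F$. So $F^{\sim}$ is a proper prime filter. The argument for $F^{-}$ is identical after replacing every occurrence of $\sim$ by $-$ and vice versa, using $\sim\!-a=a$ in place of $-{\sim}a=a$. No step looks like a real obstacle; the only place to be careful is to use the correct involutive identity ($-{\sim}a=a$ rather than ${\sim}{-}a=a$) when converting the definition of $F^{\sim}$ into the more convenient form $\{b\mid -b\notin F\}$.
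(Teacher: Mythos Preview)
Your proof is correct and follows essentially the same approach as the paper: handle the degenerate cases $F=A$ and $F=\varnothing$ directly, then use the reformulation $F^{\sim}=\{b\mid -b\notin F\}$ (which the paper uses implicitly via ``$a={\sim}{-}a$ with $-a\notin F$'') together with the order-reversing De~Morgan behaviour of $-$ to verify the prime-filter conditions. Your write-up is in fact more thorough than the paper's, which checks only upward closure and primeness and leaves closure under meets and properness to the reader.
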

\begin{proof}
Let $F$ be a generalised prime filter. We show only the case of $F^{\sim}$, with the case of $F^{-}$ being similar. If $F=A$, then $F^{\sim}=\varnothing$ and if $F=\varnothing$, then $F^{\sim}=A$. Now assume $F\neq A$ and $F \neq \varnothing$. Let $a \in F^{\sim}$ and $a \leqslant b$. Now $a={\sim}-a$ with $-a\notin F$. Since $-$ is order-reversing, $-b \leqslant -a$
so $-b \notin F$ and hence $b={\sim}-b \in F^{\sim}$. Now assume $a \vee b \in F^{\sim}$. We have $a\vee b = \sim -(a\vee b)=\sim(-a \wedge -b)$. Since $A{\setminus}F$ is a prime ideal, we have $-a \notin F$ or $-b \notin F$. Hence we get $a \in F^{\sim}$ or $b \in F^{\sim}$.
\end{proof}

The following lemma is a restatement of \cite[Lemma 3.5]{Gal2000}, but for generalised prime filters (see also \cite[Lemma 2.2]{Urq1996}).  We denote by $F\cdot G$ the set $\{\, a\cdot b \mid a \in F, b\in G\,\}$. 

\begin{lemma}\label{lem:Urq2.2}
Let $F,G,H$ be filters (possibly empty or total) of a distributive residuated lattice such that $H$ is a generalised prime filter and $F \cdot G \subseteq H$. Then there exist generalised prime filters $F'$ and $G'$ such that $F'\cdot G \subseteq H$ and $F \cdot G' \subseteq H$.     
\end{lemma}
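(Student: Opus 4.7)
The plan is to use a standard Zorn-style extension argument, carried out separately on each side; the two constructions are entirely symmetric (the left side of $\cdot$ in one, the right side in the other), so I will only describe how to obtain $F'$. First I would dispose of the trivial boundary cases tied to the allowance of $\emptyset$ and $A$ as generalised prime filters: if $H = A$, take $F' = G' = A$; if $F = \emptyset$ then $F \cdot G \subseteq H$ holds vacuously and the choice $F' := \emptyset$ is already a generalised prime filter; analogously for $G$. Hence I may assume $H$ is a proper prime filter and $F$ is a non-empty filter.

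Next, let $\mathcal{F} := \{K : K \text{ is a lattice filter},\ F \subseteq K,\ K \cdot G \subseteq H\}$. This set contains $F$, and since membership only involves pairs of elements at a time, it is closed under unions of chains. By Zorn's lemma, there is a maximal element $F' \in \mathcal{F}$; this $F'$ is the candidate.

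To verify that $F'$ is generalised prime, I would argue as follows. If $F' = A$, we are done. Otherwise, suppose for contradiction that $a \vee b \in F'$ with $a,b \notin F'$. By maximality, the lattice filters $\langle F' \cup \{a\}\rangle$ and $\langle F' \cup \{b\}\rangle$ are not in $\mathcal{F}$. Since $F'$ is non-empty, $\langle F' \cup \{a\}\rangle = \{x : x \geq f \wedge a \text{ for some } f \in F'\}$, and any witness $x \cdot g \notin H$ with $x \geq f \wedge a$ already forces $(f \wedge a) \cdot g \notin H$ because $H^c$ is a downset (being the complement of an upset) and $\cdot$ is order-preserving. Thus I obtain $f_a \in F'$, $g_a \in G$ with $(f_a \wedge a) \cdot g_a \notin H$, and symmetrically $f_b \in F'$, $g_b \in G$ with $(f_b \wedge b) \cdot g_b \notin H$.

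The main step is to merge these two witnesses. Set $f := f_a \wedge f_b \in F'$ and $g := g_a \wedge g_b \in G$. Monotonicity of $\cdot$ together with $H^c$ being a downset give $(f \wedge a) \cdot g \notin H$ and $(f \wedge b) \cdot g \notin H$. Primality of $H$ means $H^c$ is closed under finite joins, so $(f \wedge a) \cdot g \vee (f \wedge b) \cdot g \notin H$. By lattice distributivity together with $\cdot$ distributing over $\vee$, this join equals $(f \wedge (a \vee b)) \cdot g$. But $f \wedge (a \vee b) \in F'$ (since both factors are in $F'$) and $g \in G$, so $(f \wedge (a \vee b)) \cdot g \in F' \cdot G \subseteq H$, a contradiction. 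The sole delicacy of the proof is this choreography of witnesses: using downward closure of $H^c$ to pass from an arbitrary generated-filter element to the canonical form $f \wedge a$, and then using primality of $H$ and distributivity to fold the two separate witnesses into a single element testable against $a \vee b$. Notably, residuation is not invoked in the extension step; only that $\cdot$ is monotone and distributes over $\vee$.
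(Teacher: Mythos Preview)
Your argument is correct and is precisely the standard Zorn-style extension that the paper has in mind: the paper does not spell this out but simply cites \cite[Lemma~3.5]{Gal2000} and notes that the only adaptation needed is to allow $\varnothing$ and $A$ as generalised prime filters, which is exactly what your boundary-case analysis handles. Your closing remark that residuation plays no role beyond giving monotonicity and join-distributivity of $\cdot$ is also accurate and worth keeping.
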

\begin{proof}
This follows from \cite[Lemma 3.5]{Gal2000} with straightforward adaptations to account for the empty set and the whole algebra. 
\end{proof}
In view of the above lemma, we define $F\bullet G=\{H\mid F\cdot G\subseteq H\}$. Note that this is automatically an upward-closed set of generalised prime filters.

\begin{definition} \label{def:dbly-pnted-DInFL-frame} A \emph{doubly-pointed DInFL-frame} is  a DInFL-frame where the poset is bounded, and the set $I$ is a proper, non-empty upset. 
\end{definition} 

We use Lemmas~\ref{lem:F^-_and_F^sim_prime_filters} and~\ref{lem:Urq2.2} to show that that the set of generalised prime filters can be equipped with the necessary structure to be a doubly-pointed DInFL-frame. 

\begin{proposition}\label{prop:prime-filter_DInFL-frame}
Let $\mathbf{A}=(A,\wedge, \vee, \cdot, 1, \sim, -)$ be a DInFL-algebra. Let $W_\mathbf{A}$ be the set of generalised prime filters of the lattice reduct of $\mathbf{A}$. For all $F, G, H$ in $W_\mathbf{A}$, define $F \in \mathcal{I}$ iff $1 \in F$, $F \preccurlyeq G$ iff $F \subseteq G$, 
$F^\sim = \left\{{\sim}a \mid a \notin F\right\}$ and $F^- = \left\{-a\mid a \notin F\right\} $. Then the structure $\mathfrak{F}(\mathbf{A})=\left(W_\mathbf{A}, \mathcal{I}, \preccurlyeq, \bullet, ^{\sim}, ^{-}, \varnothing, A\right)$ is a doubly-pointed DInFL-frame. 
\end{proposition}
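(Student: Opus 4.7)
The plan is to verify the six conditions of Definition~\ref{def:DInFL-frames} together with the doubly-pointed requirements that the poset is bounded and $\mathcal{I}$ is a proper non-empty upset. The basic setup is routine: $(W_\mathbf{A},\subseteq)$ is a poset with bounds $\emptyset$ and $A$ by construction of $W_\mathbf{A}$; $\mathcal{I}=\{F:1\in F\}$ is non-empty (since $A\in\mathcal{I}$), proper (since $\emptyset\notin\mathcal{I}$), and an upset (giving condition (2) for free); the unary maps $^\sim$ and $^-$ send $W_\mathbf{A}$ into itself by Lemma~\ref{lem:F^-_and_F^sim_prime_filters}; and each $F\bullet G=\{H:F\cdot G\subseteq H\}$ is by construction an upset of $(W_\mathbf{A},\subseteq)$. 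Condition (3) is just transitivity of $\subseteq$, and condition (6) reduces to the direct calculation $b\in F^{\sim-}\iff {\sim}b\notin F^\sim \iff {-\sim}b\in F\iff b\in F$, yielding $F^{\sim-}=F=F^{-\sim}$.

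For condition (1), I would prove $F\preccurlyeq G\iff G\in \mathcal{I}\bullet F$ as follows. The reverse direction is immediate: if $H\in\mathcal{I}$ witnesses $G\in H\bullet F$, then $F=\{1\}\cdot F\subseteq H\cdot F\subseteq G$. For the forward direction, given $F\subseteq G$, the principal filter $\uparrow 1$ of $\mathbf{A}$ satisfies $\uparrow 1\cdot F\subseteq F\subseteq G$ by monotonicity of $\cdot$, so Lemma~\ref{lem:Urq2.2} applied to $\uparrow 1$, $F$, $G$ supplies a generalised prime filter $H\supseteq\uparrow 1$ with $H\cdot F\subseteq G$, and this $H$ lies in $\mathcal{I}$. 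The second equivalence $F\preccurlyeq G\iff G\in F\bullet\mathcal{I}$ is entirely symmetric.

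Condition (5), $F\cdot G\subseteq H^\sim\iff H\cdot F\subseteq G^-$, I would prove by contrapositive using residuation. Assume the left-hand side and suppose, for contradiction, that some $h\in H$, $f\in F$ give $\sim(hf)\in G$. Setting $g:=\sim(hf)$, the trivial $\sim(hf)\leqslant\sim(hf)$ unwinds via residuation to $f\cdot g\leqslant\sim h$, hence $-(fg)\geqslant h\in H$, so $-(fg)\in H$ by upward closure, meaning $fg\notin H^\sim$, contradicting $fg\in F\cdot G\subseteq H^\sim$. The converse is the dual argument: given $-(fg)\in H$, set $h:=-(fg)$, and residuation yields $-(fg)\cdot f\leqslant -g$, so $\sim(hf)\geqslant g\in G$, forcing $\sim(hf)\in G$. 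Each step uses only upward closure of the filters involved, so the degenerate cases $H,G\in\{\emptyset,A\}$ are handled uniformly.

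The main obstacle will be associativity (condition (4)). My strategy is to reduce both $(F_1\bullet F_2)\bullet F_3$ and $F_1\bullet(F_2\bullet F_3)$ to the common set $\{K\in W_\mathbf{A}: F_1\cdot F_2\cdot F_3\subseteq K\}$; the forward implications follow from associativity of $\cdot$. The nontrivial direction, given $F_1\cdot F_2\cdot F_3\subseteq K$, requires producing a generalised prime filter $H$ with $F_1\cdot F_2\subseteq H$ and $H\cdot F_3\subseteq K$ (and symmetrically). The key auxiliary object is $A_K:=\{a\in A:\{a\}\cdot F_3\subseteq K\}$: it is upward closed, and its complement $A_K^c$ is a prime ideal, being downward closed (by monotonicity of $\cdot$ and upward closure of $K$) and closed under $\vee$ (since $\cdot$ distributes over $\vee$ and $K^c$ is a prime ideal). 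Because $F_1\cdot F_2\subseteq A_K$ is disjoint from the prime ideal $A_K^c$, a prime filter extension combined with Lemma~\ref{lem:Urq2.2} applied to the resulting filter, $F_3$, and $K$ yields the desired $H$, and the symmetric construction handles the other bracketing.
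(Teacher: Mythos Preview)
Your treatment of the doubly-pointed setup and of conditions (1)--(3), (5), (6) matches the paper's proof: it too invokes Lemma~\ref{lem:Urq2.2} together with $\uparrow 1$ for (1), argues (5) by residuation and upward closure exactly as you do, and unfolds (6) directly.

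Your argument for associativity (4), however, has a genuine gap. You assert that $A_K^c$ is a \emph{prime} ideal, but the two properties you verify---downward closure and closure under $\vee$---only make it an ideal. Primeness would require that $x\wedge y\in A_K^c$ force $x\in A_K^c$ or $y\in A_K^c$, equivalently that $A_K$ be closed under $\wedge$; since $\cdot$ need not distribute over $\wedge$ in a DInFL-algebra, $(x\wedge y)f$ can lie outside $K$ even when $xf,yf\in K$. Without primeness you cannot conclude that the \emph{filter generated} by $F_1\cdot F_2$ (which is what a prime-filter-extension argument requires, not the bare set $F_1\cdot F_2$) stays disjoint from $A_K^c$, and your subsequent appeal to Lemma~\ref{lem:Urq2.2} is then redundant rather than reparative.

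The fix is simpler than the $A_K$ detour and is essentially what the paper's cited reference \cite{Gal2000} does. Because $F_1,F_2$ are themselves filters, any finite meet $\bigwedge_i a_ib_i$ of products dominates the single product $(\bigwedge_i a_i)(\bigwedge_i b_i)\in F_1\cdot F_2$, so the filter generated by $F_1\cdot F_2$ is just ${\uparrow}(F_1\cdot F_2)$. For $x\geq ab$ and $f\in F_3$ one has $xf\geq (ab)f\in K$, whence ${\uparrow}(F_1\cdot F_2)\cdot F_3\subseteq K$. Now Lemma~\ref{lem:Urq2.2}, applied directly to the filter ${\uparrow}(F_1\cdot F_2)$, $F_3$, and $K$, yields a generalised prime $H\supseteq F_1\cdot F_2$ with $H\cdot F_3\subseteq K$; no separate prime-ideal construction or extension step is needed.
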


\begin{proof}
The relation $\preccurlyeq$ is clearly a partial order. To see that $\mathcal{I}$ is a upset of $\left(W, \preccurlyeq\right)$, let $F \in \mathcal{I}$ and assume $F \preccurlyeq G$. Then $1 \in F \subseteq G$, so $G \in \mathcal{I}$. Since $\varnothing \notin \mathcal{I}$ and $A \in \mathcal{I}$, it is a proper, non-empty subset of $W_\mathbf{A}$.

Conditions (1) and (2) of Definition~\ref{def:DInFL-frames} can be proven using Lemma~\ref{lem:Urq2.2} and the filter ${\uparrow}1$. Condition (3) follows easily from the fact that the order on $W_\mathbf{A}$ is set containment. For (4), see \cite[Theorem 3.7(2)]{Gal2000}, but apply Lemma~\ref{lem:Urq2.2} above. 

Next we show that (5) holds, i.e., that $\Rb{F}{G}{H^{\sim}}$ iff $\Rb{H}{F}{G^{-}}$. Assume $\Rb{F}{G}{H^{\sim}}$. If at least one of $F,G$ or $H$ is empty, then $\Rb{H}{F}{G^{-}}$ is trivially satisfied. Now assume they are all non-empty and let $a \in H$ and $b \in F$. We must show that $a\cdot b \in G^{-}$.
Suppose for the sake of a contradiction that $a \cdot b \notin G^{-}$. Then ${\sim}\left(a \cdot b\right) \in G$. Hence, since $b \in F$ and $\Rb{F}{G}{H^{\sim}}$, we get $b \cdot {\sim}\left(a \cdot b\right) \in H^{\sim}$. This means $-\left(b \cdot {\sim}\left(a \cdot b\right)\right) \notin H$. Now $a \cdot b \leqslant a \cdot b$, so we have $a \leqslant -\left(b \cdot {\sim}\left(a \cdot b\right)\right)$. Thus, since $a \in H$ and $H$ is upward closed, we get $-\left(b \cdot {\sim}\left(a \cdot b\right)\right) \in H$, a contradiction.  

Conversely, assume $\Rb{H}{F}{G^{-}}$. Again, if any of $F,G$ or $H$ are empty, $\Rb{F}{G}{H^{\sim}}$ is trivially satisfied. Let $a \in F$ and $b \in G$. We have to show that $a \cdot b \in H^\sim$. Suppose for the sake of a contradiction that $a\cdot b \notin H^{\sim}$. Then $-\left(a\cdot b\right) \in H$, and so, since $a\in F$ and $\Rb{H}{F}{G^{-}}$, we have $-\left(a\cdot b\right) \cdot a\in G^{-}$. This means ${\sim}\left(-\left(a\cdot b\right)\cdot a\right) \notin G$. Now $a\cdot b\leqslant a\cdot b$, so we have $b \leqslant {\sim}\left(-\left(a\cdot b\right)\cdot a\right)$, and therefore, since $G$ is upward closed and $b \in G$, we have ${\sim}\left(-\left(a\cdot b\right)\cdot a\right) \in G$, a contradiction. 

For (6), to see that $F^{\sim -} \subseteq F$, let $a \in F^{\sim -}$. Then $a = -b$ for some $b \notin F^\sim$. Hence, ${\sim}a = {\sim}{-}b = b$, and so ${\sim}a \notin F^\sim$. This means $-{\sim}a =a \in F$. 
A similar proof gives $F^{-\sim} \subseteq F$. The cases $F=\varnothing$ and $F=A$ are trivial. 
\end{proof}

Recall that 
after Definition~\ref{def:DInFL-frames} we defined, for $U,V \subseteq W$, the underlying set of a DInFL-frame, the set $U \circ V = \bigcup \{\, x \circ y \mid x \in U, y \in V\,\}$.  
In Proposition~\ref{prop:complex_algebra_DInFL-frame} we gave  definitions of ${\sim}U$ and $-U$. We remark that (3) below is equivalent to both maps $x \mapsto x^{\sim}$ and $x \mapsto x^{-}$ being  continuous.

\begin{definition}\label{def:DInFL-spaces}
A  doubly-pointed DInFL-frame with topology 
$\left(W, I, \preccurlyeq, \RR, ^\sim, ^-, \bot, \top, \tau\right)$ is a
 \emph{DInFL-space} if $\tau$ is a compact totally order-disconnected topology and the following conditions are satisfied: 
\begin{enumerate}[label=\textup{(\arabic*)}]
\item $I$ is clopen. 
\item If $U$ and $V$ are clopen proper non-empty upsets, then $U \circ V$ is clopen.
\item If $U$ is a clopen proper non-empty upset, then  ${\sim} U $ and $-U$ are clopen.
\end{enumerate}
\end{definition}

For a DInFL-algebra $\mathbf{A}$, we consider the structure $\mathfrak{W}(\mathbf{A})=(\mathfrak{F}(\mathbf{A}),\tau_P)$ where $\tau_P$ is the topology on the set of generalised prime filters with subbasic open sets of the form $X_a=\{F \in W_\mathbf{A} \mid a \in F\}$ and $X_a^c=\{F \in W_\mathbf{A} \mid a \notin F\}$. 
The operation $\bullet : W_\mathbf{A} \times W_\mathbf{A} \to \mathcal{P} (W_\mathbf{A})$ extends to subsets $U,V \subseteq W_\mathbf{A}$ by 
$U \bullet V = \bigcup \{\, F \bullet G \mid F \in U, G \in V \,\}$. 
For a DInFL-space  $\mathbb{W}$, we denote by $K_\mathbb{W}$ the set of clopen proper non-empty upsets of $\mathbb{W}$ and define $\mathfrak{A}(\mathbb{W})$ to be the  algebra 
$\left(K_\mathbb{W},\cap,\cup,\circ, I, {\sim},-\right)$.

\begin{proposition}\label{prop:W(A)-DInFL-space}
If $\mathbf{A}$ is a DInFL-algebra, then $\mathfrak{W}(\mathbf{A})$ is a DInFL-space and if $
\mathbb{W}$ is a DInFL-space then $\mathfrak{A}(\mathbb{W})$ is a DInFL-algebra.
\end{proposition}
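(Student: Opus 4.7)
The proposition splits into two directions, both of which I would establish by combining the DInFL-frame results of Section~\ref{sec:frames} with Priestley-style arguments.

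\textbf{Direction 1: $\mathfrak{W}(\mathbf{A})$ is a DInFL-space.} By Proposition~\ref{prop:prime-filter_DInFL-frame}, $\mathfrak{F}(\mathbf{A})$ is already a doubly-pointed DInFL-frame, so only the topological conditions of Definition~\ref{def:DInFL-spaces} need to be verified. That $\tau_P$ is compact and totally order-disconnected follows from a standard Alexander-subbase/ultrafilter argument applied to the subbase $\{X_a, X_a^c \mid a\in A\}$, using that for $F\nsubseteq G$ in $W_{\mathbf{A}}$ one can pick an $a\in F\setminus G$ to separate them. Clopenness of $\mathcal{I}=X_1$ is immediate. The remaining clopenness conditions reduce, via the Priestley representation of clopen upsets, to the identities
\begin{equation*}
X_a \bullet X_b \,=\, X_{a\cdot b}, \qquad {\sim}X_a = X_{{\sim}a}, \qquad -X_a = X_{-a}.
\end{equation*}
The last two follow from a direct computation using that $-$ and ${\sim}$ are mutually inverse on $A$, so that $a\in F^-$ iff ${\sim}a\notin F$. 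The key identity is the first: $\subseteq$ is immediate from the definition of $\bullet$, and for $\supseteq$, given $F$ with $a\cdot b\in F$, two applications of Lemma~\ref{lem:Urq2.2} to the triple $({\uparrow}a, {\uparrow}b, F)$ produce generalised prime filters $G\supseteq{\uparrow}a$ and $H\supseteq{\uparrow}b$ with $G\cdot H\subseteq F$, witnessing $F\in X_a\bullet X_b$. For arbitrary clopen upsets $U,V$ one then uses the finite-union decomposition of Priestley clopens together with distributivity of $\bullet$ over unions.

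\textbf{Direction 2: $\mathfrak{A}(\mathbb{W})$ is a DInFL-algebra.} The pair $(K_\mathbb{W},\cap,\cup)$ is a distributive lattice by the Priestley representation for unbounded distributive lattices via doubly-pointed spaces, as in~\cite[Thm.~1.2.4]{CD98}. Definition~\ref{def:DInFL-spaces} supplies clopenness of $U\circ V$, ${\sim}U$ and $-U$, while Proposition~\ref{prop:complex_algebra_DInFL-frame} together with Lemma~\ref{lem:twiddle_minus_involutive} ensures that these are upsets; properness and non-emptiness of ${\sim}U$ and $-U$ follow because ${\sim}$ and $-$ are order-reversing bijections that swap $\varnothing$ with $W$, and for $U\circ V$ one uses that $I\in K_\mathbb{W}$ is a two-sided identity, ruling out collapse to $\varnothing$ or $W$. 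With closure in hand, all InFL-algebra identities are inherited from the complex algebra $\mathbb{W}^+$ by Proposition~\ref{prop:complex_algebra_DInFL-frame}, of which $\mathfrak{A}(\mathbb{W})$ is a subreduct.

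\textbf{Expected obstacle.} The delicate step is the identity $X_a\bullet X_b = X_{a\cdot b}$ in Direction 1, whose $\supseteq$ inclusion requires a careful double application of Lemma~\ref{lem:Urq2.2} (extending ${\uparrow}a$ and ${\uparrow}b$ to generalised prime filters whose product still lies in the same target $F$). Once this identity is secured, the extension from subbasic clopens to arbitrary clopen upsets is routine Priestley bookkeeping, and the housekeeping in Direction 2 that $U\circ V$ lands in $K_\mathbb{W}$ rather than in $\{\varnothing, W\}$ is a direct consequence of $I$ being a two-sided unit.
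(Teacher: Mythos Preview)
Your proposal is correct and follows the same route as the paper: Proposition~\ref{prop:prime-filter_DInFL-frame} plus standard Priestley arguments for the frame and topology, clopenness of $\mathcal I=X_1$ and of ${\sim}X_a=X_{{\sim}a}$, $-X_a=X_{-a}$ via the identities you state, and Proposition~\ref{prop:complex_algebra_DInFL-frame} restricted to clopen upsets for the algebra direction. The one difference is that where you establish $X_a\bullet X_b=X_{a\cdot b}$ in-house via a double application of Lemma~\ref{lem:Urq2.2}, the paper simply cites~\cite[Theorem~6.3]{JL2022} for the clopenness of $U\bullet V$; your self-contained argument is what that citation unpacks to.
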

\begin{proof} The fact that $\mathfrak{W}(\mathbf{A})$ has an underlying DInFL-frame structure is the result of Proposition~\ref{prop:prime-filter_DInFL-frame} and the compact totally order-disconnectedness follows from Priestley duality.  By definition,  $\mathcal{I}=X_1$ so it is clopen.  
For (2), we consider $U,V$ clopen proper non-empty upsets of $\mathfrak{W}(\mathbf{A})$, i.e. $U$ and $V$ are sets of generalised prime filters. The fact that $U \bullet  V$ is clopen follows from, for instance, \cite[Theorem 6.3]{JL2022}, noting that for any $a \in A$, we have $\varnothing \notin X_a$ and $A \in X_a$. 
Let $U$ be a clopen proper non-empty upset. From the duality we have that $U=X_a=\{F \in W_\mathbf{A} \mid a \in F\}$ for some $a \in A$. Now $-U = \{F \mid F^{\sim} \notin U \} = \{F \mid a \notin F^{\sim}\}=\{F \mid {\sim}{-}a \notin F^{\sim}\}$. But ${\sim}{-}a\notin F^{\sim}$ iff $-a\in F$. Hence $-U=X_{-a}$, which is clopen. 
A similar proof shows that ${\sim} U$ is clopen.

For a DInFL-space $\mathbb{W}$,  the lattice structure of $\mathfrak{A}(\mathbb{W})$ follows from Priestley duality. The algebra structure follows from the definition of a DInFL-space,  Proposition~\ref{prop:complex_algebra_DInFL-frame} and the fact that the elements of  $K_\mathbb{W}$ are special upsets.  
\end{proof}

\begin{theorem}\label{thm:DInFL-spaces-iso}
Let $\mathbf{A}$ be a DInFL-algebra and $\mathbb{W}$ a DInFL-space. Then we have $\mathbf{A} \cong \mathfrak{A}(\mathfrak{W}(\mathbf{A}))$ and $\mathbb{W}\cong \mathfrak{W}(\mathfrak{A}(\mathbb{W}))$.
\end{theorem}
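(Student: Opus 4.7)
The plan is to exhibit the natural Priestley-style evaluation maps in both directions, rely on classical Priestley duality for unbounded distributive lattices to obtain the underlying order/topological isomorphism, and then verify preservation of the extra structure (the identity $I$, the monoid $\cdot$, and the linear negations $\sim, -$).

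For $\mathbf{A} \cong \mathfrak{A}(\mathfrak{W}(\mathbf{A}))$, I take $\eta_\mathbf{A}\colon A \to K_{\mathfrak{W}(\mathbf{A})}$ defined by $\eta_\mathbf{A}(a) = X_a = \{F \in W_\mathbf{A} \mid a \in F\}$. Priestley duality for distributive lattices without bounds (recalled in the discussion before Definition~\ref{def:DInFL-spaces}) gives that $\eta_\mathbf{A}$ is a bijective lattice homomorphism onto the proper non-empty clopen upsets of $\mathfrak{W}(\mathbf{A})$. Preservation of $1$ is immediate from $\mathcal{I} = X_1$. For the linear negations, a direct calculation using $F^- = \{-b \mid b \notin F\}$ and involutivity of $\sim, -$ yields $a \in F^-$ iff ${\sim}a \notin F$, so $F \in {\sim}X_a$ iff $a \notin F^-$ iff ${\sim}a \in F$, i.e., ${\sim}X_a = X_{{\sim}a}$; the case of $-$ is symmetric. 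For the monoid operation, $X_a \bullet X_b \subseteq X_{a\cdot b}$ is immediate from the definition of $\bullet$, while the converse is obtained by taking $H \in X_{a\cdot b}$, observing ${\uparrow}a \cdot {\uparrow}b \subseteq {\uparrow}(a\cdot b) \subseteq H$, and applying Lemma~\ref{lem:Urq2.2} twice to produce generalised prime filters $F' \ni a$ and $G' \ni b$ with $F'\cdot G' \subseteq H$.

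For $\mathbb{W} \cong \mathfrak{W}(\mathfrak{A}(\mathbb{W}))$, I take $\varepsilon_\mathbb{W}\colon W \to W_{\mathfrak{A}(\mathbb{W})}$ defined by $\varepsilon_\mathbb{W}(x) = \{U \in K_\mathbb{W} \mid x \in U\}$, with $\varepsilon_\mathbb{W}(\bot) = \varnothing$ and $\varepsilon_\mathbb{W}(\top) = K_\mathbb{W}$ accounting for the two improper generalised prime filters. Classical Priestley duality then gives $\varepsilon_\mathbb{W}$ simultaneously as an order-isomorphism and a homeomorphism. Preservation of $I$ is immediate, since $\varepsilon_\mathbb{W}(x) \in \mathcal{I}$ iff $I \in \varepsilon_\mathbb{W}(x)$ iff $x \in I$. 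For the linear negations, using that $\sim$ and $-$ are mutually inverse involutions on $\mathfrak{A}(\mathbb{W})$ (so every $V \in K_\mathbb{W}$ is of the form ${\sim}U$ for $U = -V$), combined with the direct equivalence $x^{\sim} \in V$ iff $x \notin -V$, one concludes $V \in (\varepsilon_\mathbb{W}(x))^{\sim}$ iff $V \in \varepsilon_\mathbb{W}(x^{\sim})$; the case of $-$ is analogous.

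The main obstacle is preservation of $\circ$: unfolding the definition of $\bullet$, the equivalence to be shown becomes $z \in x \circ y$ iff $z \in U \circ V$ for every pair $U, V \in K_\mathbb{W}$ with $x \in U$ and $y \in V$. The forward direction is immediate from $x \circ y \subseteq U \circ V$. I would handle the converse contrapositively via a compactness argument. Assuming $z \notin x \circ y$, persistence (Lemma~\ref{lem:twiddle_minus_involutive}(iii)--(iv)) gives $z \notin u \circ v$ for all $u, v$ with $x \preccurlyeq u$ and $y \preccurlyeq v$, so the set $Z := \{(u,v) \in W \times W \mid z \in u \circ v\}$ is disjoint from the compact set ${\uparrow}x \times {\uparrow}y$. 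The plan is then to show that $Z$ is closed in the compact product topology; a tube-lemma argument then yields an open rectangle $U \times V \supseteq {\uparrow}x \times {\uparrow}y$ disjoint from $Z$, and since ${\uparrow}x$ is the directed intersection of the clopen upsets in $K_\mathbb{W}$ containing $x$ (similarly for $y$), a further compactness step shrinks $U, V$ down to clopen upsets $U_0, V_0$ with $(U_0 \times V_0) \cap Z = \varnothing$, giving $z \notin U_0 \circ V_0$, as required. Establishing the closedness of $Z$, which has to be extracted from condition~(2) of Definition~\ref{def:DInFL-spaces} together with the persistence of $\circ$, is the technical heart of the argument and the step I expect to be the main obstacle.
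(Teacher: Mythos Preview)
Your approach is exactly the paper's: the proof in the paper consists of a single sentence naming the two standard maps $a \mapsto X_a$ and $x \mapsto \{U \in K_{\mathbb{W}} \mid x \in U\}$ and asserting that they are isomorphisms, with no further verification. Everything you sketch---preservation of $1$, $\sim$, $-$, and $\cdot$ on the algebra side, and of $I$, $^\sim$, $^-$, and $\circ$ on the space side---goes well beyond what the paper writes out.

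The technical point you isolate (that preservation of $\circ$ under $\varepsilon_{\mathbb W}$ reduces to a separation/compactness argument requiring something like closedness of $Z=\{(u,v)\mid z\in u\circ v\}$) is genuine, and the paper's proof does not address it. You are right that condition~(2) of Definition~\ref{def:DInFL-spaces} only controls the third coordinate of the ternary relation (it says $U\circ V$ is clopen), not the first two, so extracting closedness of $Z$ from it alone is not straightforward; in standard Priestley-style dualities for relational structures one usually imposes an explicit point-closedness or tightness condition on the relation for precisely this reason. Your caution here is warranted, but since the paper treats this entire verification as routine and does not engage with it, there is nothing further to compare against.
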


\begin{proof}
The standard maps $a \mapsto X_a$ and $x \mapsto \{\,U \in K_{\mathbb{W}} \mid x \in U \,\}$ give us the required isomorphisms. 
\end{proof}

Here we will define the dual spaces of DqRAs. They will be topologised versions of the DqRA frames from Section~\ref{sec:frames}.

\begin{lemma}\label{lem:F^neg_prime_filters}
Let $\mathbf{A}=(A,\wedge, \vee, \cdot, 1, \sim, -, \neg)$ be a DqRA. If $F$ is a generalised prime filter of the lattice reduct of $\mathbf A$, then so is $F^{\neg}= 
 \left\{{\neg}a \mid a \notin F\right\}$.
\end{lemma}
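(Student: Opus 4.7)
The plan is to mirror the argument of Lemma~\ref{lem:F^-_and_F^sim_prime_filters}, replacing the linear negations $\sim,-$ by the De Morgan negation $\neg$. The key algebraic facts I will use are: $\neg$ is involutive ($\neg\neg a = a$), $\neg$ is order-reversing (an immediate consequence of $\mathsf{(Dm)}$ and involutivity), and the dual De Morgan law $\neg(a\vee b) = \neg a \wedge \neg b$ (obtained by applying $\neg$ to $\mathsf{(Dm)}$ with $\neg a,\neg b$ in place of $a,b$).

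First I would dispatch the degenerate cases: if $F = A$, then no $a\notin F$ exists and $F^\neg = \varnothing$; if $F = \varnothing$, then every $a$ lies outside $F$ so $F^\neg = \neg[A] = A$ by involutivity. So we may assume $F$ is a proper non-empty prime filter. For upward closure, suppose $a \in F^\neg$ and $a \leqslant b$. Write $a = \neg c$ with $c \notin F$. Since $\neg$ is order-reversing, $\neg b \leqslant \neg a = c$, so $\neg b \notin F$ (otherwise $c$ would be in the upset $F$), giving $b = \neg\neg b \in F^\neg$.

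For closure under meets, let $a,b \in F^\neg$, so $\neg a, \neg b \notin F$. By primeness of $F$, $\neg a \vee \neg b \notin F$, and therefore $a \wedge b = \neg\neg(a\wedge b) = \neg(\neg a \vee \neg b) \in F^\neg$, where the second equality uses the dual De Morgan law. For primeness, suppose $a \vee b \in F^\neg$, say $a \vee b = \neg c$ with $c \notin F$. Applying $\neg$ and using the dual De Morgan law gives $c = \neg a \wedge \neg b$. If both $\neg a, \neg b \in F$, then $c \in F$ (since $F$ is a filter), contradicting $c \notin F$. Hence $\neg a \notin F$ or $\neg b \notin F$, so $a \in F^\neg$ or $b \in F^\neg$.

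I don't expect any serious obstacle; the entire argument is a direct transcription of the proof for $F^\sim$, with $\neg$ in place of $\sim$, noting only that $\neg$ being a De Morgan involution gives exactly the same formal properties (involutivity, order-reversal, De Morgan) that were used for $\sim$ via the InFL-identities. The only minor care needed is to apply the dual De Morgan law in the right direction (for meets and for primeness), which follows routinely from $\mathsf{(Dm)}$ and $\neg\neg a = a$.
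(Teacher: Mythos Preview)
Your proposal is correct and follows exactly the approach the paper intends: the paper's own proof simply states that the argument is analogous to that of Lemma~\ref{lem:F^-_and_F^sim_prime_filters}, and you have carried out precisely that transcription with $\neg$ in place of $\sim$. If anything, your write-up is slightly more complete, since you explicitly verify closure under meets (using primeness of $F$), a step the paper's proof of Lemma~\ref{lem:F^-_and_F^sim_prime_filters} leaves implicit.
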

\begin{proof} This follows from a similar  argument to that in the proof of Lemma~\ref{lem:F^-_and_F^sim_prime_filters}. 
\end{proof}

As for DInFL-frames, we will consider doubly-pointed DqRA-frames, which are bounded and have the additional constraint that $I$ must be proper and non-empty. 

\begin{proposition}\label{prop:prime-filter_DqRA-frame}
Let $\mathbf{A}=(A,\wedge, \vee, \cdot, 1, \sim, -, \neg)$ be a DqRA. Define $\mathcal{I}$, $\preccurlyeq$, $\bullet$, $^-$ and $^\sim$ as in Proposition \ref{prop:prime-filter_DInFL-frame} and for all $F \in W_{\mathbf{A}}$, define 
$F^\neg = \left\{\neg a \mid a \notin F\right\}$. Then  $\left(W_\mathbf{A}, \mathcal{I}, \preccurlyeq, \bullet, ^{\sim}, ^{-}, ^\neg, \varnothing, A\right)$ is a
doubly-pointed
DqRA-frame. 
\end{proposition}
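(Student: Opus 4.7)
The plan is to reduce the verification to the three additional axioms (7), (8), (9) of Definition~\ref{def:DqRA_frames}, since the underlying doubly-pointed DInFL-frame structure on $W_\mathbf{A}$ with bounds $\varnothing, A$ and the operations ${}^\sim, {}^-$ has already been established in Proposition~\ref{prop:prime-filter_DInFL-frame}, and Lemma~\ref{lem:F^neg_prime_filters} guarantees that $F \mapsto F^\neg$ is a well-defined operation on $W_\mathbf{A}$.

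Axioms (7) and (8) can be dispatched quickly. For $F^{\neg\neg} = F$ with $F$ proper and non-empty, I would observe that $a \in F^{\neg\neg}$ iff $\neg a \notin F^\neg$ iff $\neg\neg a = a \in F$, invoking the involutivity of $\neg$ on $\mathbf{A}$; the cases $F = \varnothing$ and $F = A$ are immediate. For (8), if $F \subseteq G$ and $\neg b \in G^\neg$ with $b \notin G$, then $b \notin F$, so $\neg b \in F^\neg$.

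The heart of the proof is axiom (9), $\Rb{F}{G}{H^-} \Longleftrightarrow \Rb{G^{\sim\neg}}{F^{\sim\neg}}{H^\neg}$. My strategy is to introduce the auxiliary map $\varphi(x) := -\neg x$ on $A$ and to verify, using $\mathsf{(Di)}$ from Lemma~\ref{lem:Di-in-qRA} together with $\neg\neg = \mathrm{id}$ and $-\sim = \mathrm{id}$, that $\varphi$ is an involution: indeed $\neg(-\neg x) = \sim\neg\neg x = \sim x$, hence $\varphi\varphi(x) = -\sim x = x$. Unwinding the definitions, $x \in G^{\sim\neg}$ iff $-\neg x \in G$, so $\varphi$ restricts to a bijection $G^{\sim\neg} \leftrightarrow G$, and likewise $F^{\sim\neg} \leftrightarrow F$; moreover $\varphi$ interchanges $\varnothing$ and $A$ at the extreme filters, so the trivial cases absorb into the general argument. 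The key algebraic identity is
\[
\neg\bigl(\varphi(a') \cdot \varphi(b')\bigr) = {\sim}(b' \cdot a'),
\]
which I would establish by applying $\mathsf{(Dp)}$ to obtain $\neg(\varphi(a') \cdot \varphi(b')) = \neg\varphi(a') + \neg\varphi(b')$, then reducing $\neg\varphi(a') = \neg(-\neg a') = \sim a'$ via $\mathsf{(Di)}$, and finally expanding with the alternative form $x + y = \sim(-y \cdot -x)$ together with $-\sim = \mathrm{id}$. Combined with the elementary equivalences $x \in H^\neg \iff \neg x \notin H$ and $y \in H^- \iff \sim y \notin H$, this identity converts the statement $\varphi(a') \cdot \varphi(b') \in H^\neg$ into $b' \cdot a' \in H^-$, so the bijective correspondence delivers both directions of (9) simultaneously.

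I expect the main obstacle to be bookkeeping. The factor-swap between $\R{F}{G}{\cdot}$ on the left and $\R{G^{\sim\neg}}{F^{\sim\neg}}{\cdot}$ on the right, combined with the fact that $\sim$ and $-$ are inverses of each other but not individually involutive, makes it easy to slip when threading together the identities ${\sim}{-} = {-}{\sim} = \mathrm{id}$, $\neg\neg = \mathrm{id}$ and $\neg\sim = -\neg$. Parametrizing the bijection by $\varphi = -\neg$ rather than by $\sim\neg$ or $\neg\sim$ is what keeps both occurrences of $\sim$ in the key identity in the right place and makes the verification clean rather than error-prone.
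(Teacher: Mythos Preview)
Your proof is correct and follows essentially the same route as the paper's: both verify (7) and (8) directly from $\neg\neg=\mathrm{id}$ and order-reversal, and both handle (9) by unwinding $a\in G^{\sim\neg}\Leftrightarrow -\neg a\in G$ and then invoking $\mathsf{(Dp)}$ together with $\mathsf{(Di)}$ to pass between $\neg(\cdot)$ and $\sim(\cdot)$. Your packaging via the involution $\varphi=-\neg$ and the single identity $\neg(\varphi(a')\cdot\varphi(b'))=\sim(b'\cdot a')$ is a tidy way to make the two directions of (9) symmetric, whereas the paper writes out each direction separately; the underlying computation is the same. One small slip: the bijection $G\leftrightarrow G^{\sim\neg}$ \emph{fixes} $\varnothing$ and $A$ rather than interchanging them (e.g.\ $\varnothing^{\sim}=A$ and then $A^{\neg}=\varnothing$), but this is harmless since the trivial cases of (9) go through either way.
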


\begin{proof}
We must show that (7), (8) and (9) from Definition~\ref{def:DqRA_frames} hold. For (7), since $\neg \neg a = a$, we have $a \in F$ iff $\neg a \notin F^{\neg}$ iff $\neg \neg a \in F^{\neg\neg}$ iff $a \in F^{\neg\neg}$. 

To see that $F \preccurlyeq G$ implies $G^\neg \preccurlyeq F^\neg$, assume $F \subseteq G$ and let $a \in G^{\neg}$. The latter implies that $a = \neg b$ for some $b \notin G$. Hence, $\neg a = \neg \neg b = b$, and therefore $\neg a \notin G$. Since $F \subseteq G$, we have $\neg a \notin F$, and so $\neg\neg a =a \in F^\neg$. 

Finally, we show (9), i.e. 
$\Rb FG{H^{-}}$ iff $\Rb{G^{\sim\neg}}{F^{\sim\neg}}{H^\neg}$. Assume that $\Rb{F}{G}{H^-}$. Notice that if $G^{\sim\neg}=\varnothing$, $F^{\sim\neg}=\varnothing$, or 
$H=\varnothing$
then $\Rb{G^{\sim\neg}}{ F^{\sim\neg}}{H^\neg}$ is trivially satisfied. Hence let $a \in G^{\sim \neg}$ and $b \in F^{\sim\neg}$. We have to show that $a\cdot b \in H^\neg$. Since $a \in G^{\sim \neg}$, there is some $c \in A$ such that $c \notin G^\sim$ and $a = \neg c$. Hence, $\neg a = \neg \neg c = c$, and so $\neg a \notin G^\sim$. This implies that $-\neg a \in G$. Likewise, since $b \in F^{\sim\neg}$, we have $-\neg b \in F$. Therefore, by our assumption that $\Rb{F}{G}{H^-}$, we get $-\neg b \cdot -\neg a \in H^{-}$. It must therefore be the case that $\sim\left(-\neg b \cdot -\neg a\right) \notin H$. Applying $\mathsf{(Dp)}$, we get $\neg\left(a\cdot b\right) \notin H$, which means $a\cdot b \in H^\neg$, as required. 

Conversely, assume $\Rb{G^{\sim\neg}}{F^{\sim \neg}}{H^\neg}$. If $F=\varnothing$, $G=\varnothing$, or $H =\varnothing$ then $\Rb{F}{G}{H^{-}}$ is trivially true. So, let $a \in F$ and $b \in G$. We must show that $a \cdot b \in H^{-}$. Since $a \in F$, we have ${\sim}a \notin F^{\sim}$, and so $\neg{\sim}a \in F^{\sim\neg}$. Hence, by $\mathsf{(Di)}$, $-\neg a \in F^{\sim\neg}$. Likewise, since $b \in G$, we can show that $-\neg b \in G^{\sim\neg}$. It thus follows from our assumption that $-\neg b \cdot -\neg a \in H^\neg$, which means $\neg\left(-\neg b \cdot -\neg a\right) \notin H$. This gives $-\neg\left(-\neg b \cdot - \neg a\right) \in H^-$, and therefore, by $\mathsf{(Di)}$, $\neg{\sim}\left(-\neg b \cdot -\neg a\right) \in H^{-}$. Applying $\mathsf{(Dp)}$ to this, we get $\neg\neg\left(a \cdot b\right)=a \cdot b \in H^{-}$.  
\end{proof}

We now define the Priestley-style dual objects of DqRAs. As before, for $U$ an upset of a DqRA-frame, we have $\neg U = \{x \in W \mid x^\neg \notin W \}$. 

\begin{definition}
A doubly-pointed DqRA-frame with topology   $(W,I, \preccurlyeq, \RR, ^{\sim}, ^{-}, ^{\neg},\bot, \top,\tau)$
is a \emph{DqRA-space} if $\tau$ is a compact totally order-disconnected topology and the following conditions are satisfied: 
\begin{enumerate}[label=\textup{(\arabic*)}]
\item $I$ is clopen. 
\item If $U$ and $V$ are clopen proper non-empty upsets, then $U \circ V$ is clopen.
\item If $U$ is a clopen proper non-empty upset, then 
${\sim} U$, $-U$ and $\neg U$ are clopen. 
\end{enumerate}
\end{definition}

We extend the maps $\mathfrak{A}$ and $\mathfrak{W}$ to DqRAs and DqRA-spaces, and denote these extensions by $\mathfrak{A}_q$ and $\mathfrak{W}_q$. 

\begin{proposition}
For a DqRA $\mathbf{A}$, let $\mathfrak{W}_q(\mathbf{A})=(W_\mathbf{A}, \mathcal{I},\preccurlyeq ,\bullet , ^{\sim},^{-}, ^{\neg},\varnothing,A,\tau_P)$ and for a DqRA-space $\mathbb{W}$, let  $\mathfrak{A}_q\left(\mathbb{W}\right)=\left(K_{\mathbb{W}}, \cap, \cup, \circ, I, \sim, -, \neg\right)$.
Then $\mathfrak{W}_q\left(\mathbf{A}\right)$ is a DqRA-space and $\mathfrak{A}_q\left(\mathbb{W}\right)$ is a DqRA. 
\end{proposition}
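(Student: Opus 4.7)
The plan is to mirror the proof of Proposition~\ref{prop:W(A)-DInFL-space}, invoking the DInFL version wholesale and adding only the verifications that involve the new unary operation $\neg$. For the first claim, that $\mathfrak{W}_q(\mathbf{A})$ is a DqRA-space, the underlying DqRA-frame structure comes from Proposition~\ref{prop:prime-filter_DqRA-frame}, and compact total order-disconnectedness of $\tau_P$ is Priestley duality. Clopenness of $\mathcal{I}$, of $U \bullet V$ for clopen proper non-empty upsets $U, V$, and of ${\sim}U$ and $-U$ were all handled in Proposition~\ref{prop:W(A)-DInFL-space}. Thus the only remaining point is that $\neg U$ is clopen whenever $U$ is a clopen proper non-empty upset. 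Following the pattern used there for $-U$, I would write $U = X_a$, compute $\neg X_a = \{F \in W_\mathbf{A} \mid a \notin F^{\neg}\}$ from the definition of $\neg$, and then use that $a \in F^\neg$ iff $\neg a \notin F$ (an immediate consequence of the involutivity of $\neg$: if $a = \neg b$ with $b \notin F$ then $b = \neg a$) to conclude $\neg X_a = X_{\neg a}$, which is subbasic clopen.

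For the second claim, that $\mathfrak{A}_q(\mathbb{W})$ is a DqRA, Proposition~\ref{prop:W(A)-DInFL-space} gives that $(K_\mathbb{W},\cap,\cup,\circ,I,\sim,-)$ is already a DInFL-algebra, so it suffices to check that $\neg$ restricts to a well-defined operation on $K_\mathbb{W}$ and that the additional DqRA identities hold. For well-definedness, clopenness of $\neg U$ is axiom (3) of a DqRA-space; upward closure follows from $^\neg$ being order-reversing on $W$ (condition~(8) of Definition~\ref{def:DqRA_frames}); and properness/non-emptiness are inherited via $\neg \varnothing = W$, $\neg W = \varnothing$, and $\neg\neg U = U$, since if $\neg U$ were empty or all of $W$ then the same would follow for $U$. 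The identities $\neg\neg U = U$, $\mathsf{(Dm)}$, and $\mathsf{(Dp)}$ have already been verified for arbitrary elements of $\mathsf{Up}(W,\preccurlyeq)$ in Proposition~\ref{prop:complex_algebra_DqRA-frame}, so they automatically hold on the subset $K_\mathbb{W}$.

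The only non-routine step is the identification $\neg X_a = X_{\neg a}$ in the first claim; once that equality is in hand via the involutivity of $\neg$, the rest of the argument is bookkeeping inherited from the DInFL-level results and from Proposition~\ref{prop:complex_algebra_DqRA-frame}.
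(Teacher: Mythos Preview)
Your proposal is correct and follows essentially the same route as the paper: reduce to Proposition~\ref{prop:W(A)-DInFL-space} for everything except $\neg$, compute $\neg X_a = X_{\neg a}$ via the involutivity of $\neg$ for the first claim, and appeal to Proposition~\ref{prop:complex_algebra_DqRA-frame} together with axiom~(3) of a DqRA-space for the second. Your treatment is in fact slightly more careful than the paper's, which does not explicitly verify that $\neg U$ remains proper and non-empty when $U \in K_{\mathbb{W}}$.
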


\begin{proof}
Most of the work has been done by Proposition~\ref{prop:W(A)-DInFL-space}. If $U$ is a clopen proper non-empty upset of $\mathfrak{W}(\mathbf{A})$, then $U=X_a$ for some $a \in A$. Then $\neg U =\{F \mid F^{\neg} \notin U\}= \{F \mid a \notin F^\neg \} =\{F \mid \neg \neg a \notin F^\neg\} =\{ F \mid \neg a \in F\}=X_{\neg a}$. To show 
$\mathfrak{A}_q\left(\mathbb{W}\right)$ is a DqRA, we 
use Proposition~\ref{prop:complex_algebra_DqRA-frame} to show that $\neg U$ has the required properties. 
\end{proof}

The same maps from Theorem~\ref{thm:DInFL-spaces-iso} are used to prove the theorem below. 

\begin{theorem}
For any DqRA $\mathbf{A}$ and DqRA-space $\mathbb{W}$, we have that $\mathbf{A} \cong \mathfrak{A}_q(\mathfrak{W}_q(\mathbf{A}))$ and $\mathbb{W} \cong \mathfrak{W}_q(\mathfrak{A}_q(\mathbb{W}))$. 
\end{theorem}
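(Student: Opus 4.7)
The plan is to reuse the two standard Priestley-style maps from Theorem~\ref{thm:DInFL-spaces-iso}, namely $\varphi_\mathbf{A}: a \mapsto X_a = \{F \in W_\mathbf{A} \mid a \in F\}$ on the algebra side, and $\eta_\mathbb{W}: x \mapsto \{U \in K_\mathbb{W} \mid x \in U\}$ on the space side. By that theorem, $\varphi_\mathbf{A}$ is already a DInFL-algebra isomorphism and $\eta_\mathbb{W}$ is already a DInFL-space isomorphism, so the only remaining task is to check that both maps respect the De Morgan operation $\neg$. Everything else, including the preservation of topology, the lattice structure, the constants $I$ and bounds, the monoid operation, and the linear negations, has already been verified.

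For $\varphi_\mathbf{A}$, the goal is $X_{\neg a} = \neg X_a$. Combining the definition of $\neg$ on upsets from Proposition~\ref{prop:complex_algebra_DqRA-frame} with the frame operation $F^\neg = \{\neg b \mid b \notin F\}$ from Proposition~\ref{prop:prime-filter_DqRA-frame} gives
\[
\neg X_a = \{F \in W_\mathbf{A} \mid F^\neg \notin X_a\} = \{F \in W_\mathbf{A} \mid a \notin F^\neg\}.
\]
Since $\neg$ is involutive on $A$, the condition $a \notin F^\neg$ is equivalent to $\neg a \in F$, so $\neg X_a = X_{\neg a}$. The degenerate cases $F = \varnothing$ and $F = A$ are immediate, since $A^\neg = \varnothing$ and $\varnothing^\neg = A$.

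For $\eta_\mathbb{W}$, the goal is $\eta_\mathbb{W}(x^\neg) = \eta_\mathbb{W}(x)^\neg$, where on the right $\neg$ is the frame operation on the generalised prime filter $\eta_\mathbb{W}(x)$ of $\mathfrak{A}_q(\mathbb{W})$. Unpacking, a clopen proper non-empty upset $V$ belongs to $\eta_\mathbb{W}(x)^\neg$ iff $V = \neg U$ for some $U \in K_\mathbb{W}$ with $x \notin U$. Because $\neg$ is involutive on $K_\mathbb{W}$, this is equivalent to $x \notin \neg V$, which in turn, by definition of $\neg V$, is equivalent to $x^\neg \in V$, i.e.\ $V \in \eta_\mathbb{W}(x^\neg)$.

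No real conceptual obstacle stands in the way: the two checks are forced by the interplay between the frame operation $^\neg$ on generalised prime filters and the upset-level operation $\neg$, both of which have already been set up. The only point requiring a little care is the bookkeeping around the degenerate filters $\varnothing$ and $A$, but this is built into the generalised-prime-filter framework of Proposition~\ref{prop:prime-filter_DqRA-frame}.
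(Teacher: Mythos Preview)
Your proposal is correct and follows exactly the approach the paper takes: the paper's proof consists of a single sentence stating that the same maps from Theorem~\ref{thm:DInFL-spaces-iso} work, and you have supplied the explicit $\neg$-preservation checks that the paper leaves implicit. In fact, your verification that $\neg X_a = X_{\neg a}$ already appears almost verbatim in the paper's proof of the preceding proposition, so you are right that nothing new is needed beyond that and the dual check for $\eta_\mathbb{W}$.
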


In the remainder of this section, we build on the definitions from Section~\ref{sec:frame-morphisms} and define morphisms for the two classes of spaces. 

\begin{definition}\label{def:DInFL-space-morphism} 
Consider two DInFL-spaces $\mathbb{W}_1=(W_1, \mathcal{I}_1,\preccurlyeq_1,\circ_1,^{{\sim}_1},^{-_1},\bot_1,\top_{\!1} ,\tau_1)$ and $\mathbb{W}_2=
(W_2,\mathcal{I}_2,\preccurlyeq_2, \circ_2,^{{\sim}_2},^{-_2},\bot_2,\top_{\!2} ,\tau_2)$. A function $f : W_1 \to W_2$ is a \emph{DInFL-space morphism} if 
\begin{enumerate}[label=\textup{(\arabic*)}]
\item $f$ is a DInFL-frame morphism
\item $f$ is continuous 
\item $f$ preserves the bounds. 
\end{enumerate}

\end{definition}

There are two ways to define morphisms between DqRA spaces. 

\begin{definition}\label{def:DqRA-space-morphism}
Consider two DqRA-spaces $\mathbb{W}_1=(W_1, \mathcal{I}_1,\circ_1,^{{\sim}_1},^{-_1},^{\neg_1},\bot_1,\top_{\!1} ,\tau_1)$ and $\mathbb{W}_2=
(W_2,\mathcal{I}_2,\circ_2,^{{\sim}_2},^{-_2},^{\neg_2},\bot_2,\top_{\!2} ,\tau_2)$. 
A function $f : W_1 \to W_2$ is a \emph{DqRA-space morphism} if it satisfies either of the following: 
\begin{enumerate}
[label=\textup{(\arabic*)}]
\item $f$ is a DqRA-frame morphism that is continuous and preserves the bounds; 
\item $f$ is a DInFL-space morphism such that $f(x^{\neg_1})=f(x)^{\neg_2}$.
\end{enumerate}
\end{definition}

If $f : \mathbb{W}_1 \to \mathbb{W}_2$ is a DInFL-space morphism, we define $\mathfrak{A}(f): \mathfrak{A}(\mathbb{W}_2) \to \mathfrak{A}(\mathbb{W}_1)$ by $\mathfrak{A}(f)(U)=f^{-1}(U)$ for $U$ a clopen proper non-empty upset of $\mathbb{W}_2$. When $\mathbb{W}_1$ and $\mathbb{W}_2$ are DqRA-spaces and $f: \mathbb{W}_1 \to \mathbb{W}_2$ a DqRA-space morphism, we define $\mathfrak{A}_q(f): \mathfrak{A}_q(\mathbb{W}_2)\to \mathfrak{A}_q(\mathbb{W}_1)$ the same way. If $h :\mathbf{A}_1 \to \mathbf{A}_2$ is a DInFL-algebra homomorphism, then $\mathfrak{W}(h): \mathfrak{W}(\mathbf{A}_2)\to \mathfrak{W}(\mathbf{A}_1)$ is defined by $\mathfrak{W}(h)(F)=h^{-1}(F)$ for $F$  a generalised prime filter of $\mathbf{A}_2$. When $\mathbf{A}_1$ and $\mathbf{A}_2$ are DqRAs and $h: \mathbf{A}_1\to \mathbf{A}_2$ a DqRA homomorphism, we apply the same definition for $\mathfrak{W}_q(h) : \mathfrak{W}_q(\mathbf{A}_2) \to \mathfrak{W}_q(\mathbf{A}_1)$. 

In light of the definitions above we have the following theorem, the proof of which follows from the results of Section~\ref{sec:frame-morphisms} and standard results from Priestley duality. 

\begin{theorem}\label{thm:duality-morphisms-spaces}
\,
\begin{enumerate}[label=\textup{(\roman*)}]
\item Let $h: \mathbf{A}_1 \to \mathbf{A}_2$ be a DInFL-algebra (DqRA) homomorphism. Then $\mathfrak{W}(h): \mathfrak{W}(\mathbf{A}_2) \to \mathfrak{W}(\mathbf{A}_1)$ is a DInFL-space morphism ($\mathfrak{W}_q(h): \mathfrak{W}_q(\mathbf{A}_2)\to \mathfrak{W}_q(\mathbf{A}_1)$ is a DqRA-space morphism).
\item Let $f: \mathbb{W}_1 \to \mathbb{W}_2$ be a DInFL-space (DqRA-space) morphism. Then $\mathfrak{A}(f): \mathfrak{A}(\mathbb{W}_2) \to \mathfrak{A}(\mathbb{W}_1)$ is a DInFL-algebra homomorphism  ($\mathfrak{A}_q(h): \mathfrak{A}_q(\mathbb{W}_2)\to \mathfrak{A}_q(\mathbb{W}_1)$ is a DqRA homomorphism).    
\end{enumerate}
\end{theorem}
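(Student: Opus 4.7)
The plan is to verify the two parts of the theorem separately, reducing the work in each case to the frame-morphism results of Section~\ref{sec:frame-morphisms} together with the standard topological and bounds-preservation content of Priestley duality.

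For part (i), starting with a DInFL-algebra homomorphism $h: \mathbf{A}_1 \to \mathbf{A}_2$, I would first observe that $\mathfrak{W}(h)(F) = h^{-1}(F)$ sends generalised prime filters to generalised prime filters: preservation of $\wedge, \vee$ by $h$ implies $h^{-1}(F)$ is a filter (possibly $\varnothing$ or $\mathbf{A}_1$), and primeness transfers since $a \vee b \in h^{-1}(F)$ iff $h(a) \vee h(b) \in F$. The six DInFL-frame-morphism conditions of Definition~\ref{def:DInFL-frame-morphism} are then verified directly: condition~(1) is immediate as $\preccurlyeq$ is set inclusion; conditions~(4), (5) follow from $h({\sim}a) = {\sim}h(a)$ and $h(-a) = -h(a)$ via the descriptions of $F^{\sim}, F^{-}$ from Lemma~\ref{lem:F^-_and_F^sim_prime_filters}; condition~(6) uses $h(1) = 1$; and condition~(2) uses $h(a \cdot b) = h(a) \cdot h(b)$. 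Continuity comes from the identities $\mathfrak{W}(h)^{-1}(X_a) = X_{h(a)}$ and $\mathfrak{W}(h)^{-1}(X_a^c) = X_{h(a)}^c$, which are subbasic clopen sets in $\mathfrak{W}(\mathbf{A}_2)$; preservation of $\varnothing$ and $\mathbf{A}_2$ is trivial. The DqRA case adds condition~(7) of Definition~\ref{def:DqRA-frame-morphism}, verified analogously using $h(\neg a) = \neg h(a)$ and Lemma~\ref{lem:F^neg_prime_filters}.

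The main technical step in part (i) is condition~(3), the zigzag condition: given a generalised prime filter $H$ of $\mathbf{A}_2$ and generalised prime filters $U, V$ of $\mathbf{A}_1$ with $U \cdot V \subseteq h^{-1}(H)$, one must produce generalised prime filters $F, G$ of $\mathbf{A}_2$ with $h(U) \subseteq F$, $h(V) \subseteq G$, and $F \cdot G \subseteq H$. The strategy is to let $F_0, G_0$ be the filters of $\mathbf{A}_2$ generated by $h(U), h(V)$; since $h$ preserves $\wedge$ and $\cdot$ is order-preserving in both coordinates, $F_0 \cdot G_0 \subseteq H$, and then two applications of Lemma~\ref{lem:Urq2.2} (mirroring the treatment of condition~(4) in Proposition~\ref{prop:prime-filter_DInFL-frame}) yield the desired prime $F \supseteq F_0$ and $G \supseteq G_0$. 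Edge cases where $U$, $V$, or $H$ equals $\varnothing$ or the whole algebra are handled by taking the corresponding $F, G$ to be $\varnothing$ or all of $\mathbf{A}_2$.

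Part (ii) is more streamlined. For a DInFL-space morphism $f: \mathbb{W}_1 \to \mathbb{W}_2$ and a clopen proper non-empty upset $U$ of $\mathbb{W}_2$, Lemma~\ref{inv(f)_upsets_to_upsets} shows $f^{-1}(U)$ is an upset, continuity of $f$ makes it clopen, and bounds-preservation ensures properness and non-emptiness (in a doubly-pointed Priestley space, a proper non-empty upset is precisely one containing $\top$ and omitting $\bot$). Preservation of the lattice operations is immediate, while preservation of $\circ, \sim, -$ (and $\neg$ in the DqRA case) is exactly Proposition~\ref{prop:DInFL-frame-morphisms-duality}(i), respectively Proposition~\ref{prop:DqRA-frame-morphisms-duality}(i), whose arguments depend only on the frame-morphism conditions and carry over verbatim.
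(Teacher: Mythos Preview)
Your proposal is correct and follows precisely the route the paper indicates: the paper's proof is the single sentence ``follows from the results of Section~\ref{sec:frame-morphisms} and standard results from Priestley duality,'' and you have spelled out exactly those details---directly verifying the frame-morphism conditions for $\mathfrak{W}(h)$ (with the zigzag condition~(3) handled via filter generation and Lemma~\ref{lem:Urq2.2}), invoking continuity and bounds-preservation from Priestley duality, and for part~(ii) reusing the computations of Proposition~\ref{prop:DInFL-frame-morphisms-duality}(i) restricted to clopen proper non-empty upsets.
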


\section{Representable DInFL-algebras and DqRAs}
\label{sec:RDqRAs}

One of the most captivating problems within the study of relation algebras is determining which abstract algebras are \emph{representable}~\cite{JT1948}. Recent work by two of the current  authors~\cite{RDqRA25} developed a definition of representability for DqRAs (and DInFL-algebras). Here we briefly recall the basic set-up for constructing concrete DqRAs and DInFL-algebras. While representable relation algebras are embeddable in relation algebras built from equivalence relations (or sets), for DInFL-algebras and DqRAs, the concrete algebras are built from partially ordered equivalence relations (or posets). 

Below we set up some notation and recall some basic facts about binary relations. 
For 
$R \subseteq X^2$ 
its converse is 
$R^\smallsmile = \left\{\left(x, y\right) \mid \left(y, x\right) \in R\right\}$. When considering a binary relation, $R$, it will usually be contained in an equivalence relation $E$ and therefore
the complement of $R$, denoted  $R^c$, will mean $R^c=\{\, (x,y) \in E \mid (x,y) \notin R\,\}$. If an equivalence relation is not specified, it should be taken as $X^2$. 
 For binary relations $R$ and $S$, their composition is $R \mathbin{;} S = \left\{\left(x, y\right) \mid \left(\exists z \in X\right)\left(\left(x, z\right) \in R \textnormal{ and } \left(z, y\right) \in S\right)\right\}$.

We define $\mathrm{id}_X=\{\,(x,x)\mid x \in X\,\}$.
The following familiar equivalences will be used often. 
If $E$ is an equivalence relation on $X$ and $R,S,T \subseteq E$ then we have $(R^{\smallsmile})^{\smallsmile}=R$, $(R^{\smallsmile})^c=(R^c)^{\smallsmile}$, and $\mathrm{id}_X\mathbin{;} R = R\mathbin{;} \mathrm{id}_X = R$. Further, $\left(R\, ; S\right)\mathbin{;} T = R\mathbin{;} \left(S\mathbin{;} T\right)$ and 
$\left(R\mathbin{;} S\right)^\smallsmile = S^\smallsmile\mathbin{;} R^\smallsmile$.

Given a function $\gamma: X \to X$, we use  $\gamma$ to denote either the function or the binary relation that is its graph. We emphasize that the lemma below applies when $\gamma$ is a bijective function $\gamma : X \to X$.
We remind the reader that $R^c$ is the complement of $R$ in $E$.

\begin{lemma}
[{\cite[Lemma 3.4]{RDqRA25}}]
\label{lem:important_eq_injective_map}
Let $E$ be an equivalence relation on a set $X$, and let $R, \gamma \subseteq E$. If 
$\gamma$ satisfies 
$\gamma^{\smallsmile}\mathbin{;} \gamma = \mathrm{id}_X$ and $\gamma \mathbin{;} \gamma^{\smallsmile}=\mathrm{id}_X$
then the following hold:
\begin{enumerate}[label=(\roman*)]
\item $\left(\gamma\mathbin{;} R\right)^c = \gamma \mathbin{;} R^c$
\item $\left(R\mathbin{;} \gamma\right)^c = R^c \mathbin{;} \gamma$.
\end{enumerate}	
\end{lemma}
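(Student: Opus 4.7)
The plan is first to unpack what $\gamma^\smallsmile \mathbin{;} \gamma = \mathrm{id}_X$ and $\gamma \mathbin{;} \gamma^\smallsmile = \mathrm{id}_X$ say about $\gamma$ as a relation: together they force $\gamma$ to be the graph of a bijection $X \to X$ (the first identity gives that $\gamma$ is single-valued and $\gamma^\smallsmile$ is total, the second gives the reverse). Since $\gamma \subseteq E$, this bijection satisfies $(x,\gamma(x)) \in E$ for every $x$, and hence by the symmetry and transitivity of $E$, the statement $(\gamma(x),y) \in E$ is equivalent to $(x,y) \in E$. This immediately yields the key identity $\gamma \mathbin{;} E = E$, which is the one non-formal input the proof will use.

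Next I would establish the Boolean-algebraic fact that left-composition with a bijection distributes over intersections and unions of sub-relations: for any $S,T\subseteq X^2$,
\[
\gamma \mathbin{;} (S \cap T) = (\gamma \mathbin{;} S) \cap (\gamma \mathbin{;} T), \qquad \gamma \mathbin{;} (S \cup T) = (\gamma \mathbin{;} S) \cup (\gamma \mathbin{;} T).
\]
The union identity holds for arbitrary relations; the intersection one uses that $\gamma$ is single-valued, so if $(x,y)$ lies in both $\gamma\mathbin{;} S$ and $\gamma\mathbin{;} T$ then the intermediate point must be $\gamma(x)$ in both cases and so $(\gamma(x),y) \in S \cap T$. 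Specialising to $S = R$ and $T = R^c$, with $R \cup R^c = E$ and $R \cap R^c = \varnothing$, gives
\[
(\gamma\mathbin{;} R) \cap (\gamma\mathbin{;} R^c) = \gamma\mathbin{;}\varnothing = \varnothing, \qquad (\gamma\mathbin{;} R) \cup (\gamma\mathbin{;} R^c) = \gamma\mathbin{;} E = E,
\]
so $\gamma\mathbin{;} R^c$ is precisely the complement of $\gamma\mathbin{;} R$ inside $E$, which is (i).

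For (ii) I would either run the symmetric argument with right-composition (noting that $\gamma^\smallsmile$ satisfies exactly the same hypotheses as $\gamma$, since the two identities simply swap), or deduce it from (i) by applying converse: $(R\mathbin{;}\gamma)^\smallsmile = \gamma^\smallsmile\mathbin{;} R^\smallsmile$ and $(R^\smallsmile)^c = (R^c)^\smallsmile$, so applying (i) to $\gamma^\smallsmile$ and $R^\smallsmile$ and then converting back yields (ii). I do not anticipate any real obstacle here; the only subtle point is that the complement is taken inside $E$ rather than inside $X^2$, and this is exactly where the hypotheses $\gamma \subseteq E$ and ``$E$ is an equivalence relation'' feed in, through the identity $\gamma\mathbin{;} E = E$.
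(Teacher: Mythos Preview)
Your argument is correct. Note, however, that the paper does not supply its own proof of this lemma: it is quoted verbatim from \cite[Lemma~3.4]{RDqRA25} and used as a black box, so there is no ``paper's proof'' to compare against. Your route --- extracting that $\gamma$ is the graph of a bijection, establishing $\gamma\mathbin{;}E=E$ from $\gamma\subseteq E$ and transitivity/symmetry, then using that left-composition with a function preserves intersections to get disjointness and $\gamma\mathbin{;}E=E$ to get that the union is all of $E$ --- is the standard one and goes through cleanly. The derivation of (ii) from (i) via converse is also fine once you observe that $(S^\smallsmile)^c=(S^c)^\smallsmile$ holds because $E$ is symmetric, and that $\gamma^\smallsmile$ satisfies the same hypotheses as $\gamma$.
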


In order to construct a DqRA of binary relations, consider a poset $\mathbf X = \left(X, \leqslant\right)$ and   $E$ an equivalence relation on $X$ with ${\leqslant} \subseteq E$. The equivalence relation $E$ is then partially ordered for all $(u, v), (x, y) \in E$ as follows: $(u, v) \preccurlyeq (x, y)  \textnormal{ iff } x \leqslant u \textnormal{ and } v\leqslant y$. 
From the poset 
$\mathbf E = \left(E, \preccurlyeq\right)$, we  consider the set of upsets, $\mathsf{Up}\left(\mathbf E\right)$, which forms a distributive lattice under the inclusion order. 

If $R,S \in \mathsf{Up}(\mathbf E)$, then $R\mathbin{;}S \in \mathsf{Up}(\mathbf E)$. The set of downsets of $(E,\preccurlyeq)$ is denoted $\mathsf{Down}(\mathbf{E})$ and is also closed under composition. Importantly, $R \in \mathsf{Up} (\mathbf{E})$ iff $R^c \in \mathsf{Down}(\mathbf{E})$ iff $R^{\smallsmile} \in \mathsf{Down}(\mathbf{E})$.
We have 
${\leqslant}\in \mathsf{Up}(\mathbf{E})$ and it is the identity with respect to~$\mathbin{;}$. 
Further, the operation of relational composition $\mathbin{;}$ is residuated with  
$R \backslash_{\mathsf{Up}(\mathbf E)} S = (R^{\smallsmile}\mathbin{;}S^c)^c$ and 
$R/_{\mathsf{Up}(\mathbf E)}S=(R^c\mathbin{;}S^{\smallsmile})^c$.
Hence $\left( \mathsf{Up}(\mathbf{E}), \cap, \cup, \mathbin{;},\backslash_{\mathsf{Up}(\mathbf E)},/_{\mathsf{Up}(\mathbf E)}, \leqslant \right)$ is a distributive residuated lattice.

The remaining DqRA operations in the construction (see  Theorem~\ref{thm:Dq(E)} below)  require an order automorphism $\alpha : X \to X$ and a dual order automorphism $\beta : X \to X$. 
The underlying lattice structure of the  construction from~\cite{RDqRA25} 
is the same distributive lattice obtained by Galatos and Jipsen~\cite{GJ2020, GJ2020a}, and  by Jipsen and \v{S}emrl~\cite{JS2023}. 
The additional requirement of an order automorphism $\alpha$ allows for the representation of \emph{odd} DInFL-algebras (i.e. $1=0$). The dual order automorphism $\beta$ is used to define $\neg$, and forces the poset to be self-dual.

We note briefly some facts regarding set-theoretic operations on $\mathsf{Up}(\mathbf{E})$; details can be found in~\cite[Lemma 3.5]{RDqRA25}.
If $R, S  \in \mathsf{Up}\left(\mathbf E\right)$, then 
$\alpha\mathbin{;} R$, and  $R\mathbin{;} \alpha$ are elements of $\mathsf{Up}\left(\mathbf E\right)$.
If $R  \in \mathsf{Down}\left(\mathbf E\right)$, then 
$\beta\mathbin{;}R\mathbin{;}\beta \in \mathsf{Up}(\mathbf E)$.

For $\alpha : X \to X$ an order automorphism,
we let $0= \alpha \mathbin{;} {\leqslant^{c\smallsmile}}$. With this we can define ${\sim}$ and $-$ 
on $\mathsf{Up}(\mathbf{E})$. 
By \cite[Lemma 3.10]{RDqRA25} we can in fact define the linear negations \emph{without} using the residuals. We get 
${\sim} R = R^{c\smallsmile}\mathbin{;} \alpha$ and $-R = \alpha \mathbin{;} R^{c\smallsmile}$ for all $R \in \mathsf{Up}\left(\mathbf{E}\right)$. For calculations, it is much easier to use these definitions of ${\sim}R$ and $-R$ than those involving the residuals. 

Lastly, $\neg R$ is defined using the dual order automorphism $\beta$, as stated below.\\ 

\begin{theorem}[{\cite[Theorems 3.12 and 3.15]{RDqRA25}}]\label{thm:Dq(E)}
Let $\mathbf{X}=\left(X,\leqslant\right)$ be a poset and $E$ an equivalence relation on $X$ such that ${\leqslant} \subseteq E$.  Let $\alpha: X \to X$ be an order automorphism of $\mathbf X$
with $\alpha \subseteq E$. 
For $R \in \mathsf{Up}(\mathbf E)$, define
${\sim} R = R^{c\smallsmile}\mathbin{;} \alpha$ and $-R = \alpha \mathbin{;} R^{c\smallsmile}$. 
\begin{enumerate}[label=(\roman*)]
\item The algebra $\left( \mathsf{Up}\left(\mathbf{E}\right),\cap, \cup, \mathbin{;}, \leqslant, {\sim}, {-} \right)  $ is a DInFL-algebra.
\item The algebra in (i) is cyclic if and only if $\alpha$ is the identity map.
\item For $\beta : X \to X$ a self-inverse dual order automorphism of $\mathbf{X}$ with $\beta \subseteq E$ and $\beta = \alpha \mathbin{;} \beta \mathbin{;} \alpha$, define  $\neg R=\alpha \mathbin{;} \beta \mathbin{;} R^c \mathbin{;} \beta$ for $R \in \mathsf{Up}(\mathbf{E})$. Then 
the algebra $\mathbf{Dq}(\mathbf E) = \left( \mathsf{Up}\left(\mathbf{E}\right),\cap, \cup, \mathbin{;}, \leqslant, {\sim}, {-}, {\neg}  \right)  $ 
is a distributive quasi relation algebra. 
\end{enumerate} 
\end{theorem}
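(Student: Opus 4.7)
The plan is to verify the three parts of the theorem in order, leveraging the algebraic structure already set up in the excerpt.

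For part~(i), the lattice $(\mathsf{Up}(\mathbf{E}),\cap,\cup)$ is automatically distributive, and the remark preceding Theorem~\ref{thm:Dq(E)} already gives that $(\mathsf{Up}(\mathbf E),\cap,\cup,\mathbin{;},\backslash_{\mathsf{Up}(\mathbf E)},/_{\mathsf{Up}(\mathbf E)},\leqslant)$ is a distributive residuated lattice with $\leqslant$ as monoid identity and $R\backslash_{\mathsf{Up}(\mathbf E)} S=(R^{\smallsmile}\mathbin{;}S^c)^c$, $S/_{\mathsf{Up}(\mathbf E)}R=(S^c\mathbin{;}R^{\smallsmile})^c$. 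I would first check that ${\sim}R$ and $-R$ lie in $\mathsf{Up}(\mathbf{E})$, using that $R\in\mathsf{Up}(\mathbf{E})$ implies $R^c\in\mathsf{Down}(\mathbf{E})$ and hence $R^{c\smallsmile}\in\mathsf{Up}(\mathbf{E})$, combined with the text's observation that left or right composition with $\alpha$ preserves $\mathsf{Up}(\mathbf{E})$. The two genuinely substantive obligations are then involutivity and the residuation law $(\mathsf{Res})$. For involutivity I would compute $-{\sim}R=\alpha\mathbin{;}(R^{c\smallsmile}\mathbin{;}\alpha)^{c\smallsmile}$, invoke Lemma~\ref{lem:important_eq_injective_map} (which applies since $\alpha$ is a bijection on $X$ satisfying $\alpha\mathbin{;}\alpha^{\smallsmile}=\mathrm{id}_X=\alpha^{\smallsmile}\mathbin{;}\alpha$) to commute complementation past $\alpha$, apply $(R^{c\smallsmile})^{c\smallsmile}=R$, and cancel $\alpha\mathbin{;}\alpha^{\smallsmile}=\mathrm{id}_X$; the identity ${\sim}{-}R=R$ is symmetric. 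For $(\mathsf{Res})$ I would show that ${\sim}R$ and $-R$ as defined coincide with the residuals $R\backslash_{\mathsf{Up}(\mathbf E)}0$ and $0/_{\mathsf{Up}(\mathbf E)}R$, where $0=\alpha\mathbin{;}\leqslant^{c\smallsmile}$, again using Lemma~\ref{lem:important_eq_injective_map} to simplify $0^c=\alpha\mathbin{;}\geqslant$ and absorbing a trailing $\geqslant$ against the fact that $R^{\smallsmile}\mathbin{;}\alpha=R^{\smallsmile}\mathbin{;}\alpha\mathbin{;}\geqslant$ since $R^{\smallsmile}$ is a downset. Residuation in the DInFL-form then follows from the already-known residuation for $\backslash_{\mathsf{Up}(\mathbf E)},/_{\mathsf{Up}(\mathbf E)}$.

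For part~(ii), the direction $\alpha=\mathrm{id}_X\Rightarrow$ cyclicity is immediate from the formulas. For the converse, cyclicity gives $R^{c\smallsmile}\mathbin{;}\alpha=\alpha\mathbin{;}R^{c\smallsmile}$ for every $R\in\mathsf{Up}(\mathbf{E})$, equivalently $S\mathbin{;}\alpha=\alpha\mathbin{;}S$ for every $S\in\mathsf{Up}(\mathbf{E})$. Specialising to the principal upset ${\uparrow}(x,x)=\{(u,v)\in E:u\leqslant x\leqslant v\}$ for an arbitrary $x\in X$ and unfolding the two sides pointwise, the pair $(x,\alpha(x))$ on the left forces $\alpha(x)\leqslant x$, and a symmetric argument applied to $\alpha^{-1}$ yields $x\leqslant\alpha(x)$; together these give $\alpha(x)=x$ for every $x\in X$.

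For part~(iii), the new obligations are upset-preservation of $\neg$, involutivity of $\neg$, the De~Morgan identity $(\mathsf{Dm})$, and the key identity $(\mathsf{Dp})$. Upset-preservation and $(\mathsf{Dm})$ are routine: $\beta$ is a self-inverse dual order automorphism, and complementation swaps $\cap$ with $\cup$ while composition with fixed relations distributes over unions. Involutivity of $\neg$ follows by expanding $\neg\neg R=\alpha\mathbin{;}\beta\mathbin{;}(\alpha\mathbin{;}\beta\mathbin{;}R^c\mathbin{;}\beta)^c\mathbin{;}\beta$, applying Lemma~\ref{lem:important_eq_injective_map} separately to $\alpha$ and to $\beta$ (both bijections) to pull complementation through each factor, then collapsing $\beta\mathbin{;}\beta=\mathrm{id}_X$ and $\alpha\mathbin{;}\alpha^{\smallsmile}=\mathrm{id}_X$, and invoking the compatibility $\beta=\alpha\mathbin{;}\beta\mathbin{;}\alpha$ to align any residual factors. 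The main obstacle is $(\mathsf{Dp})$: expanding $\neg(R\mathbin{;}S)=\alpha\mathbin{;}\beta\mathbin{;}(R\mathbin{;}S)^c\mathbin{;}\beta$ and $\neg R+\neg S=-({\sim}\neg S\mathbin{;}{\sim}\neg R)$ into primitive relational form and matching them up is a long but essentially mechanical calculation using the same three tools in careful orchestration. I expect this is where any subtle error would hide, and it is the step that requires the most bookkeeping.
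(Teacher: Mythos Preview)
The paper does not contain a proof of this theorem: it is quoted verbatim from \cite[Theorems 3.12 and 3.15]{RDqRA25} as background for the representability discussion in Section~\ref{sec:RDqRAs}, and no argument is supplied here. There is therefore nothing in the present paper to compare your proposal against.

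That said, your plan is sound and tracks the structure one would expect from the cited source. For part~(i) the reduction to showing ${\sim}R=R\backslash_{\mathsf{Up}(\mathbf E)}0$ and $-R=0/_{\mathsf{Up}(\mathbf E)}R$ is exactly the route the paper's surrounding text points to (it remarks that \cite[Lemma~3.10]{RDqRA25} allows one to express the linear negations without residuals), and your use of Lemma~\ref{lem:important_eq_injective_map} together with the downset absorption $S\mathbin{;}{\geqslant}=S$ is the right mechanism. For part~(ii) your specialisation to principal upsets ${\uparrow}(x,x)$ works; you have the direction of the resulting inequality slightly reversed in your sketch, but both $\alpha(x)\leqslant x$ and $x\leqslant\alpha(x)$ drop out once you evaluate membership of $(x,\alpha(x))$ and $(\alpha^{-1}(x),x)$ on each side. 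For part~(iii), your handling of $\neg\neg R=R$ needs precisely the step $\alpha\mathbin{;}\beta\mathbin{;}\alpha\mathbin{;}\beta=\mathrm{id}_X$, which does follow from $\beta=\alpha\mathbin{;}\beta\mathbin{;}\alpha$ and $\beta^2=\mathrm{id}_X$ (rewrite as $\beta\mathbin{;}\alpha=\alpha^{-1}\mathbin{;}\beta$). The $(\mathsf{Dp})$ verification is indeed the lengthy part, and your description of it as mechanical bookkeeping with Lemma~\ref{lem:important_eq_injective_map} and the compatibility condition is accurate.
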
	

An algebra of the form described by Theorem~\ref{thm:Dq(E)} is called an \emph{equivalence distributive quasi relation algebra} and the class of such algebras is  denoted $\mathsf{EDqRA}$. If $E=X^2$, then we refer to the algebra $\mathbf{Dq}(\mathbf{E})$ as a \emph{full distributive quasi relation algebra}, with the class of such algebras denoted by $\mathsf{FDqRA}$. Analogous to the case for relation algebras (cf.~\cite[Chapter 3]{Mad2006}), it was shown  that 
$\mathbb{IP}(\mathsf{FDqRA})=\mathbb{I}(\mathsf{EDqRA})$~\cite[Theorem 4.4]{RDqRA25}. This gives rise to the definition below: 
\begin{definition}[{\cite[Definition 4.5]{RDqRA25}}]\label{def:RDqRA}
A DqRA $\mathbf{A} = \left(  A, \wedge, \vee, \cdot, 1, {\sim}, {-}, {\neg}\right)$ 
is \emph{representable} if 
$\mathbf{A} \in \mathbb{ISP}\left(\mathsf{FDqRA}\right)$
or, equivalently, $\mathbf{A} \in \mathbb{IS}\left(\mathsf{EDqRA}\right)$. 
\end{definition}

We will denote the class of representable DqRAs by $\mathsf{RDqRA}$ and the class of representable DInFL-algebras by $\mathsf{RDInFL}$. 
We say that a DqRA $\mathbf{A}$ is \emph{finitely} representable if the poset $( X ,\leqslant)$ used in the representation of $\mathbf{A}$ is finite. In Table~\ref{tab:Rep-DqRA<=5} and
Table~\ref{tab:Rep-DqRA=6}, we record what is known about the representability of finite algebras of size less than or equal to 6. We make frequent use of a result on finite representability~\cite[Theorem 5.12]{RDqRA25} which says that if a DInFL-algebra or DqRA $\mathbf{A}$ has an element $a$ with $0<a<1$ and $a^2\leqslant 0$, then $\mathbf{A}$ is not finitely representable. Notice that the result in~\cite{RDqRA25} is stated for DqRAs but only relies on properties of $\alpha$  and the fact that $0=\alpha\mathbin{;} (\leqslant^c)^\smallsmile$ and hence is also applicable to DInFL-algebras.

Another important representability result that we make use of in our tables involves so-called generalised ordinal sums of DqRAs~\cite{CMR2025}. Roughly stated, the construction $\mathbf{K}[\mathbf{L}]$ described there inserts the algebra $\mathbf{L}$ in place of the element $1_\mathbf{K} \in K$. An important consequence~\cite[Theorem 4.1]{CMR2025} of the construction states that if a DqRA is representable, then the same DqRA with a new top and bottom element will also be representable.

\section{Applications to  finite models and representability}\label{sec:fin-models}

In this section we combine our work on frames (Section~\ref{sec:frames}) with the concept of representability recalled in the previous section.

\subsection{Counting DInFL-frames and DqRA-frames}

The DInFL-frames and DqRA-frames defined in Section~\ref{sec:frames} can be used to determine the number of algebras with eight elements or less for both DInFL-algebras and DqRAs. In addition, using the frames allows us to classify the algebras by the structure of their underlying lattices.  

Amongst the DInFL-algebras, there is only one seven-element algebra that is non-cyclic, and only one eight-element algebra that is non-cyclic. Their dual frames are $\bowtie$ and $\mathbf{X}$, respectively. For DqRAs, there are two seven-element non-cyclic algebras (both with $\bowtie$ as their dual frame), and two eight-element non-cyclic algebras (in both cases $\mathbf{X}$ is the dual frame).

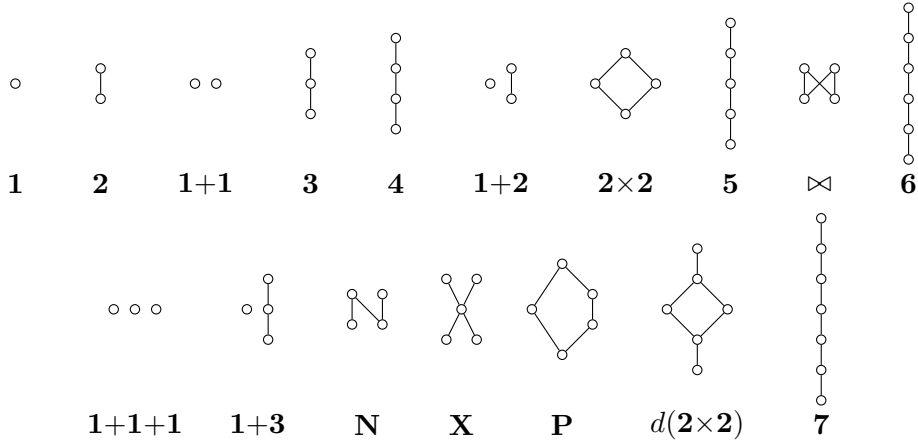
\begin{figure}[ht]
\tikzstyle{every picture} = [scale=.4]
\tikzstyle{every node} = [draw, fill=white, circle, inner sep=0pt, minimum size=3.5pt]
\begin{center}
\begin{tikzpicture}[baseline=0pt]
\node at (0,-.8)[n]{$\m 1$};
\node(0) at (0,2.5){};
\end{tikzpicture}
\quad
\begin{tikzpicture}[baseline=0pt]
\node at (0,-.8)[n]{$\m 2$};
\node(1) at (0,3){};
\node(0) at (0,2){} edge (1);
\end{tikzpicture}
\quad
\begin{tikzpicture}[baseline=0pt]
\node at (0.35,-.8)[n]{$\m{1{+}1}$};
\node(2) at (0,2.5){};
\node(0) at (0.7,2.5){};
\end{tikzpicture}
\quad
\begin{tikzpicture}[baseline=0pt]
\node at (0,-.8)[n]{$\m 3$};
\node(2) at (0,3.5){};
\node(1) at (0,2.5){} edge (2);
\node(0) at (0,1.5){} edge (1);
\end{tikzpicture}
\quad
\begin{tikzpicture}[baseline=0pt]
\node at (0,-.8)[n]{$\m 4$};
\node(3) at (0,4){};
\node(2) at (0,3){} edge (3);
\node(1) at (0,2){} edge (2);
\node(0) at (0,1){} edge (1);
\end{tikzpicture}
\quad
\begin{tikzpicture}[baseline=0pt]
\node at (0.35,-.8)[n]{$\m{1{+}2}$};
\node(2) at (0,2.5){};
\node(1) at (0.7,3){};
\node(0) at (0.7,2){} edge (1);
\end{tikzpicture}
\quad
\begin{tikzpicture}[baseline=0pt]
\node at (0,-.8)[n]{$\m{2{\times}2}$};
\node(3) at (0,3.5){};
\node(2) at (1,2.5){} edge (3);
\node(1) at (-1,2.5){} edge (3);
\node(0) at (0,1.5){} edge (1) edge (2);
\end{tikzpicture}
\quad
\begin{tikzpicture}[baseline=0pt]
\node at (0,-.8)[n]{$\m 5$};
\node(4) at (0,4.5){};
\node(3) at (0,3.5){} edge (4);
\node(2) at (0,2.5){} edge (3);
\node(1) at (0,1.5){} edge (2);
\node(0) at (0,0.5){} edge (1);
\end{tikzpicture}
\quad
\begin{tikzpicture}[baseline=0pt]
\node at (0.5,-.8)[n]{$\bowtie$};
\node(3) at (1,3){};
\node(2) at (0,3){};
\node(1) at (1,2){} edge (2) edge (3);
\node(0) at (0,2){} edge (2) edge (3);
\end{tikzpicture}
\quad
\begin{tikzpicture}[baseline=0pt]
\node at (0,-.8)[n]{$\m 6$};
\node(5) at (0,5){};
\node(4) at (0,4){} edge (5);
\node(3) at (0,3){} edge (4);
\node(2) at (0,2){} edge (3);
\node(1) at (0,1){} edge (2);
\node(0) at (0,0){} edge (1);
\end{tikzpicture}

\begin{tikzpicture}[baseline=0pt]
\node at (0.7,-.8)[n]{$\m{1{+}1{+}1}$};
\node(3) at (0,3){};
\node(2) at (0.7,3){};
\node(0) at (1.4,3){};
\end{tikzpicture}
\
\begin{tikzpicture}[baseline=0pt]
\node at (0.35,-.8)[n]{$\m{1{+}3}$};
\node(3) at (0,3){};
\node(2) at (0.7,4){};
\node(1) at (0.7,3){} edge (2);
\node(0) at (0.7,2){} edge (1);
\end{tikzpicture}
\quad
\begin{tikzpicture}[baseline=0pt]
\node at (0.5,-.8)[n]{$\m N$};
\node(3) at (1,3.5){};
\node(2) at (0,3.5){};
\node(1) at (1,2.5){} edge (2) edge (3);
\node(0) at (0,2.5){} edge (2);
\end{tikzpicture}
\quad
\begin{tikzpicture}[baseline=0pt]
\node at (0.5,-.8)[n]{$\m X$};
\node(4) at (1,4){};
\node(3) at (0,4){};
\node(2) at (0.5,3){} edge (3) edge (4);
\node(1) at (1,2){} edge (2);
\node(0) at (0,2){} edge (2);
\end{tikzpicture}
\quad
\begin{tikzpicture}[baseline=0pt]
\node at (0,-.8)[n]{$\m P$};
\node(4) at (0,4.5){};
\node(3) at (1,3.5){} edge (4);
\node(2) at (-1,3){} edge (4);
\node(1) at (1,2.5){} edge (3);
\node(0) at (0,1.5){} edge (1) edge (2);
\end{tikzpicture}
\quad
\begin{tikzpicture}[baseline=0pt]
\node at (0,-.8)[n]{$d(\m{2{\times}2})$};
\node(5) at (0,5){};
\node(4) at (0,4){} edge (5);
\node(3) at (1,3){} edge (4);
\node(2) at (-1,3){} edge (4);
\node(1) at (0,2){} edge (2) edge (3);
\node(0) at (0,1){} edge (1);
\end{tikzpicture}
\quad
\begin{tikzpicture}[baseline=0pt]
\node at (0,-.8)[n]{$\m 7$};
\node(6) at (0,6){};
\node(5) at (0,5){} edge (6);
\node(4) at (0,4){} edge (5);
\node(3) at (0,3){} edge (4);
\node(2) at (0,2){} edge (3);
\node(1) at (0,1){} edge (2);
\node(0) at (0,0){} edge (1);
\end{tikzpicture}
\vspace{-10pt}

\caption{Posets of join-irreducibles for self-dual distributive lattices of cardinality $\le 8$}
\end{center}
\end{figure}
\begin{table}[h]
    \centering
    \tabcolsep3pt
\begin{tabular}{|c|c|c|cc|c|ccc|cc|ccccccc|}\hline
Poset&$\m 1$&$\m 2$&$\m{1{+}1}$&$\m 3$&$\m 4$&$\m{1{+}2}$&$\m{2{\times}2}$&$\m 5$&$\bowtie$&$\m 6$&$\m{1{+}1{+}1}$&$\m{1{+}3}$&$\m N$&$\m X$&$\m P$&$d(\m{2{\times}2})$&$\m 7$\\ \hline
DInFL-frames& 1& 2&5 &4 &8 &10 &16 &17 &11 &38 &25 &25 &22 & 21& 28&70 &91 \\
DqRA-frames& 1& 2& 6& 4& 8& 10& 23& 17& 12& 36& 31& 25& 22& 23& 26& 106& 81\\
\hline
\end{tabular}

\ 

\ 
\tabcolsep9pt
\begin{tabular}{|c|c|c|c|c|c|c|c|c|}\hline
        Size of algebra &1&2&3&4&5&6&7&8  \\ \hline 
        Number of DInFL-algebras &1& 1& 2&9 &8 & 43& 49&282 \\
        Number of DqRAs &1& 1& 2& 10& 8& 50& 48& 314 \\\hline
\end{tabular}

\

\caption{Number of nonisomorphic quasi relation algebras of cardinality up to 8}
\label{tab:nqRA}
\end{table}

The Python code used to generate these models can be found in the Jupyter notebook at \url{https://github.com/jipsen/Distributive-quasi-relation-algebras-and-DInFL}. There we have also posted an indexed list of all DqRAs and DInFL-algebras up to size 8. We have included algebras up to size 6 later in this section.

\subsection{Representations for some small quasi relation algebras}\label{sec:reps-term-subreducts}

In this subsection we use DqRA-frames to describe the DqRAs that are ${\{\vee,\cdot,1,\sim\}}$-subreducts of small relation algebras. We refer to these algebras as DqRA-subreducts.
Moreover, we only present information on DqRAs that are not relation algebras.

\begin{table} 
\scriptsize\tabcolsep8pt\arraycolsep3pt
\begin{tabular}{lllll}
$\begin{array}{|c|ccc|}\hline
\cdot_{1}&\ \;a\;\ &\ \;r\;\ &\ \;s\;\ \\\hline
\;a\;&1	&r	&s	\\
r&r&r	&\top	\\
s&s&\top	&s	\\\hline
\end{array}$
&
$\begin{array}{|c|ccc|}\hline
\cdot_{2}&\ \;a\;\ &\ \;r\;\ &\ \;s\;\ \\\hline
\;a\;&1a	&r	&s	\\
r &r	&r	&\top	\\
s &s	&\top	&s	\\\hline
\end{array}$
&
$\begin{array}{|c|ccc|}\hline
\cdot_{3}&\ \;a\;\ &\ \;r\;\ &\ \;s\;\ \\\hline
\;a\;&1	&r	&s	\\
r &r	&s	&1a	\\
s &s	&1a	&r	\\\hline
\end{array}$
&
$\begin{array}{|c|ccc|}\hline
\cdot_{4}&\ \;a\;\ &\ \;r\;\ &\ \;s\;\ \\\hline
\;a\;&1a	&r	&s	\\
r &r	&s	&1a	\\
s &s	&1a	&r	\\\hline
\end{array}$
&
$\begin{array}{|c|ccc|}\hline
\cdot_{5}&\ \;a\;\ &\ \;r\;\ &\ \;s\;\ \\\hline
\;a\;&1	&r	&s	\\
r &r	&rs	&\top	\\
s &s	&\top	&rs	\\\hline
\end{array}$

\\[20pt]

$\begin{array}{|c|ccc|}\hline
\cdot_{6}&\ \;a\;\ &\ \;r\;\ &\ \;s\;\ \\\hline
\;a\;&1a	&r	&s	\\
r &r	&rs	&\top	\\
s &s	&\top	&rs	\\\hline
\end{array}$
&
$\begin{array}{|c|ccc|}\hline
\cdot_{7}&\ \;a\;\ &\ \;r\;\ &\ \;s\;\ \\\hline
\;a\;&1rs	&a	&a	\\
r &a	&r	&1rs	\\
s &a	&1rs	&s	\\\hline
\end{array}$
&
$\begin{array}{|c|ccc|}\hline
\cdot_{8}&\ \;a\;\ &\ \;r\;\ &\ \;s\;\ \\\hline
\;a\;&\top	&a	&a	\\
r &a	&r	&1rs	\\
s &a	&1rs	&s	\\\hline
\end{array}$
&
$\begin{array}{|c|ccc|}\hline
\cdot_{9}&\ \;a\;\ &\ \;r\;\ &\ \;s\;\ \\\hline
\;a\;&1rs	&a	&a	\\
r &a	&s	&1	\\
s &a	&1	&r	\\\hline
\end{array}$
&
$\begin{array}{|c|ccc|}\hline
\cdot_{10}&\ \;a\;\ &\ \;r\;\ &\ \;s\;\ \\\hline
a &\top	&a	&a	\\
r &a	&s	&1	\\
s &a	&1	&r	\\\hline
\end{array}$

\\[20pt]

$\begin{array}{|c|ccc|}\hline
\cdot_{11}&\ \;a\;\ &\ \;r\;\ &\ \;s\;\ \\\hline
a &1rs	&a	&a	\\
r &a	&rs	&1rs	\\
s &a	&1rs	&rs	\\\hline
\end{array}$
&
$\begin{array}{|c|ccc|}\hline
\cdot_{12}&\ \;a\;\ &\ \;r\;\ &\ \;s\;\ \\\hline
a &\top	&a	&a	\\
r &a	&rs	&1rs	\\
s &a	&1rs	&rs	\\\hline
\end{array}$
&
$\begin{array}{|c|ccc|}\hline
\cdot_{13}&\ \;a\;\ &\ \;r\;\ &\ \;s\;\ \\\hline
a &\top	&ar	&a	\\
r &a	&r	&\top	\\
s &as	&1rs	&s	\\\hline
\end{array}$
&
$\begin{array}{|c|ccc|}\hline
\cdot_{14}&\ \;a\;\ &\ \;r\;\ &\ \;s\;\ \\\hline
a &1rs	&ar	&as	\\
r &ar	&r	&\top	\\
s &as	&\top	&s	\\\hline
\end{array}$
&
$\begin{array}{|c|ccc|}\hline
\cdot_{15}&\ \;a\;\ &\ \;r\;\ &\ \;s\;\ \\\hline
a &\top	&ar	&as	\\
r &ar	&r	&\top	\\
s &as	&\top	&s	\\\hline
\end{array}$

\\[20pt]

$\begin{array}{|c|ccc|}\hline
\cdot_{16}&\ \;a\;\ &\ \;r\;\ &\ \;s\;\ \\\hline
a &1rs	&ar	&as	\\
r &ar	&rs	&\top	\\
s &as	&\top	&rs	\\\hline
\end{array}$
&
$\begin{array}{|c|ccc|}\hline
\cdot_{17}&\ \;a\;\ &\ \;r\;\ &\ \;s\;\ \\\hline
a &\top	&ar	&as	\\
r &ar	&rs	&\top	\\
s &as	&\top	&rs	\\\hline
\end{array}$
&
$\begin{array}{|c|ccc|}\hline
\cdot_{18}&\ \;a\;\ &\ \;r\;\ &\ \;s\;\ \\\hline
a &1	&s	&r	\\
r &s	&a	&1	\\
s &r	&1	&a	\\\hline
\end{array}$
&
$\begin{array}{|c|ccc|}\hline
\cdot_{19}&\ \;a\;\ &\ \;r\;\ &\ \;s\;\ \\\hline
a &1	&s	&r	\\
r &s	&ars	&1rs	\\
s &r	&1rs	&ars	\\\hline
\end{array}$
&
$\begin{array}{|c|ccc|}\hline
\cdot_{20}&\ \;a\;\ &\ \;r\;\ &\ \;s\;\ \\\hline
a &1a	&rs	&rs	\\
r &rs	&a	&1a	\\
s &rs	&1a	&a	\\\hline
\end{array}$

\\[20pt]

$\begin{array}{|c|ccc|}\hline
\cdot_{21}&\ \;a\;\ &\ \;r\;\ &\ \;s\;\ \\\hline
a &1a	&rs	&rs	\\
r &rs	&ar	&\top	\\
s &rs	&\top	&as	\\\hline
\end{array}$
&
$\begin{array}{|c|ccc|}\hline
\cdot_{22}&\ \;a\;\ &\ \;r\;\ &\ \;s\;\ \\\hline
a &1a	&rs	&rs	\\
r &rs	&ars	&\top	\\
s &rs	&\top	&ars	\\\hline
\end{array}$
&
$\begin{array}{|c|ccc|}\hline
\cdot_{23}&\ \;a\;\ &\ \;r\;\ &\ \;s\;\ \\\hline
a &1rs	&as	&ar	\\
r &as	&ar	&1rs	\\
s &ar	&1rs	&as	\\\hline
\end{array}$
&
$\begin{array}{|c|ccc|}\hline
\cdot_{24}&\ \;a\;\ &\ \;r\;\ &\ \;s\;\ \\\hline
a &\top	&as	&ar	\\
r &as	&ar	&1rs	\\
s &ar	&1rs	&as	\\\hline
\end{array}$
&
$\begin{array}{|c|ccc|}\hline
\cdot_{25}&\ \;a\;\ &\ \;r\;\ &\ \;s\;\ \\\hline
a &1rs	&as	&ar	\\
r &as	&ars	&1rs	\\
s &ar	&1rs	&ars	\\\hline
\end{array}$

\\[20pt]

$\begin{array}{|c|ccc|}\hline
\cdot_{26}&\ \;a\;\ &\ \;r\;\ &\ \;s\;\ \\\hline
a &\top	&as	&ar	\\
r &as	&ars	&1rs	\\
s &ar	&1rs	&ars	\\\hline
\end{array}$
&
$\begin{array}{|c|ccc|}\hline
\cdot_{27}&\ \;a\;\ &\ \;r\;\ &\ \;s\;\ \\\hline
a &1rs	&ars	&ar	\\
r &as	&ar	&\top	\\
s &ars	&1rs	&as	\\\hline
\end{array}$
&
$\begin{array}{|c|ccc|}\hline
\cdot_{28}&\ \;a\;\ &\ \;r\;\ &\ \;s\;\ \\\hline
a &\top	&ars	&ar	\\
r &as	&ar	&\top	\\
s &ars	&1rs	&as	\\\hline
\end{array}$
&
$\begin{array}{|c|ccc|}\hline
\cdot_{29}&\ \;a\;\ &\ \;r\;\ &\ \;s\;\ \\\hline
a &1rs	&ars	&ar	\\
r &as	&ars	&\top	\\
s &ars	&1rs	&ars	\\\hline
\end{array}$
&
$\begin{array}{|c|ccc|}\hline
\cdot_{30}&\ \;a\;\ &\ \;r\;\ &\ \;s\;\ \\\hline
a &\top	&ars	&ar	\\
r &as	&ars	&\top	\\
s &ars	&1rs	&ars	\\\hline
\end{array}$

\\[20pt]

$\begin{array}{|c|ccc|}\hline
\cdot_{31}&\ \;a\;\ &\ \;r\;\ &\ \;s\;\ \\\hline
a &\top	&ars	&ars	\\
r &ars	&a	&1a	\\
s &ars	&1a	&a	\\\hline
\end{array}$
&
$\begin{array}{|c|ccc|}\hline
\cdot_{32}&\ \;a\;\ &\ \;r\;\ &\ \;s\;\ \\\hline
a &1rs	&ars	&ars	\\
r &ars	&ar	&\top	\\
s &ars	&\top	&as	\\\hline
\end{array}$
&
$\begin{array}{|c|ccc|}\hline
\cdot_{33}&\ \;a\;\ &\ \;r\;\ &\ \;s\;\ \\\hline
a &\top	&ars	&ars	\\
r &ars	&ar	&\top	\\
s &ars	&\top	&as	\\\hline
\end{array}$
&
$\begin{array}{|c|ccc|}\hline
\cdot_{34}&\ \;a\;\ &\ \;r\;\ &\ \;s\;\ \\\hline
a &1rs	&ars	&ars	\\
r &ars	&as	&1a	\\
s &ars	&1a	&ar	\\\hline
\end{array}$
&
$\begin{array}{|c|ccc|}\hline
\cdot_{35}&\ \;a\;\ &\ \;r\;\ &\ \;s\;\ \\\hline
a &\top	&ars	&ars	\\
r &ars	&as	&1a	\\
s &ars	&1a	&ar	\\\hline
\end{array}$

\\[20pt]

$\begin{array}{|c|ccc|}\hline
\cdot_{36}&\ \;a\;\ &\ \;r\;\ &\ \;s\;\ \\\hline
a &1rs	&ars	&ars	\\
r &ars	&ars	&\top	\\
s &ars	&\top	&ars	\\\hline
\end{array}$
&
$\begin{array}{|c|ccc|}\hline
\cdot_{37}&\ \;a\;\ &\ \;r\;\ &\ \;s\;\ \\\hline
a &\top	&ars	&ars	\\
r &ars	&ars	&\top	\\
s &ars	&\top	&ars	\\\hline
\end{array}$&&&
\end{tabular}

\vskip10pt

\caption{Atom structures (= frames) for the 37 nonsymmetric RAs of cardinality 16. The identity atom 1 is not shown, a string of elements denotes the join of them, and ${\sim}a=1rs$, ${\sim}r=1ar$, ${\sim}s=1as$.}
\label{tab:atom-structures}

\vspace{-0.6cm}
\end{table}

We consider small relation algebras with up to 16 elements, with the aim to compute all $\{\vee,\cdot,{\sim}\}$-subreducts of these algebras that are proper quasi relation algebras. Since symmetric relation algebras satisfy $x=x^\smallsmile$, it follows that ${\sim}x=\neg x$, hence the subreducts of such algebras only produce relation algebras. This means we only need to consider nonsymmetric relation algebras. In Roger Maddux's book \cite{Mad2006}, there are lists of all finite integral relation algebras with up to 5 atoms. Recall that a relation algebra is \emph{integral} if the identity element is an atom. For nonsymmetric ones with up to 4 atoms, these relation algebras are denoted $1_3, 2_3, 3_3,1_{37}-37_{37}$.

We begin by recalling some information on the smallest nonrepresentable relation algebras. Lyndon \cite{Lyn56} showed that all relation algebras with 8 elements or less are representable, and
McKenzie~\cite{McK1966} found a 16-element relation algebra (now referred to as $14_{37}$ in Roger Maddux's list \cite{Mad2006}) that is nonrepresentable.
There are 10 further algebras in the list of 37 that are nonrepresentable: $16_{37}, 21_{37}, 24_{37}-29_{37}, 32_{37}, 34_{37}$. The representations of the remaining 26 relation algebras were found by Steven Comer and Roger Maddux (see \cite{Mad2006}).

We now describe the maximal subreducts of these 37 relation algebras that are proper qRAs.
When they occur as a subreduct of a representable relation algebra, they are themselves representable (indicated by bold names below). The representation will use the same set $X$ and equivalence relation $E$ as the relation algebra. The additional structure required by Theorem~\ref{thm:Dq(E)} will be ${\leqslant} = \alpha = \beta = \mathrm{id}_X$. 

The frames for the first type of subreducts are based on the poset $1{+}1{+}2$, and there are 
20 (nonisomorphic) frames of this kind. The corresponding DqRAs have 12 elements forming the lattice $\m2{\times}\m2{\times}\m3$ and occur as subreducts of the relation algebras 
$\m1_{37}$, $\m2_{37}$, $\m5_{37}$, $\m6_{37}$, $\m7_{37}$, $\m8_{37}$, $\m{11}_{37}$, $\m{12}_{37}$, $14_{37}$, $\m{15}_{37}$,
$16_{37}$, $\m{17}_{37}$, $\m{20}_{37}$, $21_{37}$, $\m{22}_{37}$, $\m{31}_{37}$, $32_{37}$, $\m{33}_{37}$, $\m{36}_{36}$, $\m{37}_{37}$.

In each case the 2-element chain in the frame is given by $s\prec r$ (or isomorphically by $r\prec s$). To see that this frame corresponds to a subreduct of the listed relation algebras, it suffices to check that $s\leqslant x\cdot y\implies r\leqslant x\cdot y$ for all $x,y\in \{a,r,r\vee s\}$, while this formula fails for the other 17 nonsymmetric integral relation algebras with 16 elements (see Table~\ref{tab:atom-structures}). Hence there are at least 16 representable DqRA with poset $1{+}1{+}2$ as their frame. Using the representation game in \cite{JS2023} it has been checked that the DqRA-subreduct of McKenzie's algebra $14_{37}$ is not representable as a weakening relation algebra. For this algebra and $16_{37}$, $21_{37}$, $32_{37}$ it has not (yet) been determined if the DqRA-subreduct is representable.

Ten of the remaining 17 relation algebras in the list have a maximal DqRA-subreduct with $1{+}3$ as poset: $\m{13}_{37}$, $\m{19}_{37}$, $ \m{23}_{37}$, $ 24_{37}$, $ 25_{37}$, $ 26_{37}$, $ 27_{37}$, $ 28_{37}$, $ 29_{37}$, $ \m{30}_{37}.$

In this case, the poset of the frame satisfies $s\prec a\prec r$ (or isomorphically $r\prec a\prec s$), and such a frame corresponds to an 8-element subreduct of a relation algebras if it satisfies $s\leqslant x\cdot y\implies a\vee r\leqslant x\cdot y$ and $a\leqslant x\cdot y\implies r\leqslant x\cdot y$ for all $x,y\in \{r,a\vee r,a\vee r\vee s\}$.

Note that the algebras $\m{13}_{37}, 27_{37}-\m{30}_{37}$ are noncommutative, but the DqRA-subreducts are commutative, hence they can be expanded to DqRAs. Four of the relation algebras in this list are representable, but the DqRA-subreducts of $\m{19}_{37}$ and $\m{30}_{37}$ are isomorphic, so this gives representations for three 8-element DqRAs. Other representable 8-element DqRAs can be found as subalgebras of the sixteen 12-element DqRAs described above.

Finally, 7 algebras do not have subreducts that produce proper quasi relation algebras: $\m{3}_{37}, \m{4}_{37}, \m{9}_{37}$, $\m{10}_{37}, \m{18}_{37}, 34_{37}, \m{35}_{37}$.

\subsection{List of DInFL-algebras and DqRAs up to cardinality six}\label{sec:list-of-algebras}

There are $1+1+2+9+8+43=64$ distributive involutive residuated lattices with $\le 6$ elements.
In the table below, each algebra is named $D^{n}_{m,i(,k)}$ where $n$ is the 
cardinality and $m$ enumerates nonisomorphic involutive lattices of size $n$, in order
of decreasing height (an \emph{involutive lattice} is a lattice with order isomorphisms $\sim,-$ that are inverses of each other). The index $i$ enumerates nonisomorphic algebras with the same involutive lattice
reduct. The value $k$ indicates the number of nonisomorphic DqRA algebras that have this DInRL as reduct (by default, $k=1$ is not shown). For $k=2$, if the algebra is commutative, $\neg x={\sim}x=-x$ is one of the DqRAs,
and the other one uses an order-reversing isomorphism for $\neg x\ne {\sim}x$. 
The linear negations $\sim,-$ are determined by the element labeled $0$ (bottom for integral algebras). The elements of each algebra are denoted by one of four different symbols:
\begin{center}
\begin{tabular}{ll}
{\Large$\bullet$} = central idempotent
&{\Large$\circ$} = central nonidempotent\\
{\scriptsize$\blacksquare$} = noncentral idempotent
&{\scriptsize$\square$} = noncentral nonidempotent
\end{tabular}
\end{center}
Algebras with more central elements (round 
circles) are listed earlier, hence commutative algebras precede 
noncommutative ones. Finally, algebras are listed in decreasing order of 
number of idempotents (black nodes).

The monoid operation is indicated by labels. If a nonobvious product $xy$ 
is not listed, then it can be deduced from the given information: either
it follows from idempotence ($x^2=x$) indicated by a black node
or from commutativity or there are products $uv=wz$ such that
$u\leqslant x\leqslant w$ and $v\leqslant y\leqslant z$ (possibly $uv=\bot\bot$ or $wz=\top\top$).
\tikzstyle{every picture} = [scale=0.45]
\tikzstyle{every node} = [draw, fill=white, circle, inner sep=0pt, minimum size=5pt]
\tikzstyle{n} = [draw=none, rectangle, inner sep=0pt,minimum size=1pt] 
\tikzstyle{i} = [draw, fill=black, circle, inner sep=0pt, minimum size=5pt] 
\tikzstyle{e} = [draw=none, rectangle, inner sep=0pt]
\tikzstyle{nci} = [draw, fill=black, rectangle, inner sep=0pt, minimum size=5pt]
\tikzstyle{nc} = [draw, fill=white, rectangle, inner sep=0pt, minimum size=5pt]
\sloppy

\begin{tikzpicture}[baseline=0pt]
\node at (0,-1)[n]{$D^{1}_{1,1}$};
\node at (0,-1.5)[n]{\phantom{.}};
\node(0) at (0,0)[i,label=right:$ $]{};
\end{tikzpicture}
\ \ 
\begin{tikzpicture}[baseline=0pt]
\node at (0,-1)[n]{$D^{2}_{1,1}$};
\node at (0,-1.5)[n]{\phantom{.}};
\node(1) at (0,1)[i,label=right:$1$]{};
\node(0) at (0,0)[i,label=right:$ $]{} edge (1);
\end{tikzpicture}
\ \ 
\begin{tikzpicture}[baseline=0pt]
\node at (0,-1)[n]{$D^{3}_{1,1}$};
\node at (0,-1.5)[n]{\phantom{.}};
\node(2) at (0,2)[i,label=right:$1$]{};
\node(1) at (0,1)[,label=right:$a$]{} edge (2);
\node(0) at (0,0)[i,label=right:$ a^2$]{} edge (1);
\end{tikzpicture}
\ \ 
\begin{tikzpicture}[baseline=0pt]
\node at (0,-1)[n]{$D^{3}_{1,2}$};
\node at (0,-1.5)[n]{\phantom{.}};
\node(2) at (0,2)[i,label=right:$ $]{};
\node(1) at (0,1)[i,label=right:$1{=}0$]{} edge (2);
\node(0) at (0,0)[i,label=right:$ $]{} edge (1);
\end{tikzpicture}
\ \ 
\begin{tikzpicture}[baseline=0pt]
\node at (0,-1)[n]{$D^{4}_{1,1}$};
\node at (0,-1.5)[n]{\phantom{.}};
\node(3) at (0,3)[i,label=right:$1$]{};
\node(2) at (0,2)[i,label=right:$a$]{} edge (3);
\node(1) at (0,1)[,label=right:$b$]{} edge (2);
\node(0) at (0,0)[i,label=right:$ ab$]{} edge (1);
\end{tikzpicture}
\ \ 
\begin{tikzpicture}[baseline=0pt]
\node at (0,-1)[n]{$D^{4}_{1,2}$};
\node at (0,-1.5)[n]{\phantom{.}};
\node(3) at (0,3)[i,label=right:$1$]{};
\node(2) at (0,2)[,label=right:$a$]{} edge (3);
\node(1) at (0,1)[,label=right:$b{=}a^2$]{} edge (2);
\node(0) at (0,0)[i,label=right:$ ab$]{} edge (1);
\end{tikzpicture}
\ \ 
\begin{tikzpicture}[baseline=0pt]
\node at (0,-1)[n]{$D^{4}_{1,3}$};
\node at (0,-1.5)[n]{\phantom{.}};
\node(3) at (0,3)[i,label=right:$ \top 0$]{};
\node(2) at (0,2)[i,label=right:$1$]{} edge (3);
\node(1) at (0,1)[i,label=right:$0$]{} edge (2);
\node(0) at (0,0)[i,label=right:$ $]{} edge (1);
\end{tikzpicture}
\ \ 
\begin{tikzpicture}[baseline=0pt]
\node at (0,-1)[n]{$D^{4}_{1,4}$};
\node at (0,-1.5)[n]{\phantom{.}};
\node(3) at (0,3)[i,label=right:$ 0^2$]{};
\node(2) at (0,2)[,label=right:$0$]{} edge (3);
\node(1) at (0,1)[i,label=right:$1$]{} edge (2);
\node(0) at (0,0)[i,label=right:$ $]{} edge (1);
\end{tikzpicture}
\ \ 
\begin{tikzpicture}[baseline=0pt]
\node at (0,-1)[n]{$D^{4}_{2,1,2}$};
\node at (0,-1.5)[n]{\phantom{.}};
\node(3) at (0,2)[i,label=right:$1$]{};
\node(2) at (1,1)[i,label=right:$b$]{} edge (3);
\node(1) at (-1,1)[i,label=left:$a$]{} edge (3);
\node(0) at (0,0)[i,label=right:$ ab$]{} edge (1) edge (2);
\end{tikzpicture}
\ \ 
\begin{tikzpicture}[baseline=0pt]
\node at (0,-1)[n]{$D^{4}_{2,2}$};
\node at (0,-1.5)[n]{\phantom{.}};
\node(3) at (0,2)[i,label=right:$ \top 0$]{};
\node(2) at (1,1)[,label=right:$0$]{} edge (3);
\node(1) at (-1,1)[i,label=left:$1{=}0^2$]{} edge (3);
\node(0) at (0,0)[i,label=right:$ $]{} edge (1) edge (2);
\end{tikzpicture}

\ \ 

\begin{tikzpicture}[baseline=0pt]
\node at (0,-1)[n]{$D^{4}_{2,3}$};
\node at (0,-1.5)[n]{\phantom{.}};
\node(3) at (0,2)[i,label=right:$ 0^2$]{};
\node(2) at (1,1)[,label=right:$0$]{} edge (3);
\node(1) at (-1,1)[i,label=left:$1$]{} edge (3);
\node(0) at (0,0)[i,label=right:$ $]{} edge (1) edge (2);
\end{tikzpicture}
\ \ 
\begin{tikzpicture}[baseline=0pt]
\node at (0,-1)[n]{$D^{4}_{3,1}$};
\node at (0,-1.5)[n]{\phantom{.}};
\node(3) at (0,2)[i,label=right:$ $]{};
\node(2) at (1,1)[,label=right:$a{=}\top a$]{} edge (3);
\node(1) at (-1,1)[i,label=left:$1{=}0$]{} edge (3);
\node(0) at (0,0)[i,label=right:$ a^2$]{} edge (1) edge (2);
\end{tikzpicture}
\ \ 
\begin{tikzpicture}[baseline=0pt]
\node at (0,-1)[n]{$D^{4}_{3,2}$};
\node at (0,-1.5)[n]{\phantom{.}};
\node(3) at (0,2)[i,label=right:$ \top a$]{};
\node(2) at (1,1)[,label=right:$a$]{} edge (3);
\node(1) at (-1,1)[i,label=left:$1{=}0{=}a^2$]{} edge (3);
\node(0) at (0,0)[i,label=right:$ $]{} edge (1) edge (2);
\end{tikzpicture}
\ \ 
\begin{tikzpicture}[baseline=0pt]
\node at (0,-1)[n]{$D^{5}_{1,1}$};
\node at (0,-1.5)[n]{\phantom{.}};
\node(4) at (0,4)[i,label=right:$1$]{};
\node(3) at (0,3)[i,label=right:$a$]{} edge (4);
\node(2) at (0,2)[,label=right:$b{=}ab$]{} edge (3);
\node(1) at (0,1)[,label=right:$c$]{} edge (2);
\node(0) at (0,0)[i,label=right:$ b^2{=}ac$]{} edge (1);
\end{tikzpicture}
\ \ 
\begin{tikzpicture}[baseline=0pt]
\node at (0,-1)[n]{$D^{5}_{1,2}$};
\node at (0,-1.5)[n]{\phantom{.}};
\node(4) at (0,4)[i,label=right:$1$]{};
\node(3) at (0,3)[,label=right:$a$]{} edge (4);
\node(2) at (0,2)[,label=right:$b$]{} edge (3);
\node(1) at (0,1)[,label=right:$c{=}ab{=}a^2$]{} edge (2);
\node(0) at (0,0)[i,label=right:$ b^2{=}ac$]{} edge (1);
\end{tikzpicture}
\ \ 
\begin{tikzpicture}[baseline=0pt]
\node at (0,-1)[n]{$D^{5}_{1,3}$};
\node at (0,-1.5)[n]{\phantom{.}};
\node(4) at (0,4)[i,label=right:$1$]{};
\node(3) at (0,3)[,label=right:$a$]{} edge (4);
\node(2) at (0,2)[,label=right:$b{=}a^2$]{} edge (3);
\node(1) at (0,1)[,label=right:$c{=}ab$]{} edge (2);
\node(0) at (0,0)[i,label=right:$ b^2{=}ac$]{} edge (1);
\end{tikzpicture}

\ \ 

\begin{tikzpicture}[baseline=0pt]
\node at (0,-1)[n]{$D^{5}_{1,4}$};
\node at (0,-1.5)[n]{\phantom{.}};
\node(4) at (0,4)[i,label=right:$ \top 0$]{};
\node(3) at (0,3)[i,label=right:$1$]{} edge (4);
\node(2) at (0,2)[,label=right:$a$]{} edge (3);
\node(1) at (0,1)[i,label=right:$0{=}a^2$]{} edge (2);
\node(0) at (0,0)[i,label=right:$ $]{} edge (1);
\end{tikzpicture}
\ \ 
\begin{tikzpicture}[baseline=0pt]
\node at (0,-1)[n]{$D^{5}_{1,5}$};
\node at (0,-1.5)[n]{\phantom{.}};
\node(4) at (0,4)[i,label=right:$ $]{};
\node(3) at (0,3)[i,label=right:$1$]{} edge (4);
\node(2) at (0,2)[,label=right:$a{=}\top 0{=}\top a$]{} edge (3);
\node(1) at (0,1)[,label=right:$0$]{} edge (2);
\node(0) at (0,0)[i,label=right:$ a^2$]{} edge (1);
\end{tikzpicture}
\ \ 
\begin{tikzpicture}[baseline=0pt]
\node at (0,-1)[n]{$D^{5}_{1,6}$};
\node at (0,-1.5)[n]{\phantom{.}};
\node(4) at (0,4)[i,label=right:$ \top b$]{};
\node(3) at (0,3)[i,label=right:$a$]{} edge (4);
\node(2) at (0,2)[i,label=right:$1{=}0$]{} edge (3);
\node(1) at (0,1)[i,label=right:$b{=}ab$]{} edge (2);
\node(0) at (0,0)[i,label=right:$ $]{} edge (1);
\end{tikzpicture}
\ \ 
\begin{tikzpicture}[baseline=0pt]
\node at (0,-1)[n]{$D^{5}_{1,7}$};
\node at (0,-1.5)[n]{\phantom{.}};
\node(4) at (0,4)[i,label=right:$ 0a$]{};
\node(3) at (0,3)[,label=right:$0$]{} edge (4);
\node(2) at (0,2)[i,label=right:$a$]{} edge (3);
\node(1) at (0,1)[i,label=right:$1$]{} edge (2);
\node(0) at (0,0)[i,label=right:$ $]{} edge (1);
\end{tikzpicture}
\ \ 
\begin{tikzpicture}[baseline=0pt]
\node at (0,-1)[n]{$D^{5}_{1,8}$};
\node at (0,-1.5)[n]{\phantom{.}};
\node(4) at (0,4)[i,label=right:$ 0a$]{};
\node(3) at (0,3)[,label=right:$0{=}a^2$]{} edge (4);
\node(2) at (0,2)[,label=right:$a$]{} edge (3);
\node(1) at (0,1)[i,label=right:$1$]{} edge (2);
\node(0) at (0,0)[i,label=right:$ $]{} edge (1);
\end{tikzpicture}
\ \ 
\begin{tikzpicture}[baseline=0pt]
\node at (0,-1)[n]{$D^{6}_{1,1}$};
\node at (0,-1.5)[n]{\phantom{.}};
\node(5) at (0,5)[i,label=right:$1$]{};
\node(4) at (0,4)[i,label=right:$a$]{} edge (5);
\node(3) at (0,3)[i,label=right:$b$]{} edge (4);
\node(2) at (0,2)[,label=right:$c{=}ac$]{} edge (3);
\node(1) at (0,1)[,label=right:$d$]{} edge (2);
\node(0) at (0,0)[i,label=right:$ bc{=}ad$]{} edge (1);
\end{tikzpicture}
\ \ 
\begin{tikzpicture}[baseline=0pt]
\node at (0,-1)[n]{$D^{6}_{1,2}$};
\node at (0,-1.5)[n]{\phantom{.}};
\node(5) at (0,5)[i,label=right:$1$]{};
\node(4) at (0,4)[i,label=right:$a$]{} edge (5);
\node(3) at (0,3)[,label=right:$b$]{} edge (4);
\node(2) at (0,2)[,label=right:$c{=}ac{=}ab$]{} edge (3);
\node(1) at (0,1)[,label=right:$d{=}b^2$]{} edge (2);
\node(0) at (0,0)[i,label=right:$ bc{=}ad$]{} edge (1);
\end{tikzpicture}
\ \ 
\begin{tikzpicture}[baseline=0pt]
\node at (0,-1)[n]{$D^{6}_{1,3}$};
\node at (0,-1.5)[n]{\phantom{.}};
\node(5) at (0,5)[i,label=right:$1$]{};
\node(4) at (0,4)[i,label=right:$a$]{} edge (5);
\node(3) at (0,3)[,label=right:$b{=}ab$]{} edge (4);
\node(2) at (0,2)[,label=right:$c{=}b^2{=}ac$]{} edge (3);
\node(1) at (0,1)[,label=right:$d$]{} edge (2);
\node(0) at (0,0)[i,label=right:$ bc{=}ad$]{} edge (1);
\end{tikzpicture}
\ \ 
\begin{tikzpicture}[baseline=0pt]
\node at (0,-1)[n]{$D^{6}_{1,4}$};
\node at (0,-1.5)[n]{\phantom{.}};
\node(5) at (0,5)[i,label=right:$1$]{};
\node(4) at (0,4)[,label=right:$a$]{} edge (5);
\node(3) at (0,3)[i,label=right:$b{=}a^2$]{} edge (4);
\node(2) at (0,2)[,label=right:$c$]{} edge (3);
\node(1) at (0,1)[,label=right:$d{=}ac$]{} edge (2);
\node(0) at (0,0)[i,label=right:$ bc{=}ad$]{} edge (1);
\end{tikzpicture}
\ \ 
\begin{tikzpicture}[baseline=0pt]
\node at (0,-1)[n]{$D^{6}_{1,5}$};
\node at (0,-1.5)[n]{\phantom{.}};
\node(5) at (0,5)[i,label=right:$1$]{};
\node(4) at (0,4)[,label=right:$a$]{} edge (5);
\node(3) at (0,3)[,label=right:$b$]{} edge (4);
\node(2) at (0,2)[,label=right:$c$]{} edge (3);
\node(1) at (0,1)[,label=right:$d{=}b^2{=}ac{=}a^2$]{} edge (2);
\node(0) at (0,0)[i,label=right:$ bc{=}ad$]{} edge (1);
\end{tikzpicture}
\ \ 
\begin{tikzpicture}[baseline=0pt]
\node at (0,-1)[n]{$D^{6}_{1,6}$};
\node at (0,-1.5)[n]{\phantom{.}};
\node(5) at (0,5)[i,label=right:$1$]{};
\node(4) at (0,4)[,label=right:$a$]{} edge (5);
\node(3) at (0,3)[,label=right:$b$]{} edge (4);
\node(2) at (0,2)[,label=right:$c{=}a^2$]{} edge (3);
\node(1) at (0,1)[,label=right:$d{=}b^2{=}ac{=}ab$]{} edge (2);
\node(0) at (0,0)[i,label=right:$ bc{=}ad$]{} edge (1);
\end{tikzpicture}
\ \ 
\begin{tikzpicture}[baseline=0pt]
\node at (0,-1)[n]{$D^{6}_{1,7}$};
\node at (0,-1.5)[n]{\phantom{.}};
\node(5) at (0,5)[i,label=right:$1$]{};
\node(4) at (0,4)[,label=right:$a$]{} edge (5);
\node(3) at (0,3)[,label=right:$b{=}a^2$]{} edge (4);
\node(2) at (0,2)[,label=right:$c{=}ab$]{} edge (3);
\node(1) at (0,1)[,label=right:$d{=}b^2{=}ac$]{} edge (2);
\node(0) at (0,0)[i,label=right:$ bc{=}ad$]{} edge (1);
\end{tikzpicture}
\ \ 
\begin{tikzpicture}[baseline=0pt]
\node at (0,-1)[n]{$D^{6}_{1,8}$};
\node at (0,-1.5)[n]{\phantom{.}};
\node(5) at (0,5)[i,label=right:$ \top 0$]{};
\node(4) at (0,4)[i,label=right:$1$]{} edge (5);
\node(3) at (0,3)[i,label=right:$a$]{} edge (4);
\node(2) at (0,2)[,label=right:$b$]{} edge (3);
\node(1) at (0,1)[i,label=right:$0{=}ab$]{} edge (2);
\node(0) at (0,0)[i,label=right:$ $]{} edge (1);
\end{tikzpicture}
\ \ 
\begin{tikzpicture}[baseline=0pt]
\node at (0,-1)[n]{$D^{6}_{1,9}$};
\node at (0,-1.5)[n]{\phantom{.}};
\node(5) at (0,5)[i,label=right:$ $]{};
\node(4) at (0,4)[i,label=right:$1$]{} edge (5);
\node(3) at (0,3)[i,label=right:$a{=}\top a$]{} edge (4);
\node(2) at (0,2)[,label=right:$b{=}\top 0{=}\top b$]{} edge (3);
\node(1) at (0,1)[,label=right:$0$]{} edge (2);
\node(0) at (0,0)[i,label=right:$ ab$]{} edge (1);
\end{tikzpicture}
\ \ 
\begin{tikzpicture}[baseline=0pt]
\node at (0,-1)[n]{$D^{6}_{1,10}$};
\node at (0,-1.5)[n]{\phantom{.}};
\node(5) at (0,5)[i,label=right:$ \top 0$]{};
\node(4) at (0,4)[i,label=right:$1$]{} edge (5);
\node(3) at (0,3)[,label=right:$a$]{} edge (4);
\node(2) at (0,2)[,label=right:$b{=}a^2$]{} edge (3);
\node(1) at (0,1)[i,label=right:$0{=}ab$]{} edge (2);
\node(0) at (0,0)[i,label=right:$ $]{} edge (1);
\end{tikzpicture}
\ \ 
\begin{tikzpicture}[baseline=0pt]
\node at (0,-1)[n]{$D^{6}_{1,11}$};
\node at (0,-1.5)[n]{\phantom{.}};
\node(5) at (0,5)[i,label=right:$ $]{};
\node(4) at (0,4)[i,label=right:$1$]{} edge (5);
\node(3) at (0,3)[,label=right:$a{=}\top a$]{} edge (4);
\node(2) at (0,2)[,label=right:$b{=}a^2{=}\top 0{=}\top b$]{} edge (3);
\node(1) at (0,1)[,label=right:$0$]{} edge (2);
\node(0) at (0,0)[i,label=right:$ ab$]{} edge (1);
\end{tikzpicture}
\ \ 
\begin{tikzpicture}[baseline=0pt]
\node at (0,-1)[n]{$D^{6}_{1,12}$};
\node at (0,-1.5)[n]{\phantom{.}};
\node(5) at (0,5)[i,label=right:$ \top b$]{};
\node(4) at (0,4)[i,label=right:$a{=}a0$]{} edge (5);
\node(3) at (0,3)[i,label=right:$1$]{} edge (4);
\node(2) at (0,2)[i,label=right:$0$]{} edge (3);
\node(1) at (0,1)[i,label=right:$b{=}ab$]{} edge (2);
\node(0) at (0,0)[i,label=right:$ $]{} edge (1);
\end{tikzpicture}
\ \ 
\begin{tikzpicture}[baseline=0pt]
\node at (0,-1)[n]{$D^{6}_{1,13}$};
\node at (0,-1.5)[n]{\phantom{.}};
\node(5) at (0,5)[i,label=right:$ \top b$]{};
\node(4) at (0,4)[i,label=right:$a{=}0^2$]{} edge (5);
\node(3) at (0,3)[,label=right:$0$]{} edge (4);
\node(2) at (0,2)[i,label=right:$1$]{} edge (3);
\node(1) at (0,1)[i,label=right:$b{=}0b{=}ab$]{} edge (2);
\node(0) at (0,0)[i,label=right:$ $]{} edge (1);
\end{tikzpicture}
\ \ 
\begin{tikzpicture}[baseline=0pt]
\node at (0,-1)[n]{$D^{6}_{1,14}$};
\node at (0,-1.5)[n]{\phantom{.}};
\node(5) at (0,5)[i,label=right:$ 0^2{=}\top b$]{};
\node(4) at (0,4)[,label=right:$a$]{} edge (5);
\node(3) at (0,3)[,label=right:$0{=}0b{=}ab$]{} edge (4);
\node(2) at (0,2)[i,label=right:$1$]{} edge (3);
\node(1) at (0,1)[i,label=right:$b$]{} edge (2);
\node(0) at (0,0)[i,label=right:$ $]{} edge (1);
\end{tikzpicture}
\ \ 
\begin{tikzpicture}[baseline=0pt]
\node at (0,-1)[n]{$D^{6}_{1,15}$};
\node at (0,-1.5)[n]{\phantom{.}};
\node(5) at (0,5)[i,label=right:$ a^2{=}0b$]{};
\node(4) at (0,4)[,label=right:$0$]{} edge (5);
\node(3) at (0,3)[,label=right:$a{=}ab$]{} edge (4);
\node(2) at (0,2)[i,label=right:$b$]{} edge (3);
\node(1) at (0,1)[i,label=right:$1$]{} edge (2);
\node(0) at (0,0)[i,label=right:$ $]{} edge (1);
\end{tikzpicture}
\ \ 
\begin{tikzpicture}[baseline=0pt]
\node at (0,-1)[n]{$D^{6}_{1,16}$};
\node at (0,-1.5)[n]{\phantom{.}};
\node(5) at (0,5)[i,label=right:$ a^2{=}0b$]{};
\node(4) at (0,4)[,label=right:$0{=}ab$]{} edge (5);
\node(3) at (0,3)[,label=right:$a{=}b^2$]{} edge (4);
\node(2) at (0,2)[,label=right:$b$]{} edge (3);
\node(1) at (0,1)[i,label=right:$1$]{} edge (2);
\node(0) at (0,0)[i,label=right:$ $]{} edge (1);
\end{tikzpicture}
\ \ 
\begin{tikzpicture}[baseline=0pt]
\node at (0,-1)[n]{$D^{6}_{1,17}$};
\node at (0,-1.5)[n]{\phantom{.}};
\node(5) at (0,5)[i,label=right:$ a^2{=}0b$]{};
\node(4) at (0,4)[,label=right:$0{=}b^2{=}ab$]{} edge (5);
\node(3) at (0,3)[,label=right:$a$]{} edge (4);
\node(2) at (0,2)[,label=right:$b$]{} edge (3);
\node(1) at (0,1)[i,label=right:$1$]{} edge (2);
\node(0) at (0,0)[i,label=right:$ $]{} edge (1);
\end{tikzpicture}
\ \ 
\begin{tikzpicture}[baseline=0pt]
\node at (0,-1)[n]{$D^{6}_{2,1,2}$};
\node at (0,-1.5)[n]{\phantom{.}};
\node(5) at (0,4)[i,label=right:$1$]{};
\node(4) at (0,3)[i,label=right:$a$]{} edge (5);
\node(3) at (1,2)[i,label=right:$c$]{} edge (4);
\node(2) at (-1,2)[i,label=left:$b$]{} edge (4);
\node(1) at (0,1)[,label=right:$d$]{} edge (2) edge (3);
\node(0) at (0,0)[i,label=right:$ bc{=}ad$]{} edge (1);
\end{tikzpicture}
\ \ 
\begin{tikzpicture}[baseline=0pt]
\node at (0,-1)[n]{$D^{6}_{2,2}$};
\node at (0,-1.5)[n]{\phantom{.}};
\node(5) at (0,4)[i,label=right:$1$]{};
\node(4) at (0,3)[,label=right:$a$]{} edge (5);
\node(3) at (1,2)[i,label=right:$c{=}a^2$]{} edge (4);
\node(2) at (-1,2)[,label=left:$b$]{} edge (4);
\node(1) at (0,1)[,label=right:$d{=}b^2{=}ab$]{} edge (2) edge (3);
\node(0) at (0,0)[i,label=right:$ bc{=}ad$]{} edge (1);
\end{tikzpicture}
\ \ 
\begin{tikzpicture}[baseline=0pt]
\node at (0,-1)[n]{$D^{6}_{2,3,2}$};
\node at (0,-1.5)[n]{\phantom{.}};
\node(5) at (0,4)[i,label=right:$1$]{};
\node(4) at (0,3)[,label=right:$a$]{} edge (5);
\node(3) at (1,2)[,label=right:$c$]{} edge (4);
\node(2) at (-1,2)[,label=left:$b$]{} edge (4);
\node(1) at (0,1)[,label=right:$d{=}b^2{=}c^2{=}a^2$]{} edge (2) edge (3);
\node(0) at (0,0)[i,label=right:$ bc{=}ad$]{} edge (1);
\end{tikzpicture}
\ \ 
\begin{tikzpicture}[baseline=0pt]
\node at (0,-1)[n]{$D^{6}_{2,4,2}$};
\node at (0,-1.5)[n]{\phantom{.}};
\node(5) at (0,4)[i,label=right:$ \top 0$]{};
\node(4) at (0,3)[i,label=right:$1$]{} edge (5);
\node(3) at (1,2)[i,label=right:$b$]{} edge (4);
\node(2) at (-1,2)[i,label=left:$a$]{} edge (4);
\node(1) at (0,1)[i,label=right:$0{=}ab$]{} edge (2) edge (3);
\node(0) at (0,0)[i,label=right:$ $]{} edge (1);
\end{tikzpicture}
\ \ 
\begin{tikzpicture}[baseline=0pt]
\node at (0,-1)[n]{$D^{6}_{2,5}$};
\node at (0,-1.5)[n]{\phantom{.}};
\node(5) at (0,4)[nci,label=right:$ b\top {=}\top a$]{};
\node(4) at (0,3)[i,label=right:$1$]{} edge (5);
\node(3) at (1,2)[nci,label=right:$b{=}\top 0{=}\top b$]{} edge (4);
\node(2) at (-1,2)[nci,label=left:$a{=}0\top {=}a\top $]{} edge (4);
\node(1) at (0,1)[nc,label=right:$0{=}0b{=}a0{=}ab$]{} edge (2) edge (3);
\node(0) at (0,0)[i,label=right:$ ba$]{} edge (1);
\end{tikzpicture}
\ \ 
\begin{tikzpicture}[baseline=0pt]
\node at (0,-1)[n]{$D^{6}_{2,6}$};
\node at (0,-1.5)[n]{\phantom{.}};
\node(5) at (0,4)[i,label=right:$ \top b$]{};
\node(4) at (0,3)[i,label=right:$a{=}a0$]{} edge (5);
\node(3) at (1,2)[,label=right:$0$]{} edge (4);
\node(2) at (-1,2)[i,label=left:$1{=}0^2$]{} edge (4);
\node(1) at (0,1)[i,label=right:$b{=}0b{=}ab$]{} edge (2) edge (3);
\node(0) at (0,0)[i,label=right:$ $]{} edge (1);
\end{tikzpicture}
\ \ 
\begin{tikzpicture}[baseline=0pt]
\node at (0,-1)[n]{$D^{6}_{2,7}$};
\node at (0,-1.5)[n]{\phantom{.}};
\node(5) at (0,4)[i,label=right:$ \top b$]{};
\node(4) at (0,3)[i,label=right:$a{=}0^2{=}a0$]{} edge (5);
\node(3) at (1,2)[,label=right:$0$]{} edge (4);
\node(2) at (-1,2)[i,label=left:$1$]{} edge (4);
\node(1) at (0,1)[i,label=right:$b{=}0b{=}ab$]{} edge (2) edge (3);
\node(0) at (0,0)[i,label=right:$ $]{} edge (1);
\end{tikzpicture}
\ \ 
\begin{tikzpicture}[baseline=0pt]
\node at (0,-1)[n]{$D^{6}_{2,8}$};
\node at (0,-1.5)[n]{\phantom{.}};
\node(5) at (0,4)[i,label=right:$ 0^2{=}\top b$]{};
\node(4) at (0,3)[,label=right:$a$]{} edge (5);
\node(3) at (1,2)[,label=right:$0{=}0b{=}ab$]{} edge (4);
\node(2) at (-1,2)[i,label=left:$1$]{} edge (4);
\node(1) at (0,1)[i,label=right:$b$]{} edge (2) edge (3);
\node(0) at (0,0)[i,label=right:$ $]{} edge (1);
\end{tikzpicture}
\ \ 
\begin{tikzpicture}[baseline=0pt]
\node at (0,-1)[n]{$D^{6}_{2,9,2}$};
\node at (0,-1.5)[n]{\phantom{.}};
\node(5) at (0,4)[i,label=right:$ a^2{=}b^2$]{};
\node(4) at (0,3)[,label=right:$0{=}ab$]{} edge (5);
\node(3) at (1,2)[,label=right:$b$]{} edge (4);
\node(2) at (-1,2)[,label=left:$a$]{} edge (4);
\node(1) at (0,1)[i,label=right:$1$]{} edge (2) edge (3);
\node(0) at (0,0)[i,label=right:$ $]{} edge (1);
\end{tikzpicture}
\ \ 
\begin{tikzpicture}[baseline=0pt]
\node at (0,-1)[n]{$D^{6}_{3,1,2}$};
\node at (0,-1.5)[n]{\phantom{.}};
\node(5) at (0,4)[i,label=right:$1$]{};
\node(4) at (0,3)[,label=right:$a$]{} edge (5);
\node(3) at (1,2)[,label=right:$c$]{} edge (4);
\node(2) at (-1,2)[,label=left:$b$]{} edge (4);
\node(1) at (0,1)[,label=right:$d{=}bc{=}ab{=}a^2$]{} edge (2) edge (3);
\node(0) at (0,0)[i,label=right:$ b^2{=}c^2{=}ad$]{} edge (1);
\end{tikzpicture}
\ \ 
\begin{tikzpicture}[baseline=0pt]
\node at (0,-1)[n]{$D^{6}_{3,2}$};
\node at (0,-1.5)[n]{\phantom{.}};
\node(5) at (0,4)[i,label=right:$ \top c$]{};
\node(4) at (0,3)[i,label=right:$a$]{} edge (5);
\node(3) at (1,2)[,label=right:$b{=}ab$]{} edge (4);
\node(2) at (-1,2)[i,label=left:$1{=}0$]{} edge (4);
\node(1) at (0,1)[i,label=right:$c{=}bc{=}b^2{=}ac$]{} edge (2) edge (3);
\node(0) at (0,0)[i,label=right:$ $]{} edge (1);
\end{tikzpicture}
\ \ 
\begin{tikzpicture}[baseline=0pt]
\node at (0,-1)[n]{$D^{6}_{3,3}$};
\node at (0,-1.5)[n]{\phantom{.}};
\node(5) at (0,4)[i,label=right:$ \top c$]{};
\node(4) at (0,3)[i,label=right:$a{=}ab$]{} edge (5);
\node(3) at (1,2)[,label=right:$b$]{} edge (4);
\node(2) at (-1,2)[i,label=left:$1{=}0{=}b^2$]{} edge (4);
\node(1) at (0,1)[i,label=right:$c{=}bc{=}ac$]{} edge (2) edge (3);
\node(0) at (0,0)[i,label=right:$ $]{} edge (1);
\end{tikzpicture}
\ \ 
\begin{tikzpicture}[baseline=0pt]
\node at (0,-1)[n]{$D^{6}_{3,4}$};
\node at (0,-1.5)[n]{\phantom{.}};
\node(5) at (0,4)[i,label=right:$ $]{};
\node(4) at (0,3)[i,label=right:$a$]{} edge (5);
\node(3) at (1,2)[,label=right:$b{=}ab{=}\top c{=}\top b$]{} edge (4);
\node(2) at (-1,2)[i,label=left:$1{=}0$]{} edge (4);
\node(1) at (0,1)[,label=right:$c{=}ac$]{} edge (2) edge (3);
\node(0) at (0,0)[i,label=right:$ b^2$]{} edge (1);
\end{tikzpicture}
\ \ 
\begin{tikzpicture}[baseline=0pt]
\node at (0,-1)[n]{$D^{6}_{3,5,2}$};
\node at (0,-1.5)[n]{\phantom{.}};
\node(5) at (0,4)[i,label=right:$ ab{=}0a$]{};
\node(4) at (0,3)[,label=right:$0$]{} edge (5);
\node(3) at (1,2)[i,label=right:$b$]{} edge (4);
\node(2) at (-1,2)[i,label=left:$a$]{} edge (4);
\node(1) at (0,1)[i,label=right:$1$]{} edge (2) edge (3);
\node(0) at (0,0)[i,label=right:$ $]{} edge (1);
\end{tikzpicture}
\ \ 
\begin{tikzpicture}[baseline=0pt]
\node at (0,-1)[n]{$D^{6}_{3,6}$};
\node at (0,-1.5)[n]{\phantom{.}};
\node(5) at (0,4)[i,label=right:$ ab{=}0a$]{};
\node(4) at (0,3)[,label=right:$0{=}b^2$]{} edge (5);
\node(3) at (1,2)[,label=right:$b$]{} edge (4);
\node(2) at (-1,2)[i,label=left:$a$]{} edge (4);
\node(1) at (0,1)[i,label=right:$1$]{} edge (2) edge (3);
\node(0) at (0,0)[i,label=right:$ $]{} edge (1);
\end{tikzpicture}
\ \ 
\begin{tikzpicture}[baseline=0pt]
\node at (0,-1)[n]{$D^{6}_{3,7,2}$};
\node at (0,-1.5)[n]{\phantom{.}};
\node(5) at (0,4)[i,label=right:$\ ab{=}0a$]{};
\node(4) at (0,3)[,label=right:$\ 0{=}a^2{=}b^2$]{} edge (5);
\node(3) at (1,2)[,label=right:$b$]{} edge (4);
\node(2) at (-1,2)[,label=left:$a$]{} edge (4);
\node(1) at (0,1)[i,label=right:$1$]{} edge (2) edge (3);
\node(0) at (0,0)[i,label=right:$ $]{} edge (1);
\end{tikzpicture}
\ \ 
\begin{tikzpicture}[baseline=0pt]
\node at (0,-1)[n]{$D^{6}_{4,1}$};
\node at (0,-1.5)[n]{\phantom{.}};
\node(5) at (0.5,3)[i,label=right:$1$]{};
\node(4) at (1.5,2)[i,label=right:$b$]{} edge (5);
\node(3) at (-0.5,2)[,label=left:$a$]{} edge (5);
\node(2) at (0.5,1)[,label=right:$d{=}ab{=}bd$]{} edge (3) edge (4);
\node(1) at (-1.5,1)[i,label=left:$c{=}a^2$]{} edge (3);
\node(0) at (-0.5,0)[i,label=left:$ ad{=}bc$]{} edge (1) edge (2);
\end{tikzpicture}
\ \ 
\begin{tikzpicture}[baseline=0pt]
\node at (0,-1)[n]{$D^{6}_{4,2}$};
\node at (0,-1.5)[n]{\phantom{.}};
\node(5) at (0.5,3)[i,label=right:$ $]{};
\node(4) at (1.5,2)[i,label=right:$a{=}a0{=}\top a$]{} edge (5);
\node(3) at (-0.5,2)[i,label=left:$1$]{} edge (5);
\node(2) at (0.5,1)[i,label=right:$0$]{} edge (3) edge (4);
\node(1) at (-1.5,1)[i,label=left:$b{=}\top b$]{} edge (3);
\node(0) at (-0.5,0)[i,label=left:$ b0{=}ab$]{} edge (1) edge (2);
\end{tikzpicture}
\ \ 
\begin{tikzpicture}[baseline=0pt]
\node at (0,-1)[n]{$D^{6}_{4,3}$};
\node at (0,-1.5)[n]{\phantom{.}};
\node(5) at (0.5,3)[i,label=right:$ $]{};
\node(4) at (1.5,2)[i,label=right:$1$]{} edge (5);
\node(3) at (-0.5,2)[i,label=left:$a{=}0^2{=}\top a$]{} edge (5);
 \node(2) at (0.5,1)[,label=right:$b{=}\top b$]{} edge (3) edge (4);
\node(1) at (-1.5,1)[,label=left:$0$]{} edge (3);
\node(0) at (-0.5,0)[i,label=left:$ ab$]{} edge (1) edge (2);
\end{tikzpicture}
\ \ 
\begin{tikzpicture}[baseline=0pt]
\node at (0,-1)[n]{$D^{6}_{4,4}$};
\node at (0,-1.5)[n]{\phantom{.}};
\node(5) at (0.5,3)[i,label=right:$ $]{};
\node(4) at (1.5,2)[i,label=right:$1$]{} edge (5);
\node(3) at (-0.5,2)[,label=left:$a{=}\top 0{=}\top a$]{} edge (5);
\node(2) at (0.5,1)[,label=right:$b{=}0^2{=}a^2{=}\top b$]{} edge (3) edge (4);
\node(1) at (-1.5,1)[,label=left:$0$]{} edge (3);
\node(0) at (-0.5,0)[i,label=left:$ ab$]{} edge (1) edge (2);
\end{tikzpicture}
\ \ 
\begin{tikzpicture}[baseline=0pt]
\node at (0,-1)[n]{$D^{6}_{4,5}$};
\node at (0,-1.5)[n]{\phantom{.}};
\node(5) at (0.5,3)[i,label=right:$ a0{=}0b$]{};
\node(4) at (1.5,2)[,label=right:$0$]{} edge (5);
\node(3) at (-0.5,2)[i,label=left:$a$]{} edge (5);
\node(2) at (0.5,1)[i,label=right:$b{=}ab$]{} edge (3) edge (4);
\node(1) at (-1.5,1)[i,label=left:$1$]{} edge (3);
\node(0) at (-0.5,0)[i,label=left:$ $]{} edge (1) edge (2);
\end{tikzpicture}
\ \ 
\begin{tikzpicture}[baseline=0pt]
\node at (0,-1)[n]{$D^{6}_{4,6}$};
\node at (0,-1.5)[n]{\phantom{.}};
\node(5) at (0.5,3)[i,label=right:$ 0b{=}0a$]{};
\node(4) at (1.5,2)[i,label=right:$a{=}b^2$]{} edge (5);
\node(3) at (-0.5,2)[,label=left:$0$]{} edge (5);
\node(2) at (0.5,1)[i,label=right:$1$]{} edge (3) edge (4);
\node(1) at (-1.5,1)[,label=left:$b{=}ab$]{} edge (3);
\node(0) at (-0.5,0)[i,label=left:$ $]{} edge (1) edge (2);
\end{tikzpicture}
\ \ 
\begin{tikzpicture}[baseline=0pt]
\node at (0,-1)[n]{$D^{6}_{4,7}$};
\node at (0,-1.5)[n]{\phantom{.}};
\node(5) at (0.5,3)[i,label=right:$ b^2{=}0a$]{};
\node(4) at (1.5,2)[i,label=right:$a$]{} edge (5);
\node(3) at (-0.5,2)[,label=left:$0$]{} edge (5);
\node(2) at (0.5,1)[i,label=right:$1$]{} edge (3) edge (4);
\node(1) at (-1.5,1)[,label=left:$b{=}ab$]{} edge (3);
\node(0) at (-0.5,0)[i,label=left:$ $]{} edge (1) edge (2);
\end{tikzpicture}
\ \ 
\begin{tikzpicture}[baseline=0pt]
\node at (0,-1)[n]{$D^{6}_{4,8}$};
\node at (0,-1.5)[n]{\phantom{.}};
\node(5) at (0.5,3)[i,label=right:$\ a^2{=}a0{=}0b$]{};
\node(4) at (1.5,2)[,label=right:$\,0{=}b^2{=}ab$]{} edge (5);
\node(3) at (-0.5,2)[,label=left:$a$]{} edge (5);
\node(2) at (0.5,1)[,label=right:$b$]{} edge (3) edge (4);
\node(1) at (-1.5,1)[i,label=left:$1$]{} edge (3);
\node(0) at (-0.5,0)[i,label=left:$ $]{} edge (1) edge (2);
\end{tikzpicture}
\ \ 
\begin{tikzpicture}[baseline=0pt]
\node at (0,-1)[n]{$D^{6}_{4,9}$};
\node at (0,-1.5)[n]{\phantom{.}};
\node(5) at (0.5,3)[i,label=right:$\ 0b{=}0a{=}a^2$]{};
\node(4) at (1.5,2)[,label=right:$a{=}b^2$]{} edge (5);
\node(3) at (-0.5,2)[,label=left:$0{=}ab$]{} edge (5);
\node(2) at (0.5,1)[i,label=right:$1$]{} edge (3) edge (4);
\node(1) at (-1.5,1)[,label=left:$b$]{} edge (3);
\node(0) at (-0.5,0)[i,label=left:$ $]{} edge (1) edge (2);
\end{tikzpicture}
\ \ 
\begin{tikzpicture}[baseline=0pt]
\node at (0,-1)[n]{$D^{6}_{4,10}$};
\node at (0,-1.5)[n]{\phantom{.}};
\node(5) at (0.5,3)[i,label=right:$\ b^2{=}0a{=}a^2$]{};
\node(4) at (1.5,2)[,label=right:$a$]{} edge (5);
\node(3) at (-0.5,2)[,label=left:$0{=}ab$]{} edge (5);
\node(2) at (0.5,1)[i,label=right:$1$]{} edge (3) edge (4);
\node(1) at (-1.5,1)[,label=left:$b$]{} edge (3);
\node(0) at (-0.5,0)[i,label=left:$ $]{} edge (1) edge (2);
\end{tikzpicture}
\begin{center}\small
    Figure 2. DInFL-algebras and DqRAs up to cardinality six. A longer list up to cardinality\\ eight, together with the Python code used to generate this data can be found at \url{https://github.com/jipsen/Distributive-quasi-relation-algebras-and-DInFL}.
\end{center}

\subsection{Known representations of DInFL-algebras and DqRAs}\label{sec:reps-of-algebras}

In Table~\ref{tab:Rep-DqRA<=5} and~\ref{tab:Rep-DqRA=6} below, we list finite DqRAs up to size 6 and what is currently known about their representability. In each case, if there is a representation for the described DqRA, then the DInFL-reduct can be represented using the same partially ordered equivalence relation and the same $\alpha$ as for the DqRA. 

We briefly mention the smallest non-cyclic DInFL-algebra (and DqRA)  which has seven elements (see~\cite[Figure 1]{RDqRA25} or $D^7_{3,1,2}$). There are two non-cyclic DqRAs with that same DInFL-algebra reduct.  Its dual frame has the poset structure $\bowtie$. No representation is known for this algebra. Moreover, it is known from~\cite[Theorem 5.12]{RDqRA25} that if a representation exists, it will be infinite.

\begin{table}[ht!]
    \centering
\tabcolsep9pt
\begin{tabular}{|c|p{0.11\linewidth}|p{0.5\linewidth}|}\hline
DInFL-algebra &Description of  $\neg$& Representation/comment  \\ \hline 
$D^1_{1,1}$& $\neg={\sim}$& See~\cite[Example 5.1]{RDqRA25}.\\ \hline 
$D^2_{1,1}$& $\neg={\sim}$& See~\cite[Example 5.2]{RDqRA25}.  \\ \hline 
$D^3_{1,1}$& $\neg={\sim}$& No known representation. If a representation exists, it must be infinite (\cite[Example 5.14]{RDqRA25}). \\ \hline 
$D^3_{1,2}$& $\neg={\sim}$&  See~\cite[Example 5.1]{RDqRA25}. \\ \hline 
$D^4_{1,1}$ and $D^4_{1,2}$& $\neg= {\sim}$ &   No known representations. If representations exist, they must be infinite (\cite[Theorem 5.12]{RDqRA25}).\\ \hline 
$D^4_{1,3}$& $\neg = {\sim}$ &  Infinite representation~\cite{Mad2010,CR-Sugihara}, finite representation~\cite[Theorem 5.1]{CMR2025}\\ \hline 
$D^4_{1,4}$ & $\neg = {\sim}$ & See~\cite[Example 5.4]{RDqRA25}. \\ \hline 
$D^4_{2,1,2}$&$\neg={\sim}$ & See~\cite[Example 5.2]{RDqRA25}.   \\ \hline 
$D^{4,\neg a= a}_{2,1,2}$& $\neg a = a$ &
See~\cite[Example 5.6]{RDqRA25}.
\\ \hline 
$D^4_{2,2}$& $\neg = {\sim} $& See~\cite[Example 5.7]{RDqRA25}. \\ \hline 
$D^4_{2,3}$& $\neg = {\sim}$& See~\cite[Example 5.8]{RDqRA25}.  \\ \hline 
$D^4_{3,1}$& $\neg={\sim}$ &  No known representation. \\ \hline 
$D^4_{3,2}$& $\neg={\sim}$ & See~\cite[Example 5.1]{RDqRA25}. \\ \hline 
$D^5_{1,1}$ to $D^5_{1,5}$& $\neg={\sim}$ & No known representations. If  representations exist, they must be infinite (\cite[Theorem 5.12]{RDqRA25}).  \\ \hline 
$D^5_{1,6}$& $\neg={\sim}$ & Infinite representation~\cite{CR-Sugihara}, finite representation~\cite{CMR2025}  \\ \hline 
$D^5_{1,7}$& $\neg={\sim}$ &   See~\cite[Example 5.5]{RDqRA25}. \\ \hline 
$D^5_{1,8}$& $\neg={\sim}$ &  
See~\cite[Example 5.5]{RDqRA25}.\\ \hline 
\end{tabular}
\caption{Finite DqRAs of size $\leqslant 5$ and our knowledge of their representability.}
\label{tab:Rep-DqRA<=5}
\end{table}

\begin{table}[ht!]
    \centering
\tabcolsep9pt
\begin{tabular}{|c|p{0.10\linewidth}|p{0.51\linewidth}|}\hline
DInFL-algebra &Description of  $\neg$& Representation/comment  \\ \hline 
$D^6_{1,1}$ to $D^6_{1,11}$& $\neg  ={\sim}$&
No known representations. If representations exist, they must be infinite (\cite[Theorem 5.12]{RDqRA25}). 
\\ \hline 
$D^6_{1,12}$& $\neg  ={\sim}$&
Infinite representation~\cite{Mad2010, CR-Sugihara}, finite representation~\cite[Theorem 5.1]{CMR2025}
\\ \hline 
$D^6_{1,13}$& $\neg  ={\sim}$&
Apply~\cite[Theorem 4.1]{CMR2025} to 
$D^6_{1,13}\cong D^3_{1,2}[D^4_{1,4}]$.
\\ \hline 
$D^6_{1,14}$ to $D^6_{1,17}$& $\neg  ={\sim}$& No known representations. 
\\ \hline 
$D^6_{2,1}$ to $D^6_{2,3}$
& $\neg ={\sim}$ & No known representations. \\ \hline
$D^{6,\neg b=b}_{2,1,2}$, $D^{6,\neg b =b}_{2,3,2}$ 
& $\neg b =b$& No known representations. \\ \hline
$D^{6,\neg b =c}_{3,1,2}$
& $\neg b =c$& No known representation. \\ \hline
$D^6_{2,4,2}$ & $\neg={\sim}$ &
Apply~\cite[Theorem 4.1]{CMR2025} to 
$D^6_{2,4,2}\cong D^3_{1,2}[D^4_{2,1,2}]$.
\\ \hline
$D^{6,\neg a =a}_{2,4,2}$ & $\neg a =a$ &  
Apply~\cite[Theorem 4.1]{CMR2025} to 
$D^{6,\neg a =a}_{2,4,2}\cong D^3_{1,2}[D^{4,\neg a =a}_{2,1,2}]$.
\\ \hline
$D^6_{2,5}$ & $\neg a = a$ & 
See~\cite[Example 5.9]{RDqRA25}. \\ \hline 
$D^6_{2,6}$& $\neg  ={\sim}$& Apply~\cite[Theorem 4.1]{CMR2025} to $D^6_{2,6}\cong D^3_{1,2}[D^4_{2,2}]$.  \\ \hline 
$D^6_{2,7}$& $\neg  ={\sim}$&
Apply~\cite[Theorem 4.1]{CMR2025} to 
$D^6_{2,7}\cong D^3_{1,2}[D^4_{2,3}]$.
\\ \hline 
$D^6_{2,8}$, $D^6_{2,9,2}$, $D^6_{3,1,2}$  & $\neg = {\sim}$ & No known representations. \\ \hline
$D^{6,\neg a =b}_{2,9,2}$, $D^{6,\neg a =b}_{3,5,2}$, $D^{6,\neg a =b}_{3,7,2}$ & $\neg a =b$& No known representations. 
\\ \hline 
$D^6_{3,2}$& $\neg={\sim}$ & No known representation.\\ \hline 
$D^6_{3,3}$& $\neg  ={\sim}$&
Apply~\cite[Theorem 4.1]{CMR2025} to 
$D^6_{3,3}\cong D^3_{1,2}[D^4_{3,2}]$.
\\ \hline 
$D^6_{3,4}$ & $\neg  ={\sim}$&
No known representation. 
\\ \hline 
$D^6_{3,5,2}$ & $\neg  ={\sim}$&
See~\cite[Example 6]{CMR2026}. 
\\ \hline 
$D^6_{3,6}$, $D^6_{3,7}$& $\neg  ={\sim}$&
No known representations. \\
\hline
$D^6_{4,1}$ & $\neg = {\sim}$ & No known representation. \\ \hline 
$D^6_{4,2}$& $\neg ={\sim}$& See~\cite[Example 5.10]{RDqRA25}. Also, $D^6_{4,2} \cong D^2_{1,1} \times D^3_{1,2}$ \\ \hline
$D^6_{4,3}$ to $D^6_{4,7}$ & $\neg = {\sim}$ & No known representations. \\ \hline 
$D^6_{4,8}$ & $\neg = {\sim}$ & See~\cite{CR-pregroup,JS2023} \\ \hline 
$D^6_{4,9}$, $D^6_{4,10}$  & $\neg = {\sim}$ & No known representations. \\ \hline 
\end{tabular}
\caption{Finite DqRAs of size 6 and our knowledge of their representability. Algebras $D^6_{1,1}$ to $D^6_{1,17}$ are chains, the 23 algebras $D^6_{2,1}$ to $D^6_{3,7}$ have frames with poset $\mathbf{2} \times \mathbf{2}$, algebras $D^6_{4,1}$ to $D^6_{4,10}$ have lattices with order  structure $\mathbf{2}\times \mathbf{3}$.}
\label{tab:Rep-DqRA=6}
\end{table}

We note some additional consequences of~\cite[Theorem 4.1]{CMR2025}. Since $D^6_{1,8} \cong D^3_{1,2}[D^4_{1,1}]$, we have that if $D^6_{1,8}$ is not representable, then neither is $D^4_{1,1}$. Two similar instances of this arise from $D^6_{1,10} \cong D^3_{1,2} [ D^4_{1,2} ]$ and  $D^6_{3,2}\cong D^3_{1,2}[D^4_{3,1}]$. Since $\mathsf{RDqRA}$ is closed under products, if $D^6_{4,1}$ is not representable, then $D^3_{1,1}$ will not be representable, since $D^6_{4,1} \cong D^2_{1,1} \times D^3_{1,1}$.

\section{Conclusion}

We have presented  DInFL-frames and DqRA-frames that are dual to complete perfect distributive involutive FL-algebras and distributive quasi relation algebras respectively. We modified our presentation of the frames  from~\cite{CJR2024} in order to make the axioms more transparent and readable. Morphisms are described which provide a duality between the classes of complete perfect algebras and their dual frames. 
In order to obtain a duality for algebras which are not necessarily complete in Section~\ref{sec:dual-spaces} we use a modification of  Priestley duality for unbounded algebras. 

In Section~\ref{sec:fin-models},
we catalogued which finite DInFL-algebras and DqRAs are known to be representable. One of the most interesting algebras for which no representation is currently known is the three-element MV-algebra (named $D^3_{1,1}$ in our list). Another intriguing open problem is whether the DqRA term-subreduct of McKenzie's relation algebra $14_{37}$ can be represented using the construction from~\cite{RDqRA25} (outlined in  Section~\ref{sec:RDqRAs}). We expect the search for representations of DInFL-algebras and DqRAs to be an appealing area of research. 

\subsection*{Acknowledgements}
The first author would like to thank Chapman University for its hospitality during a visit in September/October 2023. 

\end{document}